\documentclass[a4paper,UKenglish,cleveref, autoref, thm-restate]{lipics-v2021}

\hideLIPIcs  

\usepackage{hyperref,color}
\usepackage{todonotes}
\usepackage{amsmath,amssymb,amsthm}

\usepackage{algorithm}
\usepackage[noend]{algpseudocode}%

\usepackage[most]{tcolorbox}

\usepackage{graphicx}

\usepackage{lineno}
\nolinenumbers

\newcommand{\localSize}{s}

\newcommand{\numM}{m}
\newcommand{\setM}{\mathcal M}

\newcommand{\corrM}[1]{M_{#1}}

\newcommand{\metric}{\mathcal{X}}
\newcommand{\ddim}{\textsf{ddim}}
\newcommand{\dist}[1]{\textsf{dist}_{#1}}
\newcommand{\diam}{\textsf{diam}}

\newcommand{\cone}{\mathcal C}

\newcommand{\uVec}{\textbf{u}}

\newcommand{\grid}[1]{G_{#1}}

\newcommand{\cell}{\square}

\newcommand{\qt}{T}    
\newcommand{\lvl}[1]{\ell_{#1}}
\newcommand{\parent}[1]{\textsf{par}(#1)}
\newcommand{\ithDepParent}[2]{\textsf{par}^{(#2)}(#1)}

\newcommand{\nodeCell}[1]{\cell_{#1}}

\newcommand{\wspd}{\mathcal W}

\newcommand{\ccheck}[1]{\textcolor{black}{#1}}
\newcommand{\prob}{\text{Pr}}
\newcommand{\eps}{\varepsilon}

\newcommand{\primNode}[1]{v_{#1, \uVec}}

\newtheorem{property}[theorem]{Property}

\title{Fully Scalable MPC Algorithms for WSPD in Doubling and Euclidean Spaces} 

\author{Eunjin Oh}{Department of Computer Science and Engineering, POSTECH, Pohang, Korea}{eunjin.oh@postech.ac.kr}{0000-0003-0798-2580}{}
\author{Hyeonjun Shin}{Department of Computer Science and Engineering, POSTECH, Pohang, Korea}{hyeonjun.shin@postech.ac.kr}{0009-0008-4701-7295}{} 

\authorrunning{E Oh, and H Shin} 

\Copyright{Eunjin Oh, and Hyeonjun Shin} 

\ccsdesc[500]{Theory of computation~Computational geometry}

\keywords{MPC model, parallel algorithms, Computational Geometry, well-separated pair decomposition, doubling metric spaces} 

\category{} 

\funding{Supported by Institute of Information \& Communications Technology Planning \& Evaluation (IITP) grant funded by the Korea government (MSIT) (No. RS-2024-00440239, Sublinear Scalable Algorithms for Large-Scale Data Analysis) and the National Research Foundation of Korea (NRF) grant funded by the Korea government (MSIT) (No. RS-2024-00358505).}

\EventEditors{John Q. Open and Joan R. Access}
\EventNoEds{2}
\EventLongTitle{42nd Conference on Very Important Topics (CVIT 2016)}
\EventShortTitle{CVIT 2016}
\EventAcronym{CVIT}
\EventYear{2016}
\EventDate{December 24--27, 2016}
\EventLocation{Little Whinging, United Kingdom}
\EventLogo{}
\SeriesVolume{42}
\ArticleNo{23}

\begin{document}

\maketitle

\begin{abstract}
In this paper, we study the problem of constructing a $(1/\varepsilon)$-\emph{well-separated pair decomposition} (WSPD) for a point set of size $n$ in the Massively Parallel Computation (MPC) model, where multiple machines work in parallel and communicate in synchronous rounds.
We present an $O(1)$-round MPC algorithm that constructs a $O(1/\varepsilon)$-WSPD of size $(1/\varepsilon)^{O(\ddim)}\cdot \tilde O(n)$ for point sets in a metric space of a constant doubling dimension $\ddim$, with high probability, using $(1/\varepsilon)^{O(\ddim)} \cdot \tilde O(n)$ total space and $O(n^\delta)$ space per machine for a constant $\delta\in (0,1)$.
In the $d$-dimensional Euclidean space, we can improve the size of the WSPD and the total space to $(1/\varepsilon)^{O(d)} n$.
This improves the best-known algorithm [FOCS'93] for computing a WSPD which requires $O(\log n)$ rounds and works only in Euclidean spaces.
As a consequence, the following problems can be solved in $O(1)$ rounds in the MPC model: computing a $(1+\varepsilon)$-spanner, a $(1-\varepsilon)$-approximation of the diameter, the closest pair, and the $k$-nearest neighbors ($k$-NN).
While our $k$-NN algorithm is specific to Euclidean space, the other three problems can be solved in both Euclidean and doubling metric spaces.
\end{abstract}

\section{Introduction} \label{sec:intro}
In this paper, we study geometric proximity problems in the \emph{Massively Parallel Computation} (MPC) model, introduced by~\cite{karloff2010model}, now a standard framework for large-scale data processing in distributed systems such as MapReduce~\cite{dean2008mapreduce}, Dryad~\cite{isard2007dryad}, Hadoop~\cite{white2012hadoop}, and Spark~\cite{zaharia2010spark}. 
Here, the input data is distributed across multiple machines, and computation proceeds in synchronous rounds of local computation and communication. 
A central algorithmic goal is to minimize the number of communication rounds and to use small local memory while keeping the total work nearly linear.
Over the past decade, significant progress has been made on fundamental MPC problems such as sorting, connectivity, and matching~\cite{andoni2018parallel, assadi2019coresets, balliu2023optimal,chang2024fully, coy2022deterministic, dory2021constant,ghaffari2018improved,goodrich2011sorting,lahn2023combinatorial,nowicki2021dynamic}. 
While some geometric problems such as clustering, Euclidean MST, and spanners have been studied~\cite{ahanchi2023massively, andoni2014parallel, azarmehr2025massively, chen2023streaming,czumaj2024fully,epasto2022massively,jayaram2024massively}, they represent only a limited portion of the rich landscape of computational geometry. 
In particular, only a few prior works consider fundamental proximity structures such as well-separated pair decompositions, which are central to many classic geometric algorithms.

While only a few papers explicitly address geometric proximity problems in the MPC model, such problems have been extensively studied in earlier parallel computation models, particularly the PRAM model~\cite{aggarwal1988parallel, amato1994parallel,callahan1993optimal, callahan1995decomposition,cole1992optimal, reif1992optimal}. For additional references, see the survey~\cite{goodrich2017parallel}.
These algorithms typically assume global memory access and fine-grained synchronization, but these assumptions are less realistic in modern distributed systems.
In contrast, the MPC model more accurately captures the constraints of today’s large-scale computing environments by modeling limited local memory and emphasizing communication-efficient computation.
Prior work~\cite{goodrich2011sorting, karloff2010model} has shown that PRAM algorithms can be simulated in the MPC model under reasonable assumptions. These simulation results can be viewed as implicit progress toward solving proximity problems in the MPC model.
However, since a PRAM algorithm with depth~$t$ requires~$O(t)$ rounds to simulate in the  MPC model~\cite{goodrich2011sorting, karloff2010model}, applying this approach to the aforementioned algorithms would still incur~$\Omega(\log n)$ rounds.
This highlights a fundamental challenge in designing native MPC algorithms that solve geometric proximity problems in sub-logarithmic rounds, fully leveraging the distributed nature of the model. 

\medskip 
In this work, we take a systematic step in this direction by developing a fully scalable constant-round MPC algorithm for computing a well-separated pair decomposition 
in a metric space with a constant \emph{doubling dimension} $\ddim$, i.e., every ball of radius $r$ can be covered by $2^\ddim$ balls of radius $r/2$~\cite{gupta2003bounded}.
\ccheck{We say that two sets $A,B$ of points are \emph{$(1/\eps)$-separated} if $\max\{\diam(A), \diam(B)\} \leq  \varepsilon\cdot \dist{}(A,B)$ for $\varepsilon>0$.
A collection $\wspd =\{\{A_1,B_1\}, \ldots, \{A_k,B_k\}\}$ is called a \emph{$(1/\eps)$-well-separated pair decomposition} ($(1/\eps)$-\textsf{WSPD}) of a point set $P$ if (i) $A_i,B_i\subset P$, and $A_i$ and $B_i$ are $(1/\eps)$-separated for every $i$, (ii) $A_i\cap B_i=\emptyset$ for every $i$, and (iii) $\bigcup_{i=1}^kA_i\otimes B_i=P\otimes P$, where $A\otimes B=\{(x,y) \mid x\in A, y \in B, x\neq y\}$.
We define the \emph{size} of the \textsf{WSPD} as the number of pairs it contains.}
A \textsf{WSPD} compactly represents all pairs of point subsets that are sufficiently far apart, which has numerous applications.
Due to its importance, computing a \textsf{WSPD} was among the earliest topics studied in parallel computational geometry under the PRAM model.
For example, Callahan and Kosaraju~\cite{callahan1993optimal} gave an optimal PRAM algorithm that computes an $O((1/\eps)^dn)$-sized $(1/\eps)$-\textsf{WSPD} of $n$ points in $d$-dimensional Euclidean space in $O(\log n)$ depth using $O(n)$ processors.
This yields an $O(\log n)$-round MPC algorithm via standard simulation~\cite{goodrich2011sorting, karloff2010model}, but it remains open whether a $o(\log n)$-round MPC algorithm is possible.
In contrast, in doubling metric spaces, a linear-sized \textsf{WSPD} is also known to exist~\cite{har2006fast}, yet no efficient PRAM algorithm is known for constructing it.
This makes it particularly challenging to design MPC algorithms in doubling metric spaces.

\subparagraph*{Our main results.}
In this paper, we present the first $O(1)$-round, fully scalable MPC algorithm for computing a $(1/\eps)$-\textsf{WSPD} in a doubling metric space as stated below.


\begin{theorem}\label{thm:intro-dd-algorithm}
    Given a set $P$ of $n$ points in a metric space with a constant doubling dimension $\ddim$ for any $\eps>0$, we can compute a $(1/\varepsilon)$-\textsf{WSPD} of $P$ of size $(1/\eps)^{O(\ddim)}n\log^2 n$ in $O(1)$ rounds, using $(1/\eps)^{O(\ddim)} n\log^3 n$ total space and  $O(n^\delta)$ local memory per machine for any $\delta\in (0,1)$. The algorithm succeeds with $1-1/n^{\Omega(1)}$ probability.
\end{theorem}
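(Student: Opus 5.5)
The plan is to replace the sequential net-tree / compressed quadtree by a \emph{randomly shifted hierarchical net decomposition} that every machine can compute locally, and then to pair cells level by level with a purely local rule. The $\log^2 n$ in the size and the extra $\log n$ in the space suggest two ingredients. First, $O(\log n)$ relevant scales per point. Second, $O(\log n)$ independent random hierarchies, or a $\log n$ padding-failure overhead, to boost the success probability to high probability.

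\textbf{Step 1: hierarchy.} Normalize distances so the minimum interpoint distance is $\ge 1$ and consider scales $r_i=2^i$. Computing exact $r_i$-nets is inherently sequential, so I would instead use a random-order ball-carving partition at each scale (in the style of Gupta--Krauthgamer--Lee / Filtser). Draw a random permutation $\pi$ of $P$ and a random radius $\beta r_i$ with $\beta\in[1,2]$. Each point joins the cluster of the $\pi$-first center within distance $\beta r_i$, where the centers form an approximate net sampled in $O(1)$ rounds by sorting and local processing. To obtain a laminar hierarchy in constant rounds, I would make the levels consistent by assigning each cluster at level $i$ to the level-$(i+1)$ cluster containing its center. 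The clusters then have diameter $O(r_i)$, and every ball of radius $r_i$ meets $2^{O(\ddim)}$ clusters. The standard padding guarantee says that a ball of radius $\gamma r_i$ is cut with probability $O(\ddim\cdot\gamma)$.

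\textbf{Step 2: compression.} To avoid $\Theta(\text{spread})$ levels, I would keep only the levels at which a cluster actually splits (the compressed tree). Each point has $O(\log n)$ ancestors with a nontrivial split, obtained after pruning long chains via pointer jumping on sorted data, which takes $O(1/\delta)$ rounds.

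\textbf{Step 3: pairing.} For each cluster $C$ at level $i$, consider the clusters $C'$ at the same level whose centers lie within $O(r_i/\eps)$. There are $(1/\eps)^{O(\ddim)}$ of them by packing. Output a pair $\{C,C'\}$ when the two centers are at distance $\Theta(r_i/\eps)$ but their parents were not far apart at level $i+1$. This is the standard Har-Peled--Mendel rule, and it is locally checkable by joins via sorting. Choosing the constants so that the diameters are $O(r_i)$ and the separation is $\Omega(r_i/\eps)$ makes every such pair $(1/\eps)$-separated after rescaling $\eps$.

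\textbf{Step 4: coverage.} Every pair $(x,y)$ with $d(x,y)\approx r/\eps$ is covered at the first level where the ancestors of $x$ and $y$ become far. Because a point's cluster at any level is within $O(r_i)$ of it regardless of cutting, no padding is actually needed for correctness, only for the size bound. The charge is $(1/\eps)^{O(\ddim)}$ pairs per cluster per level. Summed over the $O(n\log n)$ cluster-level incidences of the compressed representation, this gives the size bound. I would run $O(\log n)$ independent repetitions, or one repetition with a union bound on sampling failures, to obtain $1-1/n^{\Omega(1)}$, which yields $n\log^2 n$ pairs and $n\log^3 n$ space since the pairs are stored explicitly with members listed per level.

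\textbf{Main obstacle.} The hardest step is Step 1 together with Step 2: producing a consistent, compressed hierarchy in $O(1)$ rounds without sequential net construction. If nested random carving fails to give laminarity cheaply, I would fall back on computing each level independently. Pairs would then be formed between independent partitions at level $i$, with coverage still guaranteed by the diameter bound. This costs the extra $\log n$ factors, and in exchange it removes the need for consistency, which I suspect is exactly how the stated bounds arise.
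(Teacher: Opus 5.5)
Your plan has a genuine gap, located at the step you yourself flag as the main obstacle. The claim in Step~2, that after discarding trivial levels every point has only $O(\log n)$ ancestors with a nontrivial split, is false once the spread is unbounded.

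Take $P=\{2^j : 0\le j\le n\}$ on the line. Suppose clusters at scale $2^i$ have diameter $O(2^i)$. Then every $2^j$ with $j\ge i+O(1)$ is a singleton, so the cluster of the point $1$ contains at most $i+O(1)$ points. Your padding guarantee keeps all points below $2^{i-O(1)}$ with $1$ at a constant fraction of the levels. So along the path of $1$ the cluster sizes take $\Theta(n)$ distinct values, and $1$ has $\Theta(n)$ nontrivial ancestors; the compressed net tree of this set is exactly such a caterpillar. This has three consequences:
\begin{itemize}
\item standard pointer jumping needs $\Theta(\log n)$ rounds, not $O(1/\delta)$;
\item the $O(n\log n)$ cluster--level incidence count in Step~4 rests on the false depth bound;
\item you would first have to materialize partitions at up to $\log(\mathrm{spread})$ scales.
\end{itemize}
The paper explicitly notes that building a compressed net tree in $O(1)$ rounds is open.

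The per-scale steps also lack an implementation. A doubling metric has no coordinates or sort keys. So neither finding the $\pi$-first center within $\beta r_i$ of each point, nor finding the clusters whose centers lie within $O(r_i/\eps)$, can be done by ``sorting and local processing''. Such searches must be localized through the parent level, i.e.\ by walking down the hierarchy. Your fallback with independent per-level partitions still keeps all $\log(\mathrm{spread})$ levels, so it does not remove the problem. Finally, the $\log^2 n$ and $\log^3 n$ factors in Step~4 are fitted to the statement rather than derived.

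Your pairing rule and diameter-based coverage argument are sound in spirit; the paper's candidate pairs play that role. What is missing is a hierarchy of height $O(\log n)$, and the missing idea is to balance by cardinality instead of by scale. The paper splits a piece using a ball centered at a sample point. Its radius $r$ is the smallest one capturing a $4^{-O(\ddim)}$ fraction of the piece's sample points, perturbed uniformly to $(1+\theta)r$ with $\theta\in[0,1]$. When this radius exceeds $\bar r/10$, for the parent's radius $\bar r$, one extra split of radius $\Theta(\bar r)$ is made. Both sides then shrink by a constant factor, so the height is $O(\log n)$ regardless of spread.

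Every piece is a Boolean combination of $O(\log n)$ balls, so there are only $n^{O(\log n)}$ relevant ranges. Hence a sample of size about $s^{1/2}\log^2 n$ is an $s^{-1/4}$-approximation by Hoeffding and a union bound, even though balls in doubling spaces may have unbounded VC dimension. This lets one machine build $\Theta(\log s)$ levels at once, giving $O(1)$ layers.

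Packing is replaced by the small cover property: any ball $B$ is covered, in expectation, by $O(\log n)$ pieces of diameter at most $\diam(B)/10$. Proving this is the delicate part, because radii need not decrease along root-to-leaf paths. It yields $(1/\eps)^{O(\ddim)}\log n$ pairing candidates per node per level, hence $(1/\eps)^{O(\ddim)}\log^2 n$ per node over $O(\log n)$ levels. Candidates are enumerated for pairs of layer subtrees of size $O(s^{1/4})$ on single machines, after pre-counting to allocate machines. The final $\log n$ in space comes from Markov's inequality plus $O(\log n)$ independent repetitions.
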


Since the $d$-dimensional Euclidean space has doubling dimension $O(d)$, we can construct a WSPD by applying Theorem~\ref{thm:intro-dd-algorithm}.
However, by exploiting several classical geometric tools specific to Euclidean spaces, we can compute a linear-sized WSPD deterministically.

\begin{theorem}\label{thm:intro-eclidean-algorithm}
    Given a point set $P$ of size $n$ in $\mathbb R^d$ for any constant $d\geq 1$ and for any $\varepsilon>0$, one can construct a $(1/\eps)$-\textsf{WSPD} of $P$ of size $(1/\varepsilon)^{O(d)} n$ in $O(1)$ rounds, using $(1/\varepsilon)^{O(d)} n$ total space and $O(n^\delta)$ local space\footnote{\ccheck{More precisely, the total space and local space are $(1/\varepsilon)^{O(d)} \cdot (dn)$ and $O((dn)^\delta)$, respectively, which ensures full scalability. Since $d$ is treated as a constant, we omit the dependence on $d$.}} per machine for any $\delta \in (0,1)$.
\end{theorem}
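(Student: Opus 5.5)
We plan to prove Theorem~\ref{thm:intro-eclidean-algorithm} in three steps. First we build, in $O(1)$ MPC rounds, a compressed quadtree $\qt$ of $P$ (or a structure with the same separation guarantees). Then we produce candidate pairs of nodes whose cells are well separated, restricting to $(1/\eps)^{O(d)}$ candidates per node. Finally we check that each pair of distinct points is covered by exactly one output pair.

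\textbf{Step 1: the compressed quadtree.} To avoid pointer-jumping, which costs $\Theta(\log n)$ rounds, we use a randomly shifted grid hierarchy together with the Morton (Z-order) encoding of each point.
\begin{itemize}
\item Choosing a deterministic shift from a family of $d+1$ shifts (the Chan shifting lemma) keeps the algorithm deterministic.
\item Quadtree cells correspond to common prefixes of Morton codes. We sort the $n$ codes with the $O(1)$-round sort of Goodrich et al.~\cite{goodrich2011sorting}.
\item The nodes of the compressed quadtree are exactly the cells given by the longest common prefixes of consecutive points in sorted order, so there are $O(n)$ of them.
\item The parent of a node is the nearest enclosing such cell. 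We can find it with an all-nearest-smaller-values computation over the sequence of LCP lengths, which has a known $O(1)$-round MPC implementation with $O(n^\delta)$ local memory.
\item Each node also stores a representative point and the bounding box of its points. These are obtained by segmented prefix sums/minima over the sorted order, again in $O(1)$ rounds.
\end{itemize}

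\textbf{Step 2: candidate pairs.} Instead of the recursive Callahan–Kosaraju procedure, we use the classical characterization of the WSPD by the quadtree:
\begin{itemize}
\item A pair of nodes $(u,v)$ is output if their cells have comparable size, they are $(1/\eps)$-separated, and their parents are not $(1/\eps')$-separated for a suitable $\eps'=\Theta(\eps)$.
\item For a node $u$ at level $\ell$, every partner $v$ lies at level $\ell$ or at the nearest compressed ancestor level. Its cell also lies within distance $O(2^\ell/\eps)$ of $u$'s cell.
\item So the candidates are the $(1/\eps)^{O(d)}$ grid cells of side $2^\ell$ around $\nodeCell{u}$. For each such cell we find the quadtree node containing it, or the compressed node covering it, by a batched lookup.
\item The lookup computes the Morton code of each query cell, merges the queries into the sorted sequence of node codes, and does a predecessor search. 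Sorting $(1/\eps)^{O(d)}n$ items takes $O(1)$ rounds, so total space is $(1/\eps)^{O(d)}n$.
\end{itemize}

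\textbf{Step 3: correctness and size.} Coverage follows the standard argument. For points $p\neq q$, walk up their leaf-to-root paths. The first pair of ancestors of comparable size that are separated while their parents are not is found by exactly the local test above, and this pair is unique, so the output is a partition. The size bound comes from a packing argument: each node has $(1/\eps)^{O(d)}$ partners, giving $(1/\eps)^{O(d)}n$ pairs in total.

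The main obstacle is compression. Long chains of empty levels mean a partner cell may not be a node of $\qt$. The lookup must then map it to the lowest node whose cell contains it, and that node may be much larger, which breaks the comparable-size requirement. We handle this by working with bounding boxes of node point sets rather than cells, or by choosing the scale of $u$ to be its parent's level. We then argue, by packing relative to the parent's size, that only $(1/\eps)^{O(d)}$ candidates per node remain. Showing this charging works without double counting is the delicate part.
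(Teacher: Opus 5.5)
Your Step~1 is a reasonable alternative to the paper's construction. The paper gets parent/child links by re-sorting the nodes in each $\mathcal{Q}(i)$-order, not by all-nearest-smaller-values on LCP lengths; you would still need to cite or sketch that primitive. The grid shift is unnecessary.

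The genuine gap is in Steps~2--3. The structural claim behind your local test is false for a compressed quadtree, and the fix is left open, as you concede. Take $P=\{a,b,q,q'\}$ with $\dist{}(a,b)=\delta$, $\dist{}(q,q')=\delta'$, $\delta'\ll\delta\ll\eps$, and $\dist{}(a,q)\approx 1$. Place the points so that the node $w$ with $P_w=\{a,b\}$ and the node $z$ with $P_z=\{q,q'\}$ are children of the root, with cells of side $\Theta(\delta)$ and $\Theta(\delta')$.
\begin{itemize}
\item Algorithm~\ref{alg:har-wspd} outputs $\{w,z\}$. Their levels are unrelated and their common parent is the root, contradicting ``every partner lies at level $\ell$ or at the nearest compressed ancestor level''.
\item The only node pairs that can cover $(a,q)$ are $\{a,q\}$, $\{a,z\}$, $\{w,q\}$, $\{w,z\}$. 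Only $\{a,q\}$ has comparable sizes, and it fails your parent test because $w$ and $z$ are well separated. So your test covers $(a,q)$ by no pair at all.
\item Using bounding boxes instead of cells changes nothing here, since the bounding boxes of $P_w$ and $P_z$ still have incomparable diameters $\delta$ and $\delta'$.
\end{itemize}
Consequently the uniqueness/partition claim of Step~3 has nothing to stand on. A partition is not needed anyway; a WSPD only requires coverage.

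The missing idea is to drop comparability and anchor at the parent's scale, using an invariant of Algorithm~\ref{alg:har-wspd}. If $\{u,v\}$ is generated from $\{u,\parent{v}\}$ (with $u$ not the root), then
\begin{itemize}
\item $\diam(\nodeCell{u})\le\diam(\nodeCell{\parent{v}})\le\diam(\nodeCell{\parent{u}})$, and
\item $\dist{}(\nodeCell{u},\nodeCell{\parent{v}})\le O(1/\eps)\cdot\diam(\nodeCell{\parent{v}})$.
\end{itemize}
By alignment, $\nodeCell{u}\subseteq\cell\subseteq\nodeCell{\parent{u}}$ for one of the $(1/\eps)^{O(d)}$ grid cells $\cell$ whose side length equals that of $\nodeCell{\parent{v}}$ and which lie within that distance of $\nodeCell{\parent{v}}$. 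So $u$ is either the largest node whose cell lies in $\cell$, or a child of the node whose cell equals $\cell$.

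That node is the successor of $\cell$ in $\mathcal{Q}$-order, with ties broken by putting $\cell$ first. It is not the lowest node containing $\cell$, which is what your lookup targets. The node found may be far smaller than $\cell$. This is harmless, because the count is one node plus at most $2^d$ children per grid cell.

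Generate these candidates for every $v$ with one sort, and output every candidate pair that is well separated. This gives a superset of the sequential WSPD, so coverage is inherited, and its size is $(1/\eps)^{O(d)}n$.
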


Our result improves upon the simulated PRAM algorithm of~\cite{callahan1993optimal} in two key aspects: it reduces the round complexity from $O(\log n)$ to $O(1)$, and extends beyond constant-dimensional Euclidean spaces to general metric spaces with constant doubling dimension.

\subparagraph*{Applications.} 
As applications, we present fully scalable MPC algorithms for constructing a spanner, computing the closest pair, approximating the diameter, and constructing all $k$-nearest neighbors as stated below.
Let $P$ be a set of $n$ points in a metric space $\mathcal X$, and for any $\eps>0$,
let $f_{\mathcal X}(n,\eps)$ be the size of a $(1/\eps)$-\textsf{WSPD} of $P$ computed by our MPC algorithms.

\begin{itemize} 
    \item \textbf{Closest pair:} One can construct the \emph{exact} closest pair of $P$ in $O(1)$ rounds, using $f_{\mathcal X}(n,\eps)$ total space and $O(n^{\delta})$  local space per machine for any $\delta \in (0,1)$.

    \item \textbf{Diameter:}
    One can construct a $(1-\varepsilon)$-approximate diameter of $P$ in $O(1)$ rounds, using $f_{\mathcal X}(n,\eps)$ total space and $O(n^{\delta})$  local space per machine for any $\delta \in (0,1)$.

        \item \textbf{Spanner:}
One can construct a $(1+\varepsilon)$-spanner of $P$ with $f_{\mathcal X}(n,\eps)$ edges in $O(1)$ rounds, using $f_{\mathcal X}(n,\eps)$ total space and $O(n^{\delta})$  local space per machine for any $\delta \in (0,1)$.
    
    \item \textbf{All $k$-nearest neighbors:}
    One can construct $k$-nearest neighbors of each point of $P$ in $O(k)$ rounds in total, using $k\cdot f_{\mathcal X}(n,\eps)$ total space and $O(n^{\delta})$  local space per machine for any $\delta \in (0,1)$, if $\mathcal X$ is a Euclidean space.    
\end{itemize}

\subparagraph*{Comparison with prior work.}
As these problems are fundamental in computational geometry, several results are known. Although no prior work explicitly studies the closest-pair or diameter problem in the MPC model, they can be handled by classical geometric techniques. We also note that the closest-pair problem reduces to all-nearest neighbors. 
Overall, except for a few Euclidean-specific cases, our results match or outperform the best known MPC algorithms. Moreover, our approach is \emph{unified}: the same WSPD-based framework simultaneously yields algorithms for all the problems discussed below.

\begin{itemize} 
    \item \textbf{Closest pair:} To the best of our knowledge, no prior MPC algorithm is known for the closest pair problem in doubling metric spaces.
    In contrast, it can be solved in $\mathbb{R}^d$ with $d \le 2$ in $O(1)$ rounds 
    by constructing the Delaunay triangulation~\cite{goodrich1997randomized, nath2016massively}.
    
    \item \textbf{Diameter:} A $2$-approximation of the diameter can be found in $O(1)$ rounds by selecting an arbitrary point and computing its farthest neighbor. In $\mathbb{R}^d$ for fixed $d$, a $(1-\varepsilon)$-approximation follows from evaluating the maximum directional width over a $(1/\varepsilon)^{O(d)}$ net on the unit sphere~\cite{agarwal1992farthest}, which can be implemented in $O(1)$ rounds in the MPC model.
    
    \item \textbf{Spanner:} To the best of our knowledge, no prior MPC algorithm is known for constructing spanners in doubling metric spaces.
    In contrast, for Euclidean spaces, $t$-spanners with $O(n^{1+1/t^2})$ edges can be constructed in $O(1)$ rounds~\cite{cohen2022massively, epasto2022massively}, as explicitly stated in~\cite{jayaram2024massively}.
    In high-dimensional Euclidean spaces, they outperform ours in terms of round complexity, whereas our spanner has fewer edges in low-dimensional Euclidean spaces.
    
    \item \textbf{All nearest neighbors:} To the best of our knowledge, no prior MPC algorithm is known for solving the all-nearest neighbors problem in doubling metric spaces.
    In contrast, for Euclidean spaces, all $c$-\emph{approximate} nearest neighbors can be computed in $O(1)$ rounds with total space complexity $n^{1/c} \cdot \tilde O(n)$~\cite{czumaj2024fully}.
    In high-dimensional settings, this result outperforms ours in terms of total space complexity, whereas our algorithm computes \emph{exact} solutions with lower space usage in low-dimensional Euclidean spaces.
    In such low-dimensional settings, the previously best-known algorithm for approximate nearest neighbors achieves $O(1)$ rounds with $O(n)$ total space~\cite{agarwal2016parallel}.
    Our algorithm improves upon this by computing exact nearest neighbors within the same round and space complexity.
    
    \item \textbf{All $k$-nearest neighbors.}
    The all $k$-nearest neighbors problem can be solved in the binary-fork model~\cite{blelloch2022parallel}, which can be simulated in the MPC model. Its performance depends on the spread and the expansion of $P$, whereas our algorithm works for arbitrary point sets.
    If we simulate their algorithm in the MPC model on a point set with bounded spread and bounded expansion, the number of rounds is $O(n^\gamma)$ for a constant $\gamma <1$.
    In contrast, our algorithm uses $O(k)$ rounds for any point set.
\end{itemize}

\subparagraph*{Additional contribution: New hierarchical decomposition.}
To construct a \textsf{WSPD} efficiently, we build a hierarchical decomposition of $P$, which is also our contribution that may be of independent interest.
In doubling metric spaces, we introduce a new hierarchical structure called the \emph{partition cover tree}, which can be built in $O(1)$ rounds.
A central tool for handling doubling metric spaces is a \emph{net tree}~\cite{har2006fast}. 
While its uncompressed variant can be built in $O(1)$ rounds~\cite{andoni2014parallel}, it remains open whether the \emph{compressed} net tree, which is essential for handling point sets with unbounded spread, can be built in $O(1)$ rounds. Here, the spread of a point set is the ratio between its largest and smallest pairwise distances.
\emph{Our work bypasses this bottleneck by introducing the partition cover tree.}
Although it does not satisfy the packing property or the invariant that radii halve at each level unlike the net-tree, it retains a key geometric feature of the compressed net-tree: any ball can be covered by a small number of cells with comparable diameter, and the size is linear only in the input size.
We believe this is of independent interest and may find broader applications in proximity problems over doubling metrics.
For example, this may serve as a foundation for data structures that approximate local regions using a small number of representations such as consistent hashing and locality-sensitive hashing~\cite{bhaskara2018distributed}.

In Euclidean spaces, we present an MPC algorithm for constructing a \emph{compressed quadtree} of linear size in $O(1)$ rounds.
Tree-based hierarchical decompositions such as split trees, $kd$-trees, BBD-trees, randomly shifted quadtrees, partition tree for range searching have been designed in the MPC or PRAM model~\cite{agarwal2016parallel,andoni2014parallel,callahan1993optimal,callahan1995decomposition}, and have been used in a wide range of parallel geometric algorithms. In a similar sprit, our compressed quadtree might serve as a foundational structure for other MPC algorithms for proximity problems. 

\section{Preliminaries} \label{sec:pre}
Consider a metric space $\metric$.
For any two points $p$ and $q$, we denote their distance in $\metric$ by $\dist{\metric}(p,q)$.
For convenience, we use $\dist{}(p,q)$ for $\dist{\metric}(p,q)$ if the underlying metric space is clear from the context.
We assume that one can compute $\dist{\metric}(p,q)$ in $O(1)$ time for any metric space $\metric$.
Let $B_{\dist{}}(x,r) = \{y \in \metric \mid \dist{\metric}(x,y) \leq r\}$.
For convenience, we use $B(x,r)$ to denote $B_{\dist{}}(x,r)$ if the underlying distance function is clear from the context.
And, for a ball $B$, we use $c(B)$ and $r(B)$ to denote the center and radius of $B$, respectively.
Let $P$ be a finite point set in the metric space $\metric$.
For any two subsets $X, Y\subseteq P$, we define their distance as the minimum distance between any pair of points $x \in X$ and $y \in Y$.
That is, $\dist{}(X,Y)=\min_{x \in X, y\in Y}\dist{}(x,y)$.
When $X=\{x\}$ is a singleton, we alternatively use $\dist{}(x,Y)$ to denote $\dist{}(X,Y)$.
We denote the \emph{diameter} of $P$, defined as the maximum pairwise distance between the points in $P$ by $\diam(P)=\max_{p,q\in P}\dist{}(p,q)$. 

In this paper, we consider two fundamental metric spaces: doubling metric spaces and Euclidean space.
The doubling dimension $\ddim(\metric)$ of a metric space $\metric$ is the smallest value $d$ such that 
every ball $B$ can be covered by $2^d$ balls of half the radius of $B$.
A useful consequence of the doubling property is the packing property: any set of points in $B(x, r)$ that are pairwise at distance greater than $r'$ has size at most $(r/r')^{O(\ddim)}$.

Throughout this paper, we consider several tree structures.  
Let $u$ be a node in a tree, and $\lvl{u}$ be the level of $u$, which is the distance from $u$ to the root of the tree.
We denote the parent of $u$ by $\parent{u}$.
We use $[m]$ to denote $\{1,2,\ldots,m\}$ for an integer $m\geq 1$.

\subsection{Well-Separated Pair Decomposition.}\label{sec:pre-wspd}

In a metric space with doubling dimension $\ddim$, any point set of size $n$ has a $(1/\eps)$-well-separated pair decomposition consisting of $(1/\eps)^{O(\ddim)}n$ pairs~\cite{callahan1995decomposition, har2006fast}.
While a doubling metric space has a \textsf{WSPD} of linear size, the sum of the sizes of all point pairs in the \textsf{WSPD} can be quadratic. 
To avoid explicitly listing all points in each pair, we use a compact representation: we represent each set as an arbitrary representative point in the set and the maximum distance from its representative point to any point in the set.\footnote{In the Euclidean case, we construct a \textsf{WSPD} using a compressed quadtree.
Since each set corresponds to a cell in the quadtree, we can simply use the center and diameter of that cell to represent this set.}

\begin{algorithm}[tb]
    \centering 
    \caption{$\textsf{genWSPD}(u,v)$}
	\begin{algorithmic}[1] 
        \State Assume $r_u\leq r_v$.
         Otherwise, exchange the roles of $u$ and $v$.
        \If{$\max\{r_u, r_v\} \leq (\varepsilon /8)\cdot \dist{}(p_u,p_v)$} 
            \State \textbf{Return} $\{u,v\}$
        \Else
        \State Denote the children of $v$ by $v_1,\ldots v_i$. 
        \State \textbf{Return} $\bigcup_{j=1}^i\textsf{genWSPD}(u, v_j)$
        \EndIf
	\end{algorithmic}
	\label{alg:har-wspd}
\end{algorithm}

\subparagraph*{Construction.}
We can compute a $(1/\varepsilon)$-\textsf{WSPD} efficiently once we have a \emph{partition tree}. In the Euclidean case, one can use a quadtree as the partition tree~\cite{callahan1995dealing,callahan1995decomposition,har2011geometric},
while in a doubling metric space, a net tree can serve as the partition tree~\cite{har2006fast}.
We define a \emph{partition tree} as a tree satisfying the following properties: 
(1) each node $v$ corresponds to a subset $P_v$ (called a \emph{piece}) of $P$, 
(2) the pieces of the children of a node $v$ form a partition of $P_v$, 
(3) the piece of the root is $P$,
and (4) the piece of a leaf consists of a single point. 
Each vertex $v$ is associated with a representative $p_v$ and the maximum distance $r_v$ from $p_v$ to any point in $P_v$.

Algorithm~\ref{alg:har-wspd} describes how to compute a \textsf{WSPD} using a partition tree $T$. In particular, we run Algorithm~\ref{alg:har-wspd} with a pair consisting of two copies of the root of $T$. 
This algorithm always returns a $(1/\eps)$-\textsf{WSPD}, but its size depends on the properties of the underlying partition tree. 
More specifically, the algorithm constructs a $(1/\eps)$-\textsf{WSPD} by recursively checking whether 
two nodes satisfy the following, starting from 
the root node paired with itself.
\begin{equation}\label{equ:cell-separated}
    \max\{r_u, r_v\} \leq (\varepsilon/8) \cdot \dist{}(p_u,p_v).
\end{equation}
If this holds, returns the pair $\{u,v\}$ as a pair of the \textsf{WSPD};
otherwise, assuming without loss of generality that $r_u\leq r_v$, it recursively refines the pair by replacing $v$ with its children: for each child $v'$ of $v$, the algorithm recurses on the pair $\{u, v'\}$.

\subsection{Massively Parallel Computation (MPC) Model}\label{sec:pre-MPC}
In the MPC model, an input of size $n$ is arbitrarily distributed over $\numM$ machines, each with a local memory of $O(\localSize)$ words.
The number of machines is typically $m=O(n/\localSize)$ or slightly larger.
We say that an MPC algorithm is \emph{fully scalable} if it works for $\localSize = n^{\delta}$ with an arbitrary constant $\delta \in (0,1)$.
We assume that a word can store a real number with arbitrary precision as in~\cite{agarwal2016parallel}, which is a standard assumption for geometric problems.
We assume that each machine has a unique ID from $[m]$.
The computation of an MPC algorithm proceeds in \emph{synchronous rounds}, each consisting of a \emph{local computation phase} followed by a \emph{communication phase}.
In the computation phase, each machine performs local computations on its local data.
In the communication phase, each machine sends messages to other machines.
Each machine is allowed to send and receive a total of $O(\localSize)$ words per round.

Throughout this paper, we use the following subroutines that can be implemented in $O(1)$ MPC rounds: sorting, predecessor, indexing, minimum/maximum, broadcast.
For an integer $a$, let $M_a$ be the machine with ID $a$. The following subroutines can be implemented in $O(1)$ rounds in the MPC model.

\begin{enumerate} 
    \item \textbf{\textsf{Sorting}~\cite{goodrich1999communication, goodrich2011sorting}:} Given $n$ comparable items distributed arbitrarily across the machines, one can sort and index all items in $O(1)$ rounds.
    \item \textbf{\textsf{Predecessor}~\cite{andoni2018parallel, goodrich2011sorting}:} Given pairs $(x_1,y_1),(x_2,y_2), \ldots, (x_n,y_n)$, distributed arbitrarily across the machines, where every item $x_i$ has a total ordering and $y_i\in\{0,1\}$, every item $x_j$ can be stored together with the item $x_{i_j}$ and its index $i_j$, where $x_{i_j}$ is the largest item such that $x_{i_j}\leq x_j$ and $y_{i_j}=1$.
    \item \textbf{\textsf{Indexing}~\cite{andoni2018parallel, goodrich2011sorting}:} Suppose that the sets $S_1,S_2,\ldots,S_k$ of total $n$ items are stored across the machines.
    In $O(1)$ rounds, every item $x$ can be stored together with its rank within the set storing $x$ in the machine that originally has $x$.
    \item \textbf{\textsf{Minimum/Maximum}~\cite{dinitz2019massively}:} Given two IDs $a,b$ of the machines for $a\leq b$, the minimum or maximum of items stored in $M_a,M_{a+1},\ldots M_b$ can be computed in $O(1)$ rounds. 
    \item  \textbf{\textsf{Broadcast}~\cite{dinitz2019massively}:}
    Given two IDs $a,b$ of the machines for $a\leq b$, a machine can send a message of size $O(\localSize^{\alpha})$ to $M_a,M_{a+1},\ldots M_b$ in $O(1)$ rounds, where $\alpha<1$ is a constant.
\end{enumerate}

\subsection{Geometric Proximity Problems}
We define the four fundamental problems that we address as applications of the \textsf{WSPD}.
Let $P$ be a point set in a metric space $\mathcal X$ with distance function $\dist{}$.

\begin{itemize} 
\item \textbf{Spanner:} A $t$-\emph{spanner} of $P$ is
a compact representation of the distances between all the pairs of points of $P$.
It is defined as a graph $G=(P,E)$ such that $\dist{}(q,s)\leq d_G(q,s)\leq t\cdot \dist{}(q,s)$,
    where $d_G(q,s)$ is the distance between $q$ and $s$ in $G$. 
    Thus our goal in this problem is to compute a spanner with a small number of edges. 

\item{\textbf{Diameter and closest pair:}}
The \emph{diameter} and \emph{closest pair} are fundamental geometric characteristics of a point set.
Recall that the diameter of a point set $P$ is defined as the maximum pairwise distance among the points, i.e., $\max_{p, q \in P} \dist{}(p, q)$.
In this work, we aim to compute a $(1-\varepsilon)$-approximate diameter, that is, the distance between some pair of points in $P$ that is at least $(1-\varepsilon) \cdot \diam(P)$.
Conversely, the closest pair is the pair of distinct points in $P$ with the minimum distance between them, i.e., a pair $(x, y)$ such that $\dist{}(x, y) = \min_{p \ne q \in P} \dist{}(p, q)$.

\item{\textbf{$k$-nearest neighbors:}}
For a point $p \in P$, the \emph{nearest neighbor} in $P$ is the point in $P \setminus \{p\}$ that is closest to $p$.
The \emph{$k$-nearest neighbors} ($k$-NN) problem asks to compute, for each point $p \in P$, the $k$ closest points in $P \setminus \{p\}$ to $p$. 
They are fundamental tasks in computational geometry and have numerous applications in clustering, classification, and data analysis.
\end{itemize}

\subparagraph{Overview.}
In both doubling metric spaces and Euclidean spaces, our algorithm follows a unified two-step framework.
First, we construct a partition tree $T$ over the input point set $P$. Then, we generate pairs of nodes $(u, v)$ in $T$ such that the corresponding piece pair $(P_u, P_v)$ form a  \textsf{WSPD}.
For doubling metric spaces, we introduce a new structure called the \emph{partition cover tree}, specifically designed for our setting.
In contrast, for Euclidean spaces, we use a compressed quadtree as the partition tree.


\section{Technical Overview}
\subsection{Overview of WSPD Construction in Doubling Metric Spaces}\label{sec:overview-doubling}
Let $P$ be a set of $n$ points in a metric space with a constant doubling dimension $\ddim$.
We introduce our approach with three ingredients: \emph{partition cover trees}, \emph{radius perturbation}, and \emph{$\alpha$-approximation}. 
We compute a \textsf{WSPD} in two steps: first, by constructing a \emph{partition cover tree} and then by constructing a \textsf{WSPD} using this tree.
Given the partition cover tree, we can construct a \textsf{WSPD} by parallelizing Algorithm~\ref{alg:har-wspd}.
Details can be found in Section~\ref{sec:doubling}.

\subparagraph*{First ingredient: Partition cover tree.}
Given a partition tree $T$ of $P$, we can compute a \textsf{WSPD} by traversing the tree as in sequential algorithm by Har-Peled and Mendel~\cite{har2006fast}. 
However, the size of the constructed \textsf{WSPD} depends on the properties of the underlying partition tree.
For instance, Har-Peled and Mendel~\cite{har2006fast} used a net tree 
to ensure that the number of pairs in the \textsf{WSPD} is  $O(n)$.
However, it is unclear whether their algorithm can be adapted to the MPC model, as it relies on constructing a \emph{greedy permutation} of $P$.\footnote{A greedy permutation is a sequence of points $\langle p_1,\ldots,p_n\rangle$ from $P$, associated with a sequence $\langle r_1,\ldots,r_n\rangle$ of radii, such that
$P\subseteq \cup_{\ell=1}^k B(p_\ell, r_k)$, and $\min_{1\leq i\leq j\leq k}\dist{}(p_i,p_j)=r_{k-1}$ for any integer $k$.
Here, the choice of each point $p_i$ depends on the prefix $\langle p_1,\ldots,p_{i-1}\rangle$ of the sequence.}

Instead, we present a new type of partition tree that enables the construction of a \textsf{WSPD} of size $(1/\eps)^{O(\ddim)} n\log^2 n$.
We show that any randomized partition tree satisfying the following property yields a \textsf{WSPD} of size $(1/\eps)^{O(\ddim)}n\log^2 n$. We call a partition tree a \emph{partition cover tree} if it satisfies the following small cover property. For its proof, see Lemma~\ref{lem:size-candidate} and Lemma~\ref{lem:size WSPD} in Section~\ref{sec:doubling}.

\begin{property}[Small Cover Property]
For a ball $B$, a set $\{v_1,v_2,\ldots,v_k\}$  of nodes of a randomized partition tree $T$ is called a \emph{cover} of $B$ induced by $T$ if
the union of $P_{v_1},\ldots,P_{v_k}$ contains $B$,
and 
$\diam(P_{v_i})\leq \diam(B)/10$ for all nodes $v_i$.
We say $T$ has the \emph{small cover property} if 
for every ball $B$ in the metric space $\metric$,
the expected size of the minimum-cardinality cover of $B$ is $O(\log n)$.
\end{property}

As mentioned earlier, a major obstacle in designing MPC algorithms for doubling metric spaces with unbounded spread has been the absence of an efficient MPC construction of the \emph{compressed net tree}.
Although the partition cover tree does not satisfy the packing property or the invariant that radii halve at each level, it shares the small cover property.

\subparagraph*{Second ingredient: Radius perturbation.} 
We can construct a partition tree by partitioning 
 $P$ into subsets in a top-down manner. Here, we describe the construction in the sequential model, and later we briefly show how to implement it in the MPC model using $\alpha$-approximation. 
 Imagine that we compute a ball $B$ containing $|P|/2$ points of $P$.
Then we recursively construct the tree of $P\cap B$
and the tree of $P\setminus B$, and merge them into a single tree by creating the root corresponding to $P$. 
Clearly, the constructed tree is a partition tree of height $O(\log n)$, but it does not necessarily satisfy the small cover property.
To see this, observe that a small ball can be split into two subsets by $B$. 
Moreover, these subsets can be further subdivided at higher levels. 
As a result, a small ball may be split into $\Theta(n)$ pieces before reaching levels where the diameters of the pieces become smaller than the diameter of the ball itself.


To avoid the case that a small ball is separated into many pieces, we choose the radius of a ball in the construction of the partition tree randomly. 
Instead of choosing a ball containing $|P|/2$ points of $P$,
we first compute a smallest-radius ball $B$ containing at least $|P|/4^{O(\ddim)}$ points from $P$. 
Then $|P|/4^{O(\ddim)} \leq |B(c, 2r) \cap P| \leq |P|/2^{O(\ddim)}$ by the choice of $B$,
where $c$ and $r$ denote the center and radius of $B$, respectively.
The second inequality holds, since $B(c,2r)$ can covered by at most $2^{O(\ddim)}$ balls of radius slightly smaller than $r$.
Thus, to ensure that a ball separates $P$ in a balanced way, we may choose any radius from $[r,2r]$.  We choose a value $\theta$ from $[0,1]$ uniformly at random, and partition $P$ into two subsets 
using $B^*=B(c, (1+\theta)\cdot r)$.
Then the probability that a ball of radius $r'$ is separated by $B^*$ is at most $r'/r$, which is very small if $r'\ll r$.
Then we recurse on $P\cap B^*$ and $P\setminus B^*$. 
For a technical reason, if $r\geq \bar r/10$, where 
$\bar r$ is the radius of the separating ball of its parent, we further partition $P$ using an arbitrary ball of radius $\bar r$ 
with radius perturbation, and then 
construct a binary partition tree of this partition first, and then recurse on each piece of the partition.
Then we can show that the resulting tree $T$ satisfies the small cover property.
For its proof, see Lemma~\ref{lem:total-sum} and Lemma~\ref{lem:expected size} in Section~\ref{sec:doubling}.

\medskip 
In the MPC framework based on $\alpha$-approximation, which will be described later, 
it is crucial not only to obtain a partition satisfying the small cover property,
but to obtain it as a \emph{hierarchical} tree of \emph{height} $O(\log n)$. 
This $O(\log n)$-height hierarchical structure is what enables us to construct the tree in $O(1)$ MPC rounds in a top-down manner. 
Without such constraints, one may satisfy the covering property in other ways like~\cite{DBLP:conf/icalp/Filtser20}, but these constructions do not appear to admit $O(1)$-round parallelization in the MPC model.

Ensuring both hierarchy and bounded height, however, makes the analysis considerably more delicate.  
Although radius perturbation has appeared in prior geometric decompositions,
those arguments rely on the property that the diameters of the pieces decrease by a constant factor along a root-to-leaf path. However,
in our construction, the radii of the separating balls do not necessarily decrease by a constant factor along root-to-leaf paths,
and therefore the standard arguments used in classical radius-perturbation schemes do not carry over.
This is a main technical challenge we overcame in this paper.

\subparagraph*{Third ingredient: $\alpha$-approximation.}
To construct the partition cover tree $T$ in $O(1)$ rounds, we compute the first $(\log s)/2$ levels of $T$ at once using an $(1/s^{1/4})$-approximation of a proper range space, which is our third ingredient. A similar approach is also used in~\cite{agarwal2016parallel} to construct a range tree in the Euclidean space in the MPC model.
An $\alpha$-approximation is originally introduced for approximate counting and geometric estimation.
Here, we review several notions for defining the $\alpha$-approximation. See also~\cite{har2011geometric} for further details.
A \emph{range space} $S$ is a pair $(X, \mathcal{R})$, where $X$ is a ground set, and $\mathcal{R}$ is a family of subsets of $X$, called \emph{ranges}.

\begin{definition}[$\alpha$-approximation]\label{def:a-approx}
    Given a range space $(X, \mathcal{R})$ and a parameter $0 \leq \alpha \leq 1$, a subset $Y \subseteq X$ is an \emph{$\alpha$-approximation} of $(X, \mathcal{R})$ if, for every $R \in \mathcal{R}$,
    \[
    \left| \frac{|Y \cap R|}{|Y|} - \frac{|R|}{|X|} \right| \leq \alpha.
    \]
\end{definition}

An $\alpha$-approximation of a range space $S$ can be obtained via random sampling, where the required sample size depends on the VC dimension of $S$.
Or, alternatively, if the number of ranges is small, we can use the following lemma. Its proof can be found in the appendices.


\begin{lemma}[Folklore]\label{thm:a-approx-sampling-poly}
    There exists a constant $c > 0$ such that a set obtained by $m$ random independent draws from $X$ 
    forms an $\alpha$-approximation of $(X, \mathcal{R})$ with probability at least $1 - \psi$, regardless of VC dimension, where
    \[
    m=\frac{c}{\alpha^2} \left( \log|\mathcal R| + \log \frac{1}{\psi} \right).
    \]
\end{lemma}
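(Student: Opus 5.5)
The plan is a Chernoff--Hoeffding bound for each fixed range followed by a union bound over $\mathcal R$. Let $Y$ be the multiset obtained from $m$ independent uniform draws $y_1,\ldots,y_m$ from $X$, so $|Y|=m$, counting multiplicities. Fix a range $R\in\mathcal R$ and define indicator variables $Z_i=\mathbf 1[y_i\in R]$. These are i.i.d.\ Bernoulli variables with mean $p_R=|R|/|X|$. Moreover, $|Y\cap R|/|Y|=\frac1m\sum_{i=1}^m Z_i$, whose expectation is exactly $|R|/|X|$.

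Next, apply Hoeffding's inequality to these bounded independent variables in $[0,1]$:
\[
\prob\left[\left|\frac{1}{m}\sum_{i=1}^m Z_i - p_R\right| > \alpha\right] \leq 2\exp(-2\alpha^2 m).
\]
This is the only probabilistic ingredient. Note that it uses nothing about the structure of $\mathcal R$, which is why the statement holds regardless of VC dimension.

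Now take a union bound over all ranges. The probability that some $R\in\mathcal R$ violates the $\alpha$-approximation condition of Definition~\ref{def:a-approx} is at most $2|\mathcal R|\exp(-2\alpha^2 m)$. We need this to be at most $\psi$, which holds when
\[
m\geq \frac{1}{2\alpha^2}\left(\ln|\mathcal R|+\ln\frac{2}{\psi}\right).
\]
This is implied by $m=\frac{c}{\alpha^2}(\log|\mathcal R|+\log\frac1\psi)$ for a suitable absolute constant $c$. The additive $\ln 2$ is absorbed into the constant: when $\psi<1/2$ or $|\mathcal R|\geq 2$, the bracket is bounded below by a constant. In the degenerate remaining cases, the claim is trivial or handled by taking $c$ larger. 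Finally, round $m$ up to an integer.

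There is no real obstacle here; the only points needing care are bookkeeping. First, "random independent draws" means sampling with replacement, so $Y$ should be treated as a multiset, and its size is $m$, not the number of distinct elements. Second, the constant must absorb the factor $2$ from the two-sided tail and the base of the logarithm.
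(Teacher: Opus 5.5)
Your proposal is correct and is essentially the paper's proof. Both apply a per-range Hoeffding bound $2e^{-2\alpha^2 m}$ to the indicator average, take a union bound over $\mathcal R$, and choose $m$ so that $2|\mathcal R|e^{-2\alpha^2 m}\le\psi$. The one soft spot is your degenerate-case remark: when $|\mathcal R|=1$ and $\psi\in[1/2,1)$, enlarging $c$ does not absorb the $\ln 2$, because as $\psi\to 1$ the formula rounds to $m=1$, which fails e.g.\ for $p_R=1/2$ and $\alpha<1/2$. So the statement implicitly needs $\psi\le 1/2$ or $|\mathcal R|\ge 2$, which is harmless here since the paper applies it with $|\mathcal R|=n^{O(\log n)}$.
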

\begin{proof}
    Let $R$ be a fixed range of $\mathcal R$.
    Let $p_R=|R|/|X|$ be the true fraction of $X$ in $R$.
    Let $Z_1,Z_2,\ldots,Z_m$ be the indicator random variables where $Z_i=1$ if the $i$th sampled point is contained in $R$, and $Z_i=0$, otherwise.
    Then each $Z_i$ is a Bernoulli random variable with mean $\mathbb{E}[Z_i]=p_R$.
    Let $\hat p_R=(1/m)\sum_{1}^m Z_i$.
    By the additive Chernoff bound (or the Hoeffding's inequality), for any $\alpha>0$,
    we have 
    \[
    \prob[|\hat p_R-p_R|>\alpha]\leq 2{e^{-2\alpha^2m}}. 
    \]
    That is, the success probability for $R$ is high. 
    Now we take a union bound over all ranges $R$ on $\mathcal R$.
    \[
    \prob[ \exists R\in\mathcal R \text{ s.t } |\hat p_R-p_R|>\alpha ]\leq |\mathcal R|\cdot 2{e^{-2\alpha^2m}}. 
    \]
    By the choice of $m$, we can show that the success probability is at least $1-\psi$.
\end{proof}

We use the $\alpha$-approximation to parallelize the construction of a partition cover tree.
Consider the range space $S=(P,\mathcal R)$ where $\mathcal R = \{ \cap_{B\in \mathcal B}B \setminus \cup_{B'\in\mathcal B'}B'\}$, and $\mathcal B$ and $\mathcal B'$ are sets of $O(\log n)$ balls. Since the size of $\mathcal R$ is $n^{O(\log n)}$, a sample $Y$ of size $O(\sqrt s\log^2 n)\leq O(s)$ is an $(s^{1/4})$-approximation of $S$. 
Then we construct a partition tree of $Y$ until every leaf corresponds to a set of $O(1)$ points of $Y$. This also partitions $P$ into $O(s)$ subsets, each is contained in $\mathcal R$, accordingly. The fact that $Y$ is an 
$(s^{1/4})$-approximation of $S$ implies that each subset in the partition contains $O(n/s^{1/4})$ points of $P$.
Therefore, by repeating this $O(1)$ times, we can obtain a partition cover tree in $O(1)$ rounds.

\begin{theorem}\label{thm:partition-cover-tree-construction}
    Given a set $P$ of $n$ points in a doubling metric space with bounded doubling dimension $\ddim$, one can construct the partition cover tree of $P$ of $O(n)$ size and complexity.
    It takes $O(1)$ rounds with high probability, and uses $O(n)$ total space and $O(n^\delta)$ local space per machine for any $\delta \in (0,1)$.
\end{theorem}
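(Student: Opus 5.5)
We will prove Theorem~\ref{thm:partition-cover-tree-construction} by simulating the sequential top-down construction with radius perturbation, in $O(1)$ \emph{phases}. In each phase, $O(\tfrac12\log s)$ consecutive levels of the partition cover tree are fixed for every current piece simultaneously, using an $\alpha$-approximation computed from a small sample. We assume the small cover property of the sequentially defined tree (Lemma~\ref{lem:total-sum} and Lemma~\ref{lem:expected size}), so we only need to show the following two things. First, the MPC procedure produces a tree distributed exactly as, or a valid instance of, the perturbed construction, with balanced splits up to constant factors. Second, it runs in $O(1)$ rounds with the stated space.

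\textbf{Step 1 (one phase on one piece).} Let $Q$ be the piece currently assigned to a group of machines, with $|Q|=N$. Every node created inside the phase has a piece of the form $\bigcap_{B\in\mathcal B}B\setminus\bigcup_{B'\in\mathcal B'}B'$, where $|\mathcal B|+|\mathcal B'|=O(\log n)$ and each ball has a center in $Q$ and a radius given by a pairwise distance times $(1+\theta)$. The number of such ranges is therefore $n^{O(\log n)}$. By Lemma~\ref{thm:a-approx-sampling-poly}, a sample $Y$ of size $O(\sqrt s\log^2 n)$ is an $s^{-1/4}$-approximation with probability $1-n^{-\Omega(1)}$. We gather $Y$ on one machine and run the sequential construction on $Y$ there. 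This means that at each node we take the smallest ball containing a $4^{-O(\ddim)}$ fraction of the sample points, draw $\theta$, and apply the parent-radius rule. We stop at depth $\tfrac12\log s$ or when a piece has $O(1)$ sample points.

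The approximation guarantee converts sample fractions into true fractions up to an additive $s^{-1/4}$. Hence each chosen ball is, on $Q$, within a constant factor of the true smallest ball with that fraction. The doubling argument from Section~\ref{sec:overview-doubling} then still yields a constant-fraction split, and every leaf piece has at most $O(N s^{-1/4})$ points of $Q$. The subtle point, which I expect to be the main obstacle, is this: the analysis behind the small cover property must tolerate radii chosen as approximately smallest balls rather than exactly smallest ones. I would handle it by checking that Lemma~\ref{lem:total-sum} uses only (a) balance within constant factors, (b) the radius $r$ being at most a constant times the optimal radius for a slightly smaller fraction, and (c) the fact that $\theta$ is uniform and independent of everything else. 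The perturbation probability bound $r'/r$ depends only on (c), so I would re-verify (a) and (b) with adjusted constants.

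\textbf{Step 2 (routing).} We broadcast the $O(s^{1/2})$ balls of the phase, of size $O(s^\alpha)$ after choosing the sample exponent appropriately, to all machines holding $Q$. Each point evaluates its root-to-leaf path locally, since it needs only $O(\log s)$ ball tests. Sorting by leaf label then regroups the points into the new pieces. Indexing and sorting give tree node IDs and parent pointers. Pieces whose size drops below $s$ are finished sequentially on a single machine.

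\textbf{Step 3 (rounds, space, probability).} Each phase shrinks piece sizes by a factor of $s^{1/4}=n^{\delta/4}$, so $O(1/\delta)=O(1)$ phases suffice. Each phase uses $O(1)$ sorting, broadcast, and indexing rounds. The total sample size is $O(n)$ summed over pieces, and the tree is a partition tree with $n$ leaves. Every internal node has at least two children, where unary nodes are contracted or never created because each split is balanced. The tree therefore has $O(n)$ nodes and uses $O(n)$ total space. A union bound over the $O(n)$ phase-piece instances gives success with probability $1-n^{-\Omega(1)}$.
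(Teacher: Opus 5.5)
\textbf{Gap: the transfer argument in Step 1 fails and is not needed.} Your construction and accounting coincide with the paper's (Lemma~\ref{lem:tree-size} and Lemma~\ref{lem:num-rounds}). A sample of size $\Theta(\sqrt{s}\log^2 n)$ is an $s^{-1/4}$-approximation for the $n^{O(\log n)}$ Boolean combinations of $O(\log n)$ balls centered at input points. The layer is built on one machine, its balls are broadcast, points are routed and sorted by leaf, and $O(1)$ layers suffice.

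The problem is how you argue the output is a partition cover tree. You treat the small cover lemmas as facts about an idealized exact-smallest-ball tree and try to transfer them. For that you claim every sample-chosen ball is, on $Q$, within a constant factor of the true smallest ball and every split is constant-fraction balanced on $Q$, and you plan to re-verify (b). This fails inside a phase. The approximation controls fractions of $Q$ only up to additive $s^{-1/4}$, i.e.\ up to $s^{-1/4}|Y|=\Theta(s^{1/4}\log^2 n)$ sample points. Yet the phase recursion runs down to pieces with $O(1)$ sample points, since chains of in-children shrink the sample count geometrically. Below that resolution the sample says essentially nothing about how the chosen ball splits the true piece. So neither near-optimality nor balance on $Q$ holds there, and the MPC tree is not distributed like an exact-smallest-ball tree.

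\textbf{The fix: apply the lemmas directly.} The partition cover tree is \emph{defined} as this sample-driven tree, and Lemma~\ref{lem:total-sum} and Lemma~\ref{lem:expected size} are proved for it directly. They use only the following:
\begin{itemize}
\item each separating ball is centered at a point of the current piece, which gives the packing of out-ancestor centers at a fixed scale;
\item the $\bar r/10$ refinement rule, which makes $p_u$ grow geometrically along in-ancestors;
\item $\theta_v$ is uniform and independent of the sample and of the ancestors' choices, which gives separation probability $O(r(B)/r(B_v))$ given that $v$ is reached;
\item the height is $O(\log n)$.
\end{itemize}
The height bound comes from $O(1)$ layers, each of depth $O(\log s)$ measured in sample points, not from true balance at every node. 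So (a) is needed only in this weak form, (b) is not needed at all, and your "main obstacle" disappears.

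\textbf{Separate issue: the depth cap.} Along out-children the sample count shrinks only by a factor $1-4^{-O(\ddim)}$ per level. A leaf cut at depth $\frac12\log s$ may therefore still hold about $N s^{-\gamma}$ points with $\gamma=4^{-O(\ddim)}$, contradicting your $O(N s^{-1/4})$ claim. This still gives $O(1)$ phases for constant $\ddim$. But it is cleaner to recurse until $O(1)$ sample points remain, as the paper does: the depth is $O(\log s)$ anyway, and Lemma~\ref{lem:tree-size} then gives the $O(N/s^{1/4})$ bound directly.
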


\noindent
\textit{Remark.}
Both the partition cover tree and the tree of Andoni et al.~\cite{andoni2014parallel} can be viewed as variants of a net tree in the sense that they preserve several key structural properties of net trees.
However, neither construction is a full-fledged net tree in the classical sense.
A main advantage of our structure is that we can construct it in $O(1)$ MPC rounds even when \emph{the spread is not bounded.} 
In the following, we discuss the precise relationship between the two structures, and which properties they share or differ in.
Andoni et al.~\cite{andoni2014parallel} also used the radius perturbation scheme to construct a tree with key structural properties of an \emph{uncompressed} net-tree in the MPC model.
Their construction assumes that the input point set has bounded spread.
We highlight two main differences between their tree and our partition cover tree below.

\begin{itemize}
    \item First, the desired properties of the resulting tree differ.
    The authors in~\cite{andoni2014parallel} designed their tree as part of the MST construction.
    However, their analysis does not immediately guarantee that the tree satisfies the small cover property.
    In contrast, our construction explicitly aims to achieve the small cover property, which requires new technical insights.
    This is particularly challenging in our setting because the radii of separating balls do not necessarily decrease by half along root-to-leaf paths.
    Such non-monotonicity arises when a node lies outside the separating ball of its ancestor.
    Establishing that our tree satisfies the small cover property is therefore a key technical contribution of this work.
    \item Second, the construction itself differs from theirs.
    The authors in~\cite{andoni2014parallel} recursively subdivide each piece into $2^{O(\ddim)}$ smaller pieces of approximately half the diameter.
    Given the assumption that the spread is polynomially bounded in $n$, the resulting tree has height $O(\log n)$, allowing its construction in $O(1)$ MPC rounds.
    However, in the general case without the bounded spread assumption, this method yields a tree of height $O(n)$, making the same constant-round parallelization infeasible.
    In contrast, our approach selects a separating ball $B$ that balances the size of each piece. 
    This strategy ensures that the resulting tree has height $O(\log n)$ regardless of the spread, enabling its construction in $O(1)$ MPC rounds via an $\alpha$-approximation, as described later.
\end{itemize}

\subparagraph*{Constructing a \textsf{WSPD}.}
Given a partition cover tree $T$, we construct a \textsf{WSPD} of size $(1/\eps)^{O(\ddim)}n\log^2 n$ with at least constant probability.
We use the observation from~\cite{har2011geometric} for Euclidean spaces with a compressed quadtree as the partition tree:
for any pair $(u,v)$ in the \textsf{WSPD} constructed by Algorithm~\ref{alg:har-wspd}, such that the recursion on $\{u,v\}$ is invoked from the recursion on $\{u,\parent{v}\}$, the following conditions hold.
$\dist{}(A_{\parent{u}}, A_{\parent{v}})\leq (1/\eps)\cdot \diam(A_{\parent{v}})$, and $\diam(A_{\parent{v}})\leq \diam(A_{\parent{u}})$, where $A_u$ is the piece corresponding to $u$.
We show that analogous conditions hold for doubling metric spaces when the partition tree is the partition cover tree: for any pair $\{u,v\}$ in the \textsf{WSPD} constructed by Algorithm~\ref{alg:har-wspd}, $\dist{}(p_v,p_{\parent{u}})\leq (8/\eps)r_{\parent{u}}$, and $r_{\parent{u}}\leq r_{\parent{v}}$.
For its proof, see Lemma~\ref{lem:returned pair is a candidate pair} in Section~\ref{sec:doubling}.
We refer to a node $u\in T$ as a \emph{pairing candidate} of a node $v\in T$ if $v$ satisfies these conditions.

To construct a \textsf{WSPD}, we identify the \emph{candidate pairs} consisting of a node in $T$ and one of its pairing candidates.
We perform this efficiently in parallel by grouping $O(\log \localSize^{1/4})$ consecutive levels of $T$.
Within this range of levels, the nodes induce several disjoint subtrees of $T$.
For each pair of subtrees, we identify all candidate pairs formed by one node from each subtree.
This computation fits in a single machine, since every subtree in the range has size at most $O(\localSize^{1/4})$.
However, the total number of candidate pairs from a pair of subtrees may exceed local memory.
To avoid this, we utilize the pre-assignment strategy: we first count the number of such pairs and allocate a sufficient number of machines to construct them.
Furthermore, we can amplify the success probability to $1 - 1/n^{\Omega(1)}$ by increasing the total space by a factor of $O(\log n)$, without increasing the local memory.
It concludes Theorem~\ref{thm:intro-dd-algorithm}.

\subsection{Overview of WSPD Construction in Euclidean Space}\label{sec:overview-Euclidean}
Now we consider a set $P$ of $n$ points in $\mathbb R^d$.
As in the doubling metric case, we compute a \textsf{WSPD} in two steps: first, by constructing a partition tree, and then by constructing a \textsf{WSPD} using this tree.
However, instead of the partition cover tree, we construct a compressed quadtree, which leads to an improvement in the size of the \textsf{WSPD}.
Given the compressed quadtree, we can construct a \textsf{WSPD} similarly to the doubling metric case, while the process is considerably simpler in the Euclidean case.
We can assume that $P \subseteq [0,1)^d$, since all coordinates can be uniformly shifted and scaled to fit within $[0,1)$, which can be done in $O(d)=O(1)$ rounds by computing and broadcasting the maximum value of each coordinate.

As in~\cite{har2011geometric}, we use a compressed quadtree to construct a \textsf{WSPD}.
Consider the axis-aligned unit hypercube $\cell$ containing $P$.
The unit hypercube can be hierarchically partitioned into smaller axis-aligned hypercubes, where each cell is recursively subdivided into $2^d$ cells of half the side length until every cell contains at most one point in $P$.
The \emph{(regular) quadtree} of $P$ is a tree $\qt$ whose nodes correspond to a cell of the hierarchical partition of $\cell$, and whose non-leaves have $2^d$ children, satisfying the following properties.
Let $\nodeCell{v}$ denote the corresponding cell of a node $v$ of $\qt$.
The cells corresponding to the children of a non-leaf $u$ form a subdivision of $\nodeCell{u}$ into cells of half the side length.
Each leaf of $\qt$ corresponds to a cell containing at most one point in $P$. 
Throughout this paper, if $\nodeCell{v}$ of a leaf $v$ contains a point $p\in P$, we consider $\nodeCell{v}$ as $p$ itself.

Note that the height of the quadtree can be unbounded in the worst case if the spread of $P$ is  not bounded. 
Thus, instead of a quadtree, we need to use a \emph{compressed quadtree}, which is a compact representation of the quadtree. 
Intuitively, the compressed quadtree of $P$ is the tree whose nodes are the nodes in a quadtree of $P$ whose corresponding cells contain at least one point in $P$, and no child is corresponding to a cell containing the same subset of $P$.
More specifically, the compressed quadtree of $P$ is obtained from the (regular) quadtree by removing all nodes whose cells are empty (i.e., contain no point of $P$) and contracting any node whose cell contains the same point set as its only child.
Observe that each leaf of the compressed quadtree corresponds to a unique point in $P$.

\begin{figure}
    \centering
    \includegraphics[width=0.75\textwidth]{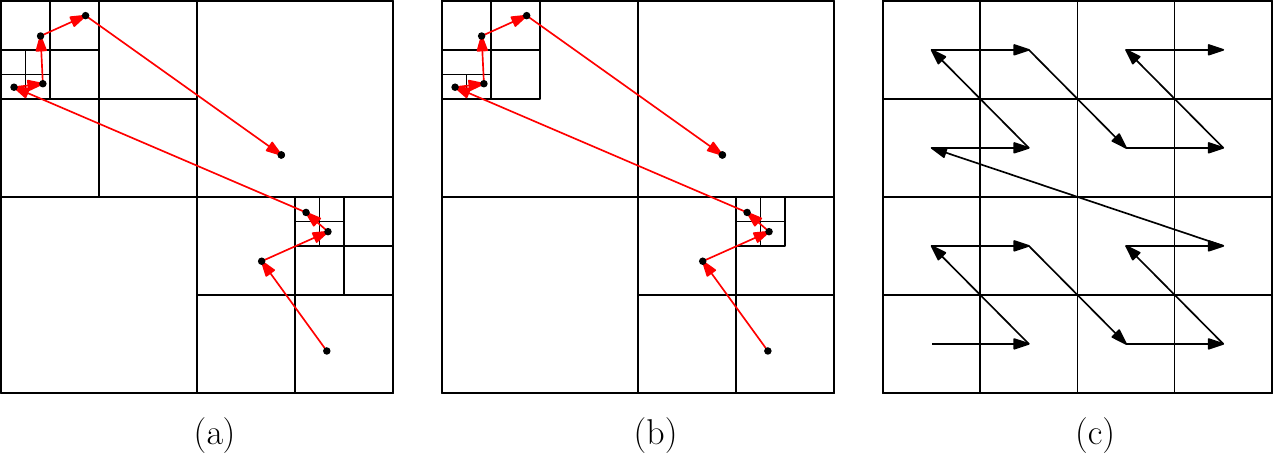}
        \caption{\small (a) Illustration of the quadtree. (b) Illustration of the compressed quadtree.
        In (a) and (b), the ordering of the points colored red is the $\mathcal{Z}$-order.
        (c) Illustration of the $\mathcal{Q}$-order among the cells of the same side length.
    \label{fig:qt}}
\end{figure}

\subparagraph*{Key idea: Using $\mathcal Q$-order and $\mathcal Z$-order.}
To construct the compressed quadtree, we make use of a total ordering of cells of $T$ and points of $P$; see~\cite{har2011geometric}.
Consider a quadtree $\qt$ of $P$, and its depth-first search (DFS) traversal, where the children of each node are always visited in the same pre-defined order.
The \emph{$\mathcal{Q}$-order} of $\qt$ is a total ordering of the cells corresponding to nodes in $\qt$ according to the DFS order of $\qt$.
See Figure~\ref{fig:qt}(c).
Notice that the $\mathcal{Q}$-order on the compressed quadtree is inherited directly from its underlying (regular) quadtree.
Moreover, the total ordering of $P$ obtained by restricting the $\mathcal{Q}$-order to the points of $P$ is known as the \emph{$\mathcal{Z}$-order} of $P$.
See Figure~\ref{fig:qt}(a-b).

As in~\cite{har2011geometric}, we assume that both the floor operation $\lfloor \cdot \rfloor$ and the bit-index operation $\text{bit}_\Delta(\alpha, \beta)$ can be computed in constant time,
where $\text{bit}_\Delta(\alpha, \beta)$ denotes the index of the first bit (after the radix point) at which $\alpha$ and $\beta$ differ when represented in binary.
Under this assumption, both $\mathcal{Q}$- and $\mathcal{Z}$-order comparisons can be performed in $O(1)$ time, using only the coordinates of the centers of cells for the $\mathcal{Q}$-order and point coordinates for the $\mathcal{Z}$-order.  
Similarly, for any two points in $P$, the smallest cell in the compressed quadtree that contains both can be computed in constant time using only their coordinates.  
Furthermore, given a point and a side length, we can compute in constant time the cell in the quadtree of that side length containing the point.

\subparagraph*{MPC construction the compressed quadtree.} We first generate all nodes of $T$, and then identify their parents and children.
We begin by sorting all points in $P$ according to the $\mathcal{Z}$-order.
This can be done efficiently in parallel, as $\mathcal{Z}$-order comparisons rely only on the coordinates of the points.
Next, for each pair of consecutive points under the $\mathcal{Z}$-order, we construct the smallest cell containing both points.
Each such cell corresponds to a non-leaf of the compressed quadtree.
Then, we have all nodes in the compressed quadtree since each point corresponds to a unique leaf.
Subsequently, we compute the parent and children for each node in the compressed quadtree.
In particular, for each $1 \leq i \leq 2^d$, we sort the cells, so that the cell corresponding to a node and that of its $i$-th child under the DFS order appear consecutively.
This can be done by modifying the pre-defined order among siblings, which defines the $\mathcal{Q}$-order.
Consequently, this allows us to associate each node with its $i$-th child in parallel.
We repeat this for all $1 \leq i \leq 2^d$ to associate every node with its children.

\subparagraph*{Constructing a \textsf{WSPD}.}
As for doubling metric spaces, we have the following observation from~\cite{har2011geometric}:
For a pair $(u,v)\in \mathcal{W}$ constructed by Algorithm~\ref{alg:har-wspd} such that the recursion on $\{u,v\}$ is invoked from the recursion on $\{u, \parent{v}\}$,
$\dist{}(\nodeCell{\parent{u}}, \nodeCell{\parent{v}}) \leq (1/\eps)\cdot \diam(\nodeCell{\parent{v}})$, and $\diam({\nodeCell{\parent{v}}})\leq \diam({\nodeCell{\parent{u}}})$.\footnote{To make the description consistent with~\cite{har2011geometric},
instead of using $r_v$ in Algorithm~\ref{alg:har-wspd},
we use $\diam(\cell_v)$ to measure the diameter of the point set corresponding to $v$, and we use $\dist{}(\cell_u,\cell_v)$ to measure the distance between two points from $\cell_u$ and $\cell_v$.}
Here, we refer to a node $u\in T$ as a pairing candidate of a node $v\in T$ if $u$ satisfies the above bounds.
Then, we identify all pairing candidates for each node in $T$.
To do this, we construct auxiliary cells for each node $v$ whose side length equals that of $\nodeCell{\parent{v}}$, which align with the hierarchical partition defining $T$ and lie within distance $(1/\eps)\cdot\diam(\nodeCell{\parent{v}})$ from $\nodeCell{\parent{v}}$.
We then sort all auxiliary cells and all nodes in the $\mathcal{Q}$-order of their corresponding cells, breaking ties by letting the auxiliary cell precede the node corresponding to the identical cell.
Observe that the successor node of an auxiliary cell of $v$ or its children may be a pairing candidate for the children of $v$.
Thus, we can identify all pairing candidates for each node in parallel.
It concludes Theorem~\ref{thm:intro-eclidean-algorithm}.

\begin{theorem}
    Given a set $P$ of $n$ points in $\mathbb R^d$ for a constant $d$, one can construct the compressed quadtree of $P$ of $O(n)$ size and complexity in $O(1)$ rounds, using $O(n)$ total space and $O(n^\delta)$ local space per machine for any $\delta \in (0,1)$.
\end{theorem}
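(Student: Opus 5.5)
We follow the outline given in the overview. We produce all nodes of the compressed quadtree $\qt$ explicitly, and then wire up parent/child pointers by a constant number of global sorts.

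\textbf{Step 1: the nodes.} We normalize $P$ into $[0,1)^d$ by computing and broadcasting coordinate-wise minima and maxima, which takes $O(1)$ rounds. We then sort $P$ in $\mathcal Z$-order using \textsf{Sorting}; each comparison costs $O(1)$ via $\text{bit}_\Delta$ and floor. Let $p_1,\ldots,p_n$ be the sorted sequence. Using \textsf{Indexing}, each $p_i$ learns $p_{i+1}$ in $O(1)$ rounds. It then computes $\cell(p_i,p_{i+1})$, the smallest quadtree cell containing both points.

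The key structural claim is that the set of non-leaf nodes of $\qt$ is exactly $\{\cell(p_i,p_{i+1}) : 1\le i<n\}$, after removing duplicates.
\begin{itemize}
\item Each such cell is a non-leaf. It contains two points lying in different children, so it is not contracted.
\item Conversely, take a non-leaf $v$ with at least two nonempty children. The points of $P\cap\nodeCell{v}$ form a contiguous block in $\mathcal Z$-order, because $\mathcal Z$-order is DFS order. Hence some consecutive pair $p_i,p_{i+1}$ straddles the boundary between two nonempty children of $v$, and $\cell(p_i,p_{i+1})=\nodeCell{v}$.
\end{itemize}
We remove duplicates by sorting the cells in $\mathcal Q$-order and keeping one representative per run. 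Adding the $n$ leaves gives at most $2n-1$ nodes.

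\textbf{Step 2: parents.} For a node $v$ other than the root, $\parent{v}$ is the smallest non-leaf cell that strictly contains $\nodeCell{v}$. We compute it with a single sort, as in the overview.
\begin{itemize}
\item Fix a child index $i\in[2^d]$. Modify the sibling order so that child $i$ is visited first, and sort all node cells in the resulting $\mathcal Q$-order.
\item In this order, the $i$-th child $w$ of $u$ appears immediately after $u$. The reason is that in DFS with child $i$ first, the first node of $\qt$ encountered after $u$ lies inside the $i$-th quadrant of $u$. The topmost nonempty compressed node in that quadrant is exactly the compressed child $w$ of $u$ there, since any ancestor of $w$ strictly below $u$ was contracted and does not exist as a node.
\item However, adjacency alone does not certify a parent–child relation. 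When the $i$-th quadrant of $u$ is empty, the successor of $u$ is some unrelated node. So each adjacent pair $(u,w)$ verifies two things in $O(1)$ time: that $\nodeCell{w}\subset\nodeCell{u}$, and that $\nodeCell{w}$ lies in the $i$-th quadrant of $u$. Both checks use the cell containing $c(\nodeCell{w})$ at side length half that of $u$.
\end{itemize}
We repeat this for all $i\in[2^d]$, which is $2^d=O(1)$ sorts. Each node then knows all its children, and each child learns its parent by a message. Every step is a constant number of \textsf{Sorting}/\textsf{Indexing} calls on $O(n)$ items, so the algorithm uses $O(1)$ rounds, $O(n)$ total space, and $O(n^\delta)$ local space.

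\textbf{Main obstacle.} The delicate point is the correctness argument in Step 2: the rotated-order successor of $u$ is its compressed child whenever that child exists. This needs a careful statement that the modified DFS order remains consistent across all levels. Concretely, the sibling reordering must be applied at every level, so that comparisons stay $O(1)$ via the bit-interleaving characterization. I would handle this by defining the modified order through a fixed permutation of the $d$-bit digit alphabet, applied to the interleaved binary expansion, and then arguing via the contiguity of subtrees in any DFS order.
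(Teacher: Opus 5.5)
Your proposal is correct and follows the paper's proof essentially step for step. Consecutive pairs in $\mathcal Z$-order generate every internal cell, deduplicated by a $\mathcal Q$-order sort (Lemma~\ref{lem:qt-cells}), and $2^d$ sorts in the modified $\mathcal Q(i)$-orders place each node immediately before its child in quadrant $i$ (Lemma~\ref{lem:qt-association}). The one deviation is your quadrant-membership test, where the paper instead checks containment and discards already-found children; when the $i$-th quadrant of an internal node $u$ is empty, its successor is actually another child of $u$ rather than an unrelated node, so the paper's weaker test suffices, though yours is also sound.
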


\section{WSPD in Doubling Metrics} \label{sec:doubling}
In this section, we present an $O(1)$-round MPC algorithm to construct an $(1/\eps)$-\textsf{WSPD} of size $(1/\epsilon)^{O(\ddim)}n\log^2 n$ for a set $P$ of $n$ points in a metric space $\metric$ with a constant doubling dimension $\ddim$. Here, we use $(1/\eps)^{O(\ddim)} n\log^2 n/\localSize$ machines, each with local memory of size $O(\localSize)$, where $\localSize=n^{\delta}$ for any $\delta\in(0,1)$.
Our algorithm is randomized. While it always outputs an $(1/\eps)$-\textsf{WSPD} and guarantees that no overflow occurs, it may fail if the number of machines is insufficient to simulate the computation. The probability of such a failure is bounded by a constant.
We can amplify the success probability to $1-1/n^{\Omega(1)}$ by
increasing the number of machines to $(1/\eps)^{O(\ddim)} n\log^3n/\localSize$.

If a metric space $(X,\dist{})$ has doubling dimension $\ddim$,
any subspace $(P,\dist{}|_{P\times P})$ has doubling dimension at most $2\cdot \ddim$ for any subset $P$ of $X$. Therefore, in the following,
we assume that the given metric space is  $(P,\dist{}|_{P\times P})$ with $|P|=n$.

\subsection{MPC Algorithm for Computing a Partition Cover Tree}\label{sec:net-tree}
In this subsection, we show how to compute a partition cover tree in $O(1)$ rounds. 
Recall that $P$ is distributed across $O(n/\localSize)$ machines each with local memory of size $O(\localSize)$. 
As the output, we represent the tree $T$ as follows.
For each node $v$ of $T$, it is stored with a representative $p_v$ and a radius $r_v$,
where $r_v$ is the largest distance from $p_v$ to any point in $P_v$. The representative $p_v$ is chosen such that 
$p_v\subseteq P_u$ for the parent $u$ of any node $v$. 
The nodes of $T$ are stored in arbitrary order across the machines. 
For each node $v$, we store the associated data $(p_v,r_v)$, along with the IDs of the machines storing its parent and children in $T$.

\subparagraph*{Construction.}
We construct the partition cover tree $T$ in a top-down, \emph{layer-by-layer} manner.
We divide the $O(\log n)$ levels of $T$ into $(\log n/\log s')=O(1)$ \emph{layers}, each consisting of $\log s'$ consecutive levels, where $s' = 2^{O(\ddim)} \cdot \localSize^{1/2} \cdot \log^2 n\leq 2^{O(\ddim)}s=O(s)$.

\begin{figure}
    \centering
    \includegraphics[width=0.75\textwidth]{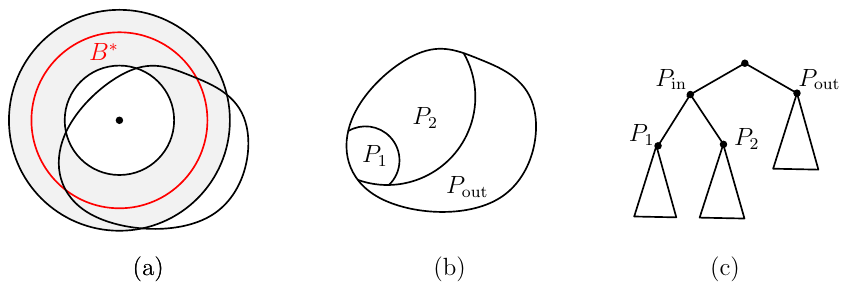}
        \caption{\small 
        (a) We choose a ball $B^*$ in the gray region uniformly at random.
        (b) If the radius of $B^*$ is large, we subdivide $P_\text{in}$ further into two subsets using a ball with small radius.
        (c) We have three nodes, and then construct the trees for each of the three nodes recursively using the same sample $Y$.
    \label{fig:partition-cover}}
\end{figure}

We first show how to compute the topmost layer. The implementation details will be provided in the proof of Lemma~\ref{lem:num-rounds}.  
Let $Y$ be a sample chosen by $s'$ independent draws from $P$. 
Note that $Y$ fits within the local memory of a single machine. 
Then using $Y$, we recursively construct the topmost layer as follows. Let $P'$ be the current piece where its ancestors are already constructed, and let $Y'=P'\cap Y$. Initially, $P'=P$ and $Y'=Y$. 
We compute the smallest-radius ball $B$ centered at a point of $Y'$ containing at least $|Y'|/4^{O(\ddim)}$ points from $Y'$.
Let $c$ and $r$ be the center and radius of $B$, respectively.
Note that the ball centered at $c$ with radius $2r$ 
contains at least $|Y'|/4^{O(\ddim)}$ and at most 
$|Y'|/2^{O(\ddim)}$ by the choice of $B$ and the doubling property.
 We choose a value $\theta$ uniformly at random from $[0,1]$, and let $B^*$ be the ball centered at $c$ with radius $(1+\theta)r$. See Figure~\ref{fig:partition-cover}(a).
 We create the node having $P'$ as its piece and having $B^*$ as its \emph{separating ball}. 
The separating ball partitions $P'$ into two subsets: $P_\text{in}=P'\cap B^*$ 
and $P_\text{out}=P'\setminus B^*$.
Let $\bar r$ be the radius of the separating ball of the parent of the current piece.
For the case that the current piece is $P$,
we let $\bar r$ be the diameter of $P$.
\begin{itemize}
    \item If $r(B^*) \leq \bar r/10$, we recursively construct trees for $P_\text{in}$ and $P_\text{out}$ using $Y \cap P_\text{in}$ and $Y \cap P_\text{out}$, respectively.
We then merge the results by attaching them to the node with piece $P'$. 
    \item Otherwise, we additionally construct one more level as follows. See Figure~\ref{fig:partition-cover}(b--c). 
    We begin by creating the node associated with $P'$, and its two children for $P_\text{in}$ and $P_\text{out}$.
    For the child $u$ with piece $P_\text{in}$, we select an arbitrary point $c'' \in P_\text{in} \cap Y$, and choose a random value $\theta \in [0,1]$.
    We then partition $P_\text{in}$ into two subsets using the ball $B(c'', (1+\theta)\bar r/20)$. These two subsets become the children of $u$. 
    
    Then for each of the nodes corresponding $P_\text{out}$ and the two subsets of $P_\text{in}$,
    we construct the tree by applying the entire procedure recursively using the same sample $Y$.
\end{itemize}
Here, the recursion terminates when the current piece contains at most $O(1)$ points from $Y$, in which case we simply create a leaf node for the set.

Let $T_1$ be the tree obtained in this way using the sample $Y$. 
Recall that in each node of $T_1$, we choose a ball separating the \emph{sampled points} in a balanced way. 
But we need to ensure that it also separates the \emph{input points} in a balanced way. This is proved in the following lemma.
For a node $v$ of $T_1$, let $P_v$ denote the piece of $v$.

\begin{lemma}\label{lem:tree-size}
    The size of $P_v$ is $O(n/s^{1/4})$ with high probability
    for all leaves $v$ of $T_1$.
\end{lemma}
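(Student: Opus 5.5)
The plan is to use the sample $Y$ as an $\alpha$-approximation of a range space with few ranges, and to invoke Lemma~\ref{thm:a-approx-sampling-poly}. The key point is that every piece $P_v$ of $T_1$ is a range of a fixed form. It is cut out of $P$ by the separating balls on the root-to-$v$ path, possibly including the extra balls $B(c'',(1+\theta)\bar r/20)$. So $P_v=\bigcap_{B\in\mathcal B}B\setminus\bigcup_{B'\in\mathcal B'}B'$, restricted to $P$, where $\mathcal B$ collects the balls that $v$ lies inside and $\mathcal B'$ the balls it lies outside.

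First, I would bound the depth of $T_1$ by $O(\log n)$, in fact $O(\log s')$. Each split sends at least a $1/4^{O(\ddim)}$ fraction of $Y'$ into $B^*$. By the doubling argument recalled in the construction, $B(c,2r)$ contains at most $|Y'|/2^{O(\ddim)}$ points, so both $P_\text{in}$ and $P_\text{out}$ receive at most a constant fraction $1-2^{-O(\ddim)}$ of $Y'$. The extra split by $B(c'',\cdot)$ only adds one level and never increases sizes. Hence $|Y\cap P_v|$ decreases geometrically every $O(1)$ levels, and the depth is $O(\log |Y|)=O(\log n)$. It follows that $|\mathcal B|+|\mathcal B'|=O(\log n)$.

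Second, I would bound the number of ranges. A ball restricted to $P$ is determined by its center $c\in P$ and by which point of $P$ marks the radius threshold, so there are at most $n^2$ distinct balls up to their trace on $P$. A range built from $O(\log n)$ balls therefore gives $|\mathcal R|\le n^{O(\log n)}$, and $\log|\mathcal R|=O(\log^2 n)$. One subtlety matters here. The balls are chosen after seeing $Y$, with random radii, so $Y$ must be an approximation for every such range simultaneously, not for a fixed one. This is exactly why I would enumerate all ranges of this form over $P$ in advance, independently of $Y$ and $\theta$. With $\alpha=s^{-1/4}$ and $\psi=n^{-c}$, Lemma~\ref{thm:a-approx-sampling-poly} needs $m=O(s^{1/2}(\log^2 n+\log n))$ draws, which matches $s'=2^{O(\ddim)}s^{1/2}\log^2 n$. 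So with probability $1-n^{-\Omega(1)}$, $Y$ is an $s^{-1/4}$-approximation of $(P,\mathcal R)$.

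Finally, consider a leaf $v$, which satisfies $|Y\cap P_v|=O(1)$. Since $P_v\in\mathcal R$, the approximation property gives $|P_v|/n\le |Y\cap P_v|/|Y|+s^{-1/4}=O(1/s')+s^{-1/4}=O(s^{-1/4})$, so $|P_v|=O(n/s^{1/4})$ for all leaves at once. I expect the main obstacle to be the depth bound, specifically handling the extra level and the termination rule cleanly. A second obstacle is making sure that the multiset sampling (independent draws, so $Y$ may contain repeats) is handled consistently with $|Y\cap P_v|$ counted with multiplicity.
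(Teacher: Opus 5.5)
Your proposal is correct and follows essentially the same route as the paper. The paper also observes that every leaf piece $P_v$ is a range of the form $\bigcap_{B\in\mathcal B}B\setminus\bigcup_{B'\in\mathcal B'}B'$ over $O(\log n)$ balls, bounds the family by $n^{O(\log n)}$ ranges, applies Lemma~\ref{thm:a-approx-sampling-poly} with $s'$ draws to make $Y$ an $s^{-1/4}$-approximation w.h.p., and then bounds every leaf at once. You go further only by spelling out what the paper leaves implicit: the deterministic depth bound, the counting of ball traces on $P$, and the need to fix the range family independently of $Y$ and $\theta$.
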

\begin{proof}
Let $\mathcal R = \{ \cap_{B\in\mathcal B} B \setminus \cup_{B'\in \mathcal B'}B' \mid 
\mathcal B \text{ and } \mathcal B' \text{ are sets of at most $2^{O(\ddim)}(\log n)$ balls}
\}$. By construction, $P_v$ is contained in $\mathcal R$ for any leaf $v$ of $T_1$. 
Although the range space defined by balls in a doubling metric space does not necessarily have a bounded VC dimension~\cite{huang2018epsilon}, 
it has a small-sized $(1/s^{1/4})$-approximation as the number of ranges is small. By Lemma~\ref{thm:a-approx-sampling-poly}, $Y$ is 
a $(1/{s}^{1/4})$-approximation of $(P,\mathcal R)$ with high probability
since the number of ranges of $\mathcal R$ is $n^{2^{O(\ddim)}\log n}=n^{O(\log n)}$. Conditioned on the event that $Y$ is a 
 $(1/{s}^{1/4})$-approximation of $(P,\mathcal R)$, 
$P_v$ has $O(n/s^{1/4})$ points of $P$ for all leaves $v$ of $T_1$.
\end{proof}

In this way, we can compute the first layer of the partition cover tree of $P$. 
For the next layers, we recursively construct the tree starting from each leaf of $T_1$ until a leaf corresponds to a piece consisting of a single point. By Lemma~\ref{lem:tree-size}, the number of layers is $O(1)$ with high probability.

\begin{lemma}\label{lem:num-rounds}
    The partition cover tree can be computed in $O(1)$ rounds with high probability. 
\end{lemma}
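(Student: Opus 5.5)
We will show that a single layer is built in $O(1)$ rounds, and then argue that only $O(1)$ layers are needed. A layer is built as follows.

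\textbf{Step 1 (sampling and local construction).} Each machine independently samples points of $P$ so that the total number of draws is $s'$. The samples are sent to one machine $M_1$; this takes $O(1)$ rounds since $s'=O(s)$. $M_1$ then runs the recursive procedure on $Y$ locally. This produces $T_1$: every node's separating ball (center and radius $(1+\theta)r$), the random $\theta$ values, and the secondary balls $B(c'',(1+\theta)\bar r/20)$. The tree has $O(s')$ nodes, each described by $O(1)$ words. Since recursion stops when a piece has $O(1)$ sampled points and every split is balanced on $Y'$, the height of $T_1$ is $O(\log s')$. So each leaf is described by a root-to-leaf path of $O(\log s')=O(\log n)$ ball tests.

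\textbf{Step 2 (routing input points).} We cannot send all of $T_1$ to every machine, because Broadcast only allows messages of size $O(s^{\alpha})$. We handle this by choosing $s'$ with exponent $1/2<\alpha<1$ (up to $\log$ factors absorbed by taking $\localSize=n^\delta$ with slightly adjusted constants). Then Broadcast sends the whole description of $T_1$ to all machines in $O(1)$ rounds. Alternatively, one could use a constant number of rounds of tree broadcast with fan-out $s^{\Omega(1)}$. Each machine then locally descends $T_1$ for each of its points, evaluating distances to the stored centers, and labels the point with its leaf ID in $O(1)$ time per level. Sorting by leaf ID, together with Indexing, groups each leaf's piece on contiguous machines. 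The radii $r_v=\max_{q\in P_v}\dist{}(p_v,q)$ for internal nodes are then computed with Minimum/Maximum. For each point and each ancestor on its path, we emit the pair (node, distance to $p_v$) and take the maximum per node. This produces $O(n\log n)$ items, which fits the space budget up to the stated bounds; otherwise we compute it on sorted groups. Parent and child machine IDs are recorded by sorting node records.

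\textbf{Step 3 (recursion across layers).} By Lemma~\ref{lem:tree-size}, with high probability every leaf piece of $T_1$ has size $O(n/s^{1/4})$. On each leaf piece we repeat the layer procedure in parallel, treating the pieces as independent subproblems on disjoint machine groups. The sample sizes sum to $O(n)$ in total, since a piece of size $n_i$ needs at most $\min(n_i,s')$ samples. Pieces already of size at most $s$ are finished locally on one machine in one round. After $j$ layers the piece sizes are $O(n/s^{j/4})$ w.h.p. Since $s=n^\delta$, after $4/\delta+O(1)=O(1)$ layers every piece fits in one machine. A union bound over the at most $n$ subproblems keeps the total failure probability at $1/n^{\Omega(1)}$, because Lemma~\ref{thm:a-approx-sampling-poly} gives failure probability $1/\mathrm{poly}(n)$ per subproblem when $\psi=n^{-c}$.

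\textbf{Main obstacle.} The main obstacle is Step 2: making sure the description of $T_1$ reaches every machine holding points of the current piece within the per-round communication limits, and that this holds for many subproblems at once. My first approach is to choose $s'$ sublinear in $s$ (for example $s^{1/2}\log^2 n$, as defined) so that Broadcast to a contiguous machine range applies directly. If that does not suffice, I would replicate the tree via a fan-out-$s^{\Omega(1)}$ broadcast tree of constant depth.
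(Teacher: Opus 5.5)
Your proposal is correct and follows essentially the same route as the paper: draw the sample $Y$, build the first layer on a single machine, broadcast its $O(s')$ centers and radii to all machines (this fits the $O(s^{\alpha})$ Broadcast limit because $s'=\tilde O(s^{1/2})$, a point the paper leaves implicit), let each point locate its leaf, sort by leaf, and recurse for $O(1)$ layers using Lemma~\ref{lem:tree-size}. The extra bookkeeping you add (per-subproblem sampling, the union bound over subproblems, computing $r_v$) is sound; note only that computing $r_v$ from per-machine partial maxima, rather than emitting $O(n\log n)$ items, is the variant that keeps the $O(n)$ total space claimed in Theorem~\ref{thm:partition-cover-tree-construction}.
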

\begin{proof}
It is sufficient to show that the first layer can be computed in $O(1)$ rounds as the number of layers is $O(1)$ with high probability. 
We can compute a sample $Y$ in $O(1)$ rounds~\cite{agarwal2016parallel}.
In a single machine, we first compute the hierarchical partition of $Y$. 
Each subset in the partition can be specified by its center, say $c$, and its radius, say $r$. Since they are determined by $Y$ only, we can compute it in a single machine. 
By broadcasting the centers and the radii to all machines, and 
we can compute the leaf of the first layer of the tree containing each point of $P$ in $O(1)$ rounds for all points in $P$.
Then we sort the points of $P$ with respect to the pieces of the leaves containing them in the lexicographical order in $O(1)$ rounds. In this way, we can place the points of $P$ contained in the same leaf in consecutive machines in $O(1)$ rounds.
Therefore, we can recursively apply the algorithm on each leaf of $T_1$.
\end{proof}


\subparagraph*{Analysis.}
We show that $T$ satisfies the small cover property.
Recall that $T$ is constructed randomly. To analyze the behavior of $T$, we represent $T$ as a subgraph of $T_\text{init}$,
where $T_\text{init}$ is 
the full binary tree with $n$ leaves, 
such that each node of $T$ corresponds to a node of $T_\text{init}$.
In the following, when we refer to a node of a random tree $T$, we indeed means its corresponding node of $T_\text{init}$.

Recall that a cover of a ball $B$ in the metric space induced by $T$ 
is a set $\{v_1,v_2,\ldots,v_k\}$  of nodes of $T$ such that 
the union of $P_{v_1},\ldots,P_{v_k}$ contains $B$,
and 
$\diam(P_{v_i})\leq \diam(B)/10$ for all nodes $v_i$.
 In particular, we can compute a cover of $B$ by traversing
$T$ from the root towards leaves. 
    Suppose that we are at a node $v$ 
    associated with a separating ball $B_v^*$. That is, the piece of a child of $v$ is $P_v\cap B_v^*$, and the piece of the other child of $v$ is $P_v\setminus B_v^*$. 
    We refer to the former as the \emph{in-child} of $v$,
    and the latter as the \emph{out-child} of $v$.
    We move to zero, one, or two of its children as follows. For each  child $w$ of $v$ with $P_w\cap B\neq\emptyset$, 
\begin{itemize} 
    \item we move to $w$ if it is the out-child of $v$, and 
    \item we move to $w$ if it is the in-child and
        $r(B_v^*)> \diam(B)/10$. 
\end{itemize}
At node $v$, we do not move to the in-child $w$ of $v$ if $P_w\cap B\neq\emptyset$ and $r(B_v^*) \leq \diam(B)/10$. Instead, we simply \emph{keep} this child as a node for the cover of $B$.
Note that the set of all kept nodes and all leaves we reach at the end of the traversal is a cover of $B$ induced by $T$. 
This is because $r(B_v^*)\geq \diam(P_w)$ for the in-child $w$ of $v$. 
We call the cover of $B$ constructed in this way the \emph{canonical cover} of $B$ induced by $T$.

We show that the expected size of the canonical cover of a fixed ball $B$ is $O(\log n)$. 
We say that $B_v^*$ \emph{separates} $B$ if $(B\cap P_v)\cap B_v^*\neq\emptyset$ and $(B  \cap P_v)\setminus B_v^*\neq\emptyset$.
Let $v$ be a node of $T$ with a separating ball $B_v^*$.
Then conditioned on visiting $v$ during the traversal, 
the probability that $B_v^*$ separates $B$ is at most $r(B)/r(B_v)$, where
$B_v$ is the smallest-radius ball containing at least $|Y|/4^{O(\ddim)}$ points. Recall that $r(B_v)\leq r(B_v^*)\leq 2r(B_v)$. 
Here, $r(B_v)$ is also a random variable that is determined by the random choices made for the ancestors of $v$, but it is independent to the random choice made for $v$.

The main challenge is to bound the number of occurrences in which the traversal moves to both children on a node, since such events increase the size of the canonical cover.
To handle this, we associate every visited node $v$ with a random variable $p_v$ which upper bounds the probability that $B^*_v$ separates $B$.
We first identify several cases where we always move to at most one child regardless of the random value chosen for a vertex $v$ visited during the traversal.
The first case is when $2r(B_v) \leq \diam(B)/10$; in this case, we move to the out-child only. Even if the piece of the in-child intersects $B$, we do not move to the in-child; instead, we simply keep this piece. 
The second case is when $B_v$ contains $B$; in this case, we either move to the in-child only, or do not move at all and simply keep the in-child. 
The third case is when $B\cap 2B_v =\emptyset$; in this case, we move to the out-child only,
where $2B_v$ denote the ball centered at $c(B_v)$ with radius $2r(B_v)$. 
We let $p_v=0$ if $v$ belongs to one of the three cases. Otherwise, we let $p_v=r(B)/r(B_v)$.
Just as $r(B_v)$ is a random variable that is defined by the random choices made for the ancestors of $v$, so is $p_v$.
Note that the conditional probability that we move to both children of $v$, given that $v$ is visited,  is upper bounded by the expected value of $p_v$ over the distribution conditioned on visiting $v$.

\begin{lemma}\label{lem:total-sum}
    Let $B$ be a fixed ball in the metric space, and $T$ be a fixed constructed tree. 
    For any root-leaf path $\pi$ of $T$, the sum of $p_u$ for all nodes $u$ in $\pi$ is $O(1)$.
\end{lemma}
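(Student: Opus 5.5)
Fix the path $\pi = (u_1, u_2, \ldots, u_k)$ from the root. Only nodes with $p_{u}\neq 0$ contribute, and for those $p_u = r(B)/r(B_u)$ with $\diam(B)/20 < r(B_u)$; otherwise $2r(B_u)\le \diam(B)/10$ and we would be in the first case. My plan is to show that the radii $r(B_u)$ of the contributing nodes, read from the leaf end of $\pi$ upward, grow geometrically. Since the smallest such radius exceeds $\diam(B)/20 \ge r(B)/20$ up to constants, the sum $\sum_u r(B)/r(B_u)$ is then dominated by a geometric series and is $O(1)$.

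The key structural step is to follow, along $\pi$, the sequence of separating-ball radii and the in/out transitions. When $\pi$ moves from $u$ to its in-child, the child's piece lies inside $B_u^*$, so its diameter is at most $2r(B_u^*) \le 4r(B_u)$. The construction then forces a shrink. If $r(B^*_{\text{child}}) \le r(B^*_u)/10$, the radius drops by a factor of $10$. Otherwise the extra level is triggered, and the piece is cut by a ball of radius about $\bar r/20$, so again the radius shrinks by a constant factor. I would establish the invariant that along $\pi$, each time the path enters an in-child, every later separating radius is at most a constant fraction of the current one. The case needing care is runs of out-child moves, where radii need not decrease and can even increase, which is the non-monotonicity the overview warns about.

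To handle out-child runs, I would use the second and third zero cases. While the path stays in out-children of nodes $u$ with $p_u\neq 0$, the ball $B$ meets $2B_u$ but is not contained in $B_u$. I claim the number of such consecutive nodes with comparable radii is bounded by $2^{O(\ddim)}$. Each out-step removes at least a $4^{-O(\ddim)}$ fraction of the points of the current piece lying near $B$, or equivalently removes a ball of radius at least $r(B_u)$ intersecting the $O(r(B_u))$-neighbourhood of $B$. By the packing property, only $2^{O(\ddim)}$ disjoint such balls of radius in $[\rho, 2\rho]$ fit near $B$. This holds because the minimality of $B_u$ means that later balls of similar radius must be centered at distance at least about $\rho$ from earlier centers, since earlier regions have already been removed.

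Grouping the contributing nodes by dyadic scale $r(B_u)\in[2^j,2^{j+1})$, each scale therefore contains at most $2^{O(\ddim)}$ nodes. Scales are bounded below by $\diam(B)/20$, so the total is at most $2^{O(\ddim)}\sum_{j} r(B)/2^j = O(1)$ for constant $\ddim$.

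The main obstacle is the per-scale counting in the previous paragraph. Because of out-children the radii are not monotone, so a scale could in principle be revisited after the path has passed through smaller radii. I would first try to rule this out by showing that once the path enters an in-child, all subsequent radii are bounded by the enclosing separating radius; this follows from the $\bar r/10$ rule and the extra $\bar r/20$ level. With that bound, revisits of a scale can only occur within a single maximal out-run below a fixed in-ancestor, and the packing argument applies within that run.
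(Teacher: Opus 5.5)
Your architecture is the paper's. You keep only nodes with $r(B_u)>\diam(B)/20$ and bucket them by dyadic scale. You bound out-ancestors per scale by packing: later centers lie outside earlier separating balls, and they lie near $B$ by the third zero case. In-ancestors then decay geometrically. However, your resolution of the ``main obstacle'' is wrong, because a scale \emph{can} recur in different out-runs. An out-step leaves the piece $P_v\setminus B_v^*$, whose diameter is unrelated to $r(B_v^*)$. So an out-ancestor at scale $\rho$ may be followed by an in-ancestor of radius much larger than $\rho$, and below it your invariant still permits radius $\rho$ again. A packing bound applied run by run then says nothing about how many runs meet scale $\rho$. No run structure is needed, and the paper uses none. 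If $v_l$ is an out-ancestor above $v_j$ on $\pi$, then $c(B_{v_j})\in P_{v_j}\subseteq P_{v_l}\setminus B_{v_l}^*$ whatever in-steps occur in between. Hence all out-ancestors of $\pi$ at one scale have pairwise separated centers, and a single packing argument covers all of them. This containment, not the minimality of $B_u$, is what separates the centers.

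Second, your in-step invariant does not follow from the $\bar r/10$ rule. In the ``otherwise'' branch, the in-child $w$ of $u$ keeps its own separating ball of radius $>r(B_u^*)/10$, and the extra $\bar r/20$ ball only subdivides $P_w\cap B_w^*$. For nodes $x$ reached by out-steps inside $B_u^*$, the rule gives nothing better than $r(B_x)\le\diam(P_x)\le 2r(B_u^*)$. So the rule alone does not exclude later in-ancestors at the same scale as $u$, and if that happened repeatedly the sum would grow with the height of $T$. The invariant is true, but the reason is minimality plus doubling:
\begin{itemize}
\item Every node $x$ below $u$ on $\pi$ has $Y\cap P_x\subseteq B_u^*$.
\item Cover $B_u^*$ by $2^{O(t\cdot\ddim)}$ balls of radius $r(B_u^*)/2^t$. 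One of them contains a $2^{-O(t\cdot\ddim)}$ fraction of $Y\cap P_x$; pick a sample point $y$ in it.
\item Then $B(y,2^{1-t}r(B_u^*))$ is a candidate ball. With the $4^{O(\ddim)}$ threshold chosen large enough, minimality gives $r(B_x)\le 2^{1-t}r(B_u^*)$.
\end{itemize}
The paper itself only says ``by construction'' at this point, so this is the step where you must supply the argument. With both repairs, each dyadic scale carries $2^{O(\ddim)}$ contributing nodes, and your geometric sum gives $O(1)$.
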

\begin{proof}
    Let $v$ be an endpoint of $\pi$ which is a leaf of $T$. 
    For a node $u$, let $B_u^*$ be the separating ball of $u$,
    and let $B_u$ be the ball we have before the perturbation.
    For an ancestor $u$ of $v$,
    we call $u$ an \emph{in-ancestor} if $P_{v}$ is contained in $B_{u}^*$, and an \emph{out-ancestor} if $P_v$ lies outside of $B_{u}^*$. 
    Note that $P_v=\cap_{u\in \mathcal B} B_u^* \setminus (\cup_{B_{u'}\in\mathcal B'} B_{u'}^*)$, where 
    $\mathcal B$ denotes the set of the in-ancestors, and $\mathcal B'$ denotes the set of the out-ancestors.

    First, the sum of $p_u$ for all out-ancestors  
    is $2^{O(\ddim)}$. 
    To see this, we partition the ranges $[r,\infty]$ into ranges $I_0,I_1,\ldots$ where $I_i= [2^ir,2^{i+1}r]$ and $r=\diam(B)/20$. 
    For any out-ancestor $u'$ with $r(B_{u'})\in I_i$,
    we have $p_{u'}\leq 10/2^{i}$. 
    For any out-ancestor $u'$ with $r(B_{u'})\notin [r,\infty]$,
    we have $p_{u'}=0$. 
    We show that for each index $i$, there are $2^{O(\ddim)}$ out-ancestors $u'$ 
    with $r(B_{u'})\in I_i$ and $p_{u'}>0$. 
    Let $v_1,v_2,\ldots,v_k$ be the out-ancestors with $r(B_{v_j})\in I_i$ and $p_{v_j}>0$ for all indices $j$ sorted
    along $\pi$ from the root. Let $B_j=B_{v_j}$.
    Because $v_1,v_2,\ldots,v_{j-1}$ are out-ancestors of $v_j$, the center of $B_j$ is contained in 
    $P\setminus \{B_1\cup B_2\cup\ldots\cup B_{j-1}\}$.
    Thus no two centers of all such balls are within distance at most $2^ir$.
    On the other hand, since $p_{v_j}>0$, the distance
    between $c(B)$ and $c(B_j)$ is at most $2r(B_j)\leq 2^{i+1}r$ (see the third condition for $p_v=0$). 
    By the property of the doubling space,
    the number of such balls is $2^{O(\ddim)}$.
    Therefore, the sum of $p_u$ for all out-ancestors is $20\cdot 2^{O(\ddim)}=O(1)$.
    
    Second, the sum of $p_u$ for all in-ancestors is at most a constant.
    To see this, we consider the following two cases. 
    In the first case, consider
    all in-ancestors $p_u$ such that $p_u$ increases by at least a constant factor from its parent.
    Since $p_u$ is always at most {$10$}, the sum of $p_u$ for all such nodes $u$ is at most a constant.
    
    Now consider the remaining in-ancestors $u$. Fix an arbitrary subpath $\pi'$ of $\pi$ consisting of in-ancestors of length, say $k\geq 2$.
    By construction, for at least $\lfloor k/2\rfloor$ in-ancestors $p_u$ in $\pi'$, $p_u$ increases by at least constant factor from its parent. As this holds for any such path of length at least two, the sum of $p_u$ in this case equals the sum of $p_u$ for all in-ancestors in the above case and all out-ancestors, and thus it is also $O(1)$.
\end{proof}

\begin{lemma}\label{lem:expected size}
     For any ball $B$, the expected size of the canonical cover of $B$ induced by $T$ is $O(\log n)$.
\end{lemma}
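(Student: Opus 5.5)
We bound the size of the canonical cover by charging its elements to the points where the traversal branches. The traversal from the root explores a subtree $T_B$ of $T$. Every element of the canonical cover is either a kept in-child or a leaf reached at the end, so it is a child of a visited node. Hence the cover size is at most $2|L(T_B)|$, where $L(T_B)$ is the set of leaves of the explored subtree, including kept children.

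The number of leaves of a rooted tree is one plus the number of its binary branchings. A node branches exactly when we move to both children, and both children must meet $B$, so this happens only if $B_v^*$ separates $B$. Summing over visited nodes gives
\[
  \mathbb{E}[\,|\text{cover}|\,] \;\le\; 2+2\sum_{v}\Pr[v \text{ visited}]\cdot\Pr[B_v^* \text{ separates } B \mid v \text{ visited}] \;\le\; 2+2\,\mathbb{E}\Big[\sum_{v \text{ visited}} p_v\Big].
\]
The last step uses the observation stated before Lemma~\ref{lem:total-sum}: conditioned on the history, the separation probability at $v$ is at most $p_v$. This is because $p_v$ is determined by the ancestors' randomness, and $\theta_v$ is independent of it. The quantity $\sum_{v\text{ visited}}p_v$ is to be bounded deterministically for any realization of $T$.

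For the deterministic bound, decompose the explored subtree $T_B$ into root-to-leaf paths. The key claim is that the nodes of $T_B$ at depth $i$ have disjoint pieces, and at most $O(1)$ of them have $p_v>0$ on each level once we account for the branching structure. I would avoid that route and argue by induction on the explored tree instead. Define $F(v)$ as the expected number of leaves below a visited node $v$. Then $F(v)\le 1 + p_v F_{\max}$, heuristically, which does not close directly.

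The cleaner way is to use Lemma~\ref{lem:total-sum} along each path together with the height bound. Every branching node on a path $\pi$ contributes to $\sum_{u\in\pi}p_u=O(1)$. The expected number of branchings along the path followed by a uniformly chosen leaf of $T_B$ is then $O(1)$ per path. The total count of explored leaves is at most $\prod_{u\in\pi}(1+\mathbf{1}[\text{branch at }u])$ over a path, whose expectation is at most $\prod_{u\in\pi}(1+p_u)\le e^{\sum p_u}=O(1)$. For this product I would take conditional expectations from the leaf upward, using independence of $\theta_u$ from the ancestors.

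To recover the $O(\log n)$ bound robustly, I would instead charge one unit per level. The height of $T$ is $O(\log n)$, since each separating ball keeps a constant fraction of the points. At each level, the visited nodes with $p_v>0$ split into two kinds. Those with $B$ intersecting $2B_v$ and with $2B_v\not\supseteq B$ number at most $2^{O(\ddim)}$, because their pieces are disjoint and a packing argument applies as in the out-ancestor argument of Lemma~\ref{lem:total-sum}. Those with $B\subseteq 2B_v$ but $B\not\subseteq B_v$ are handled the same way. This gives an expected $O(1)$ branchings per level, which sums to $O(\log n)$ over the $O(\log n)$ levels.

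The main obstacle is this per-level packing step. Pieces at the same level can have very different radii $r(B_v)$, so a grouping by dyadic scale, as in Lemma~\ref{lem:total-sum}, must be used. In that grouping the weight $p_v\le r(B)/r(B_v)$ decays geometrically in the scale, which makes the sum $O(1)$ per level.
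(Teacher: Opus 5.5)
\textbf{Genuine gap.} The first quantitative step fails, and every later variant inherits the problem. In the explored tree $T_B$, a node $v$ has two children not only when the traversal moves to both children. It also has two when $r(B_v^*)\le\diam(B)/10$ and both children meet $B$: then the in-child is \emph{kept} and the out-child is visited. So the leaf count of $T_B$ is $1+\#\{v : B_v^*\text{ separates }B\}$, and you need $\Pr[B_v^*\text{ separates }B\mid\text{history}]\le p_v$.

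That inequality is false. The paper sets $p_v=0$ whenever $2r(B_v)\le\diam(B)/10$, even though $B_v^*$ may then separate $B$ with probability one. The observation preceding Lemma~\ref{lem:total-sum} only says that $p_v$ bounds the probability of \emph{moving to both} children. The kept children produced at such nodes are charged to nothing, and they do occur. Suppose $B$ contains all of $P$, and all but $O(1)$ points of $P$ form a cluster of diameter far below $\diam(B)/20$. Then every node on the out-chain through the cluster has $p_v=0$ and keeps an in-child. Each step removes at most a constant fraction of the piece, so this chain has $\Omega(\log n)$ nodes. The canonical cover therefore has $\Theta(\log n)$ elements while $\sum_{v\text{ visited}}p_v=O(1)$. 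This refutes your displayed inequality. It also refutes your fourth paragraph's claim that the explored leaves, kept children included, number $e^{O(1)}$ in expectation: no argument can give $O(1)$ for the canonical cover.

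\textbf{The per-level fallback does not repair this.} The packing step in Lemma~\ref{lem:total-sum} relies on the structure of a single root-to-leaf path: each out-ancestor's center lies outside the earlier out-ancestors' balls, so same-scale centers are pairwise far apart. Nodes on one level lie in different branches. Disjointness of their pieces gives no lower bound on the distances between their centers $c(B_v)$, so your $2^{O(\ddim)}$-per-scale-per-level count is unsupported. Even granting $O(1)$ expected branchings per level, you get $O(\log n)$ branches, each up to the height in length. Counting the kept children along them then yields only an $O(\log^2 n)$ bound.

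\textbf{The missing idea (the paper's bookkeeping).} Count \emph{visited} nodes instead of leaves of $T_B$.
\begin{itemize}
\item Each visited node has at most one kept child, so the cover is at most twice the number of visited nodes.
\item The visited nodes lie on at most $1+\#\{\text{move-to-both events}\}$ branches, each of length $O(\log n)$.
\item It remains to show that the expected number of branches is $O(1)$.
\end{itemize}
The last step follows from your conditional-expectation idea, run as an induction over the tree rather than a product along one path, and applied to move-to-both events. Let $S_v$ be the sum of $p_u$ over the strict ancestors $u$ of $v$, and let $C$ be the constant of Lemma~\ref{lem:total-sum}. Then $\mathbb{E}[\text{branches below }v\mid\text{history}]\le(1+p_v)e^{C-S_v-p_v}\le e^{C-S_v}$. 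This is what the paper's bound $\mathbb{E}_v[X_v]\le h_v e^{\hat A_v}$ encodes. The height factor $h_v$ absorbs exactly the kept children you tried to charge to $p_v$.
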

\begin{proof}
    For a node $v$ of $T_\text{init}$, let 
    $X_v$ denote the number of nodes visited by the traversal 
    in the subtree of $T$ rooted at $v$. 
    If $v$ does not appear in $T$, let $X_v=0$.  
    In the following, we let $\mathbb{E}_v[X_v]$ denote
    $\mathbb{E}[X_v \mid v \text{ is visited}]$.
    Then we have
    \[
    \mathbb{E}_v[X_v] \leq \mathbb{E}_\ell[X_\ell] \cdot (\hat p_v+ \hat p_{\ell,r}) + \mathbb{E}_r[X_r]\cdot (\hat p_v+ \hat p_{r,\ell})+1,
    \]
    where $\ell$ and $r$ are the children of $v$,
    and $\hat p_{a,b}$ is the probability that we visit $a$ but not $b$ 
    conditioned on visiting $v$,
    and $\hat p_v$ is the probability that we visit both $\ell$ and $r$ conditioned on visiting $v$.
    Note that $\hat p_{r,\ell}+\hat p_{\ell,r} \leq 1-\hat p_v$.
  For a node $v$, let $\hat A_v = \max_{\pi} \sum_{w \in \pi} \hat p_w$
and $A_v = \max_{\pi} \sum_{w \in \pi} p_w$, where the maximum is taken over all paths $\pi$ from $v$ to its descendants.
Note that $\hat A_v$ is a scalar, and $A_v$ is a random variable.

To show that $\hat A_v = O(1)$, we fix a path $\pi$ from $v$ to a leaf and show that the sum of $\hat p_w$ over all $w \in \pi$ is $O(1)$.
For each node $w \in \pi$, let $\pi[w]$ denote the subpath from $v$ to $w$, i.e., the ancestors of $w$ (including $w$) in $\pi$. 
Define $A_{\pi,w} = \sum_{w' \in \pi[w]} p_{w'}$.
Let $\mathcal{E}_w$ be the event that all ancestors and $w$ are visited, and no descendant of $w$ in $\pi$ is visited.
Recall that $\hat p_w \leq \mathbb{E}[p_w \mid \text{visited }w]$.
Then we have 
\begin{align*}
    \sum_{w \in \pi} \hat{p}_w 
    &\leq \sum_{w \in \pi} \mathbb{E}[p_w \mid \text{visited }w]  
    \leq \sum_{w \in \pi} \mathbb{E}[ A_{\pi,w} \mid \mathcal{E}_w] \\
    &\leq \sum_{w \in \pi} A_{\pi,w} \cdot \Pr[\mathcal{E}_w]
    \leq O(1) \cdot \sum_{w \in \pi} \Pr[\mathcal{E}_w] \\
    &\leq O(1)
\end{align*}
The second inequality holds by decomposing the event “visited $v$” into the finer disjoint events $\mathcal{E}_{w}$'s.
The third inequality follows from the definition of conditional expectation.
The fourth inequality uses Lemma~\ref{lem:total-sum}, which states that $A_{\pi,w} = O(1)$ if $\prob[\mathcal E_w]>0$.
Finally, the fifth inequality holds since the events $\mathcal{E}_w$ are disjoint, and thus the total probability is at most 1.
    
    Now we claim that $\mathbb{E}_v[X_v] \leq h_v\cdot e^{\hat A_v}$, where $h_v$ is the height of the subtree of $T_\text{init}$ rooted at $v$.
    We prove this using induction on depth. If $v$ is a leaf, $X_v\leq 1$, and thus we are done.
    If $v$ is not a leaf, then $\hat A_\ell\leq \hat A_v- \hat p_v$ and $\hat A_r\leq \hat A_v-\hat p_v$ by definition.
    Then we have
   \begin{align*}
    \mathbb{E}_v[X_v] &\leq (h_v-1)\cdot e^{\hat A_v -\hat p_v} \cdot (2\hat p_v+\hat p_{\ell,r}+\hat p_{r,\ell})+1\\
    &\leq (h_v-1)\cdot e^{\hat A_v -\hat p_v} \cdot (1+\hat p_v)+1\\
    &\leq (h_v-1)\cdot e^{\hat A_v -\hat p_v}\cdot 2^{\hat p_v}+1\\
    &\leq h_v \cdot e^{\hat A_v}.
    \end{align*}

    Since the height of $T$ is always $O(\log n)$ and $\hat A_v=O(1)$, the expected size of the canonical cover of $B$ induced by $T$ is at most $\mathbb{E}[X_\textsf{root}]=\mathbb{E}_\textsf{root}[X_\textsf{root}]=O(\log n)$, where $\textsf{root}$ is the root of $T_\text{init}$.
    Here, we can get rid of the condition as we always visit the root of $T$.
\end{proof}

\subsection{MPC Algorithm for Computing a WSPD}
In this subsection, we show how to compute a \textsf{WSPD} in $O(1)$ rounds assuming that we have a partition cover tree $T$.
We first present an algorithm that computes a \textsf{WSPD} of size $(1/\epsilon)^{O(\ddim)}n\log n$ with a constant success probability. Then we show that by constructing $O(\log n)$ different partition trees in parallel,
we can amplify the success probability to $1-1/n^{O(1)}$. 
This increases the sizes of the number of machines and the \textsf{WSPD} by $O(\log n)$.
For a node $v$ of $T$, we use $p_v$ to denote the representative of $P_v$, and $r_v$ to denote the maximum distance from $p_v$ to a point in $P_v$. Each node $v$ of $T$
is stored along with $p_v$ and $r_v$.

\subparagraph*{Construction.}
Our strategy is to compute 
all \emph{candidate pairs} in parallel, and we return the well-separated ones among all candidate pairs.
For a node $u$ of $T$,
its \emph{pairing candidates} in $T$ are the nodes $v$ in $T$ 
such that $\dist{}(p_v,p_{\parent{u}})\leq (8/\varepsilon)r_{\parent{u}}$, and $r_{\parent{u}}\leq r_{\parent{v}}$.
In this case, we call $(u,v)$ is a \emph{candidate pair} (for \textsf{WSPD}).\footnote{Although we call it a \emph{pair}, we treat it as a set. That is, we do not distinguish between $(u,v)$ and $(v,u)$.}

Our algorithm consists of $O(1)$ \emph{blocks} of rounds.
For each block, we start with a set $\mathcal{F}$ of pairs of subtrees of $T$. 
Recall that the levels of $T$ are partitioned into $O(1)$ \emph{layers}, each consisting of $O(\log \localSize^{1/4})$ consecutive levels.
Let $\mathcal T$ be the set of all subtrees $T'$ of $T$ such that
the root of $T'$ lies on the first level of a layer, and 
the leaves of $T'$ lie on the first level of the next layer. Here, each vertex in the first level of each layer appears in two subtrees of $\mathcal T$.
Note that every subtree of $\mathcal T$ consists of $O(\localSize^{1/4})$ nodes. For a node $v$ of $T$, let $T_v$ be the subtree in $\mathcal T$ containing $v$ as an internal node.
The first block starts with $\{(T_\textsf{root},T_\textsf{root})\}$ and $\{(\textsf{root},\textsf{root})\}$, where $\textsf{root}$ is the root of $T$. 

Now suppose that we are at a block with $\mathcal F$.
We assign one machine for each pair $(T_1,T_2)$ of $\mathcal F$ and store $T_1$ and $T_2$ on this machine in $O(1)$ rounds. Then we want to compute all candidate pairs consisting of nodes in $T_1$ and nodes in $T_2$. 
However,
it is possible that the number of such pairs exceeds the size of the local memory. Thus we first compute the number of such candidate pairs, and then we assign a sufficient number of machines to $(T_1,T_2)$. Notice that computing such candidate pairs requires memory size linear in the output, but computing the number of such candidates requires the size linear in the complexity of $T_1$ and $T_2$. 
We broadcast $T_1$ and $T_2$ to those machines, 
and the $i$th machine processes the pairs whose ranks fall in the interval $[is,(i+s)]$ 
where the ranks are determined in lexicographical order. 
By broadcasting the IDs of the first and last machines processing $(T_1,T_2)$, each machine processing $(T_1,T_2)$ can recognize its index $i$. 
Then we broadcast the well-separated candidate pairs to the predetermined machines that collect the pairs of the \textsf{WSPD}, and we pass the candidate pairs $(u,v)$ obtained from $(T_1,T_2)$ that are not well-separated, and either $u$ or $v$ is a leaf node of $T_1$ or $T_2$.
In particular, we produce $(T_u,T_v)$ for the tree pair set for the next block.
After producing all such tree pairs (represented by the indices of the trees), 
we remove duplicated pairs, and pass the resulting set to the next block.


\begin{lemma}
    This algorithm runs in $O(1)$ rounds.
\end{lemma}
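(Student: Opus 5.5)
The plan is to show two things: the number of blocks is $O(1)$, and each block runs in $O(1)$ rounds. Together these give $O(1)$ rounds overall.

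\emph{Number of blocks.} A pair $(T_1,T_2)\in\mathcal F$ processed in one block only produces pairs $(T_u,T_v)$ for the next block when $u$ or $v$ is a leaf of $T_1$ or $T_2$, that is, a node on the first level of the next layer. So for each pair, the depth of at least one of the two subtrees strictly advances by one layer from block to block. By construction, $T$ has $O(\log n)$ levels split into layers of $\Theta(\log \localSize^{1/4})$ levels each. Since $\localSize=n^\delta$, there are $O(1/\delta)=O(1)$ layers. The layer index of each tree in a pair is bounded by the number of layers, and the sum of the two indices increases in every block. Hence the process terminates after at most twice the number of layers, which is $O(1)$ blocks. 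This holds with high probability, because the height bound on $T$ comes from Lemma~\ref{lem:tree-size} and Lemma~\ref{lem:num-rounds}.

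\emph{Rounds per block.} I would list the steps of a block and check that each one is a constant number of the primitives from Section~\ref{sec:pre-MPC}.
\begin{enumerate}
\item Assigning one machine per pair in $\mathcal F$ and shipping $T_1,T_2$ there uses sorting and indexing. Each subtree has $O(\localSize^{1/4})$ nodes, so a pair fits in one machine. The nodes of each subtree are gathered by sorting on subtree ID and then routed.
\item Each machine computes the number $N$ of candidate pairs locally, with no communication.
\item Assigning $\lceil N/\localSize\rceil$ consecutive machines to each pair uses prefix sums over the counts. Prefix sums can be done with sorting, indexing and predecessor in $O(1)$ rounds.
\item Sending $T_1,T_2$ (size $O(\localSize^{1/4})\le O(\localSize^{\alpha})$) to the assigned range of machines uses \textsf{Broadcast}. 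Broadcasting the range endpoints lets each machine learn its index $i$.
\item Each machine enumerates its slice of $\localSize$ candidate pairs in lexicographic rank locally. This is possible because ranks can be computed from the tree structure stored on the machine.
\item Outputting the well-separated pairs is a single communication round.
\item Forming the next set $\mathcal F$ takes one sort to remove duplicates.
\end{enumerate}

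\emph{Main obstacle.} The delicate step is making sure no machine sends or receives more than $O(\localSize)$ words during broadcasting and deduplication. The concern is that a single subtree may take part in many pairs, so it could be requested by many machines at once. I would handle this by tree-based replication: the broadcast primitive already fans out in $O(1)$ rounds with branching factor $\localSize^{1-\alpha}$, so shipping copies of each subtree to all machines that need it stays within local memory. I would also rely on the fact that the total number of pairs is bounded by the total space, which keeps the machine count consistent with that bound.
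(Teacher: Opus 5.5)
Your proposal is correct and takes the same approach as the paper. The paper's proof is only your block-counting argument: it defines the level of a tree pair as the sum of the layer indices of its two subtrees, notes that it starts at two and that the smallest level strictly increases in each block, and bounds it by $O(1)$ because there are $O(1)$ layers. The paper treats $O(1)$ rounds per block as evident from the algorithm's description, so your per-block check (prefix sums for allocating machines, broadcasting subtrees of size $O(s^{1/4})$, deduplicating by sorting) adds detail rather than taking a different route.
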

\begin{proof}
    For a pair $(T_1,T_2)$
    of trees of $\mathcal T$, we define its \emph{level} as the sum of the indices of the layers of $T_1$ and $T_2$ in $T$. 
    In the first block, we start with $\{(T_\textsf{root},T_\textsf{root})\}$, and the level of the tree pair is two.
    Then in each subsequent block, the smallest level of the pairs we are given increases by at least one. Any pair of trees of $\mathcal T$ has level at most $O(1)$. Therefore, the number of blocks is $O(1)$, and thus the number of rounds is also $O(1)$.
\end{proof}

\subparagraph*{Correctness analysis.}
To show that the returned pairs form an $(1/\varepsilon)$-\textsf{WSPD}, we consider the set $\mathcal W^*$ of pairs computed by applying Algorithm~\ref{alg:har-wspd} on $T$. 
Also, let ${\mathcal W}^*_\textsf{all}$ be the set of all pairs generated by applying Algorithm~\ref{alg:har-wspd} on $T$.
Note that $\mathcal W^* \subseteq {\mathcal W}^*_\textsf{all}$.
Clearly, $\mathcal W^*$ is an $(1/\varepsilon)$-\textsf{WSPD}. 
We first show that all node pairs of $T$ generated by Algorithm~\ref{alg:har-wspd} are candidate pairs, and then we show  that all candidate pairs are generated by our MPC algorithm.

\begin{lemma}\label{lem:returned pair is a candidate pair}
    Every pair $(v,u)$ of ${\mathcal W}^*_\textsf{all}$ is a candidate pair.
\end{lemma}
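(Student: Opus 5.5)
We argue by induction on the recursion tree of Algorithm~\ref{alg:har-wspd}, invoked on $(\textsf{root},\textsf{root})$. We maintain the following invariant. For every pair $\{u,v\}$ generated from a parent call, where without loss of generality the call $\textsf{genWSPD}(u,\parent{v})$ spawned $\{u,v\}$, we have $r_{\parent{v}} \ge r_u$ and $\dist{}(p_u,p_{\parent{v}}) \le (8/\eps)\,r_{\parent{v}}$. Up to renaming, these are the two defining conditions of a candidate pair. The conventions are that the node being refined plays the role of ``$\parent{u}$'' in the definition, and that the root pair $(\textsf{root},\textsf{root})$ is treated as a candidate by convention, just as our MPC algorithm seeds the first block with it.

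\textbf{First condition (radius).} When the algorithm splits $\parent{v}$ in the call on $\{u,\parent{v}\}$, line~1 guarantees $r_u \le r_{\parent{v}}$. We still need to compare with $r_{\parent{u}}$, since the candidate definition is stated via parents. Look at the call in which $u$ itself was created, namely the call on $\{\parent{u},w\}$ for some $w$ that is an ancestor of $v$ (or equal to $\parent{v}$). At that moment $\parent{u}$ was the larger-radius node, so $r_{\parent{u}} \ge r_w$. Monotonicity would give $r_w \ge r_{\parent{v}}$: because $p_{v'}\in P_{\parent{v'}}$ and pieces are nested, $r_{v'} \le 2r_{\parent{v'}}$ only up to a factor, and exact monotonicity can fail. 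I will therefore fix the order of refinements, namely that the node of larger radius is always split, and read off the needed inequality $r_{\parent{u}} \le r_{\parent{v}}$ or its symmetric version for whichever of the two was split last. The idea is to take the pair element that was refined last as ``$v$'' and the other as ``$u$''. Then the element last split, $\parent{v}$, had radius at least that of its partner $u$ at the time of splitting, and the partner's parent was split even earlier against an ancestor of $v$.

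\textbf{Second condition (distance).} The call on $\{u,\parent{v}\}$ did not return, so
\[
\max\{r_u,r_{\parent{v}}\} > (\eps/8)\,\dist{}(p_u,p_{\parent{v}}),
\]
which gives $\dist{}(p_u,p_{\parent{v}}) < (8/\eps)\,r_{\parent{v}}$. We then need to move from $p_u$ and $p_{\parent{v}}$ to the representatives used in the definition, namely $p_v$ and the parent of the other node. Since $p_v \in P_{\parent{v}}$, we have $\dist{}(p_v,p_{\parent{v}})\le r_{\parent{v}}$. Since $p_u\in P_{\parent{u}}$, we have $\dist{}(p_u,p_{\parent{u}})\le r_{\parent{u}}$. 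The triangle inequality then yields a bound of the form $O(1/\eps)\cdot\max(r_{\parent{u}},r_{\parent{v}})$.

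\textbf{Main obstacle.} The main obstacle is getting the exact constants $8/\eps$ with the right parent on the right-hand side. The constant $8$ in \eqref{equ:cell-separated} was presumably chosen to absorb these triangle-inequality slacks. If the clean constant fails, my fallback is to interpret the lemma as holding with the constant adjusted by an additive $O(1)$, which the later size analysis (Lemma~\ref{lem:size-candidate}) tolerates since it only uses the conditions up to $(1/\eps)^{O(\ddim)}$ factors.
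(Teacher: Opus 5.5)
Your proposal does not prove the lemma as stated. There are two concrete problems.

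\textbf{Orientation and the distance condition.} Let $\{u,\parent{v}\}$ be the call that spawned $\{u,v\}$, so $\parent{v}$ was the node split and $r_u\le r_{\parent{v}}$. The claim to prove is that $u$ is a pairing candidate of $v$, i.e.\ $\dist{}(p_u,p_{\parent{v}})\le (8/\eps)\,r_{\parent{v}}$ and $r_{\parent{v}}\le r_{\parent{u}}$. With this orientation the distance condition is literally the failed separation test. Since $\max\{r_u,r_{\parent{v}}\}=r_{\parent{v}}$, failure gives $\dist{}(p_u,p_{\parent{v}})<(8/\eps)\,r_{\parent{v}}$, with the exact constant and no triangle inequality. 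This is precisely the paper's argument.

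Your detour through $\dist{}(p_v,p_{\parent{u}})$ versus $r_{\parent{u}}$ is the distance condition of the opposite orientation ($v$ a candidate of $u$). That orientation's radius condition is $r_{\parent{u}}\le r_{\parent{v}}$, which is the reverse of what the recursion history provides. So the two halves of your argument do not fit together.

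The fallback of loosening $8/\eps$ is not harmless either. The correctness proof of the MPC algorithm relies on this lemma to know that every pair of $\mathcal W^*_{\textsf{all}}$ is among the enumerated candidate pairs, and those are enumerated with the constant $8/\eps$. Changing the constant therefore changes the algorithm, not just Lemma~\ref{lem:size-candidate}.

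\textbf{The radius condition is left open.} Your invariant $r_u\le r_{\parent{v}}$ is not one of the defining conditions; the definition compares the two parents. Your trace toward $r_{\parent{v}}\le r_{\parent{u}}$ is the right one, and it is the one implicit in the paper. Here $u$ was created by splitting $\parent{u}$ in a call $\{\parent{u},w\}$, where $w=\parent{v}$ or $w$ is an ancestor of $\parent{v}$, so $r_{\parent{u}}\ge r_w$. Afterwards only $u$'s partner is refined until the call $\{u,\parent{v}\}$. What remains is $r_w\ge r_{\parent{v}}$, i.e.\ that radii do not increase from $w$ down to $\parent{v}$.

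You correctly note that this can fail when $r_v$ is measured from an arbitrary $p_v\in P_v$, since only $r_v\le 2r_{\parent{v}}$ holds. But ``always split the larger node and read off the inequality'' adds nothing. Algorithm~\ref{alg:har-wspd} already does this, and it does not rule out $r_{\parent{v}}>r_w$, hence possibly $r_{\parent{v}}>r_{\parent{u}}$, when $\parent{v}$ is a proper descendant of $w$.

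The paper's one-line proof simply asserts $r_{\parent{u}}\ge r_{\parent{v}}\ge r_u$, so it tacitly uses this monotonicity (which is automatic for quadtree cell diameters in the Euclidean analogue). To finish, you must supply it. One option is to show that the chosen representatives make $r$ non-increasing along root-to-leaf paths. Another is to run the construction with a non-increasing proxy that still bounds each piece around its representative, e.g.\ the minimum of $2r_w$ over all ancestors-or-self $w$ of $v$, which is at least $\diam(P_v)$.
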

\begin{proof}
    Assume first that $(\parent{v},u)$ generates $(v,u)$ during the execution of Algorithm~\ref{alg:har-wspd}. Then $r_{\parent{u}}\geq r_{\parent{v}} \geq r_{u}$. 
    Since $(\parent{v},u)$ is not well-separated, the distance between their representatives is less than $(8/\varepsilon)r_{\parent{v}}$. 
    Therefore, $(u,v)$ is a candidate pair.
    For the other case, we change the roles of $u$ and $v$,
    which shows that $(v,u)$ is a candidate pair.
    As we do not distinguish between $(u,v)$ and $(v,u)$,
    the lemma holds.
\end{proof}

\begin{lemma}
    Our MPC algorithm returns an $(1/\varepsilon)$-\textsf{WSPD}.
\end{lemma}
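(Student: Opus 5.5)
The plan is to compare the returned set with $\mathcal W^*$, the output of Algorithm~\ref{alg:har-wspd} on $T$. We already know that $\mathcal W^*$ is a $(1/\varepsilon)$-\textsf{WSPD}. The returned set $\mathcal W$ consists of candidate pairs that pass the test~(\ref{equ:cell-separated}). We show two things:
\begin{itemize}
\item every pair of $\mathcal W$ is well-separated, which gives conditions (i) and (ii);
\item every pair of $\mathcal W^*$ appears in $\mathcal W$, which gives the covering condition (iii).
\end{itemize}
Extra well-separated pairs do no harm to (iii). To keep the decomposition literal, however, we will also check that $\mathcal W$ contains no extra pairs. For this we keep only those candidate pairs $\{u,v\}$ that are well-separated while their generating pair is not. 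Concretely, the pair $\{u,\parent{v}\}$ (with $r_u \le r_{\parent{v}}$) or the symmetric one must fail~(\ref{equ:cell-separated}). This can be checked locally, since $\parent{v}$ is available in the same or an adjacent subtree of $\mathcal T$.

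\emph{Separation.} Suppose $\{u,v\}$ satisfies $\max\{r_u,r_v\}\le(\varepsilon/8)\dist{}(p_u,p_v)$. For $x\in P_u$ and $y\in P_v$ the triangle inequality gives $\dist{}(x,y)\ge \dist{}(p_u,p_v)-r_u-r_v\ge (1-\varepsilon/4)\dist{}(p_u,p_v)$. Moreover $\diam(P_u)\le 2r_u\le(\varepsilon/4)\dist{}(p_u,p_v)$. Combining the two bounds, $\diam(P_u)\le \varepsilon\,\dist{}(P_u,P_v)$ for $\varepsilon\le 1$, and symmetrically for $P_v$. Disjointness of $P_u$ and $P_v$ follows because the distance between them is positive (the case $u=v$ is never well-separated).

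\emph{Completeness.} By Lemma~\ref{lem:returned pair is a candidate pair}, every pair of $\mathcal W^*_\textsf{all}$ is a candidate pair. It therefore suffices to show that the block structure enumerates every pair of $\mathcal W^*_\textsf{all}$. We prove this by induction along the recursion of Algorithm~\ref{alg:har-wspd}, using the invariant that whenever $\{u,v\}\in\mathcal W^*_\textsf{all}$, the tree pair $(T_u,T_v)$ is processed in some block. The base case is the root pair, which is handled by $(T_\textsf{root},T_\textsf{root})$. For the inductive step, suppose $\{u,v\}$ is generated from $\{u,\parent{v}\}$ and $(T_u,T_{\parent{v}})$ was processed.
\begin{itemize}
\item If $v$ is internal to $T_{\parent{v}}$, then $T_v=T_{\parent{v}}$. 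The pair is found in the same block, because the machine enumerates all candidate pairs of $T_u\times T_{\parent{v}}$.
\item If $v$ is a leaf of $T_{\parent{v}}$ (the first level of the next layer), then $\{u,\parent{v}\}$ is a non-well-separated candidate pair involving a boundary node. The algorithm therefore forwards $(T_u,T_v)$ to the next block.
\end{itemize}

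The main obstacle is exactly this boundary bookkeeping. Nodes on the first level of a layer belong to two subtrees, and a pair may be generated across a boundary in either coordinate. We must check that the forwarding rule fires both when $u$ is a boundary node and when $v$ is, and that the deduplicated tree pairs still cover every such case. The subtlety is choosing what to forward: the rule should produce the tree in which the boundary node is internal, not the one in which it is a leaf.

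Finally, the machine-allocation step (counting, then distributing ranks) only partitions the enumeration among machines and loses no pair. Hence $\mathcal W^*\subseteq\mathcal W$, and the returned pairs are well-separated, disjoint and cover $P\otimes P$.
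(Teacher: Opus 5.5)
Your overall route is the paper's. Pairs of $\mathcal W$ are well-separated because they pass (\ref{equ:cell-separated}); your explicit triangle-inequality check of this is a welcome addition that the paper omits. Then $\mathcal W^*\subseteq\mathcal W$ follows by walking down the recursion of Algorithm~\ref{alg:har-wspd} from the root pair, using Lemma~\ref{lem:returned pair is a candidate pair} and the forwarding of tree pairs between blocks.

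The boundary step, which you yourself single out as the main obstacle, is argued incorrectly. Your invariant ``$(T_u,T_v)$ is processed for every $\{u,v\}\in\mathcal W^*_{\textsf{all}}$'' is false for the algorithm as described. Only candidate pairs that are \emph{not} well-separated get forwarded, so if $v$ lies on the first level of a layer and $\{u,v\}$ already passes (\ref{equ:cell-separated}), the tree pair $(T_u,T_v)$ is never produced. The trigger you cite is also wrong: the rule applied to $\{u,\parent{v}\}$ can only produce $(T_u,T_{\parent{v}})$, since $T_{\parent{v}}$ is the subtree in which $\parent{v}$ is internal, and in your second case this differs from $T_v$.

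What actually happens is this. The node $v$, being a leaf of $T_{\parent{v}}$, still belongs to that subtree, so $\{u,v\}$ itself is enumerated in the block handling $(T_u,T_{\parent{v}})$. There it is either output, or, as a non-separated candidate pair containing the subtree leaf $v$, it is exactly the pair that forwards $(T_u,T_v)$. So the induction should carry the paper's weaker invariant: every pair on the generating chain is \emph{enumerated} by some processed tree pair. When the next pair of the chain steps below a subtree leaf, the current pair (non-separated, since it generates a successor) is what produces the next tree pair. Deduplication only removes identical copies, so nothing more is needed there.

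Two side claims should be dropped or corrected.
\begin{enumerate}
\item Your filter keeps $\{u,v\}$ only if $\{u,\parent{v}\}$ or its symmetric counterpart fails (\ref{equ:cell-separated}). This does not certify $\mathcal W\subseteq\mathcal W^*$, because membership in $\mathcal W^*_{\textsf{all}}$ depends on the whole chain back to the root, not just on the last generating step. The filter also changes the algorithm the lemma is about. It is unnecessary anyway, because (iii) only requires the union to equal $P\otimes P$.
\item The case $u=v$ \emph{can} pass (\ref{equ:cell-separated}). For a leaf $u$ we have $r_u=0=\dist{}(p_u,p_u)$, so $\{u,u\}$ is a candidate pair, and Algorithm~\ref{alg:har-wspd} outputs such pairs already for $|P|=2$. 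These pairs violate (ii) but contribute nothing to $P\otimes P$, so they should simply be discarded. The paper's ``by construction'' glosses over the same point.
\end{enumerate}
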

\begin{proof}
    All pairs of $\mathcal W$ are well-separated by construction.
    Thus it is sufficient to show that every pair of points of $P$ is covered by at least one pair of $\mathcal W$.
    For this, it is sufficient to show that
    $\mathcal W$ is a superset of $\mathcal W^*$.

    For a pair $(u,v)$ of $\mathcal W^*_\textsf{all}$, 
    consider the sequence $\mathcal P'=\langle (u_1,v_1),(u_2,v_2),\ldots,(u_k,v_k) \rangle$ of pairs in ${\mathcal W}^*_\textsf{all}$ such that $(u_1,v_1)=(\textsf{root},\textsf{root})$, $(u_k,v_k)=(u,v)$, and $(u_i,v_i)$ generates $(u_{i+1},v_{i+1})$ for all $i$.
    We claim that all pairs in the sequence are generated by our MPC algorithm. Clearly, the first pair $(u_1,u_2)$ is generated by the first block of the algorithm.
    Assume that $(u_i,v_i)$ is generated by a tree pair $(T_i,T_i')$. 
    If $u_{i+1}\in T_i$ and $v_{i+1}\in T_i'$, then  $(u_{i+1},v_{i+1})$ is also generated. 
    Otherwise, $u_i$ is a leaf of $T_i$, or $v_i$ is a leaf of $T_i'$.
    In this case, we consider $(T_{u_i},T_i')$ or $(T_i,T_{v_i})$
    in the next block, and $(u_{i+1},v_{i+1})$ is generated in the next block. Recall that each vertex of $T$ in the first level of each layer appears two subtrees of $\mathcal T$. For a node $v$ of $T$, let $T_v$ be the subtree in $\mathcal T$ containing $v$ as an internal node. 
\end{proof}

\subparagraph*{Space complexity analysis.}
We first analyze the expected number of pairs in the \textsf{WSPD},
and the expected total memory size.
As a single machine always uses $O(s)$ space, this gives the expected number of machines required by the algorithm.
In the following, we assume the event that all random samples used for the $\alpha$-approximations in the construction of the partition cover tree are indeed all $\alpha$-approximations.
Since each random sample succeeds with high probability, the probability that this event occurs is also at least $1-1/n^{\Omega(1)}$.

\begin{lemma}\label{lem:size-candidate}
    The expected number of pairing candidates of a node of $T$ is $(1/\eps)^{O(\ddim)}\log^2 n$.
\end{lemma}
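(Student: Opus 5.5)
We bound the pairing candidates of a fixed node $u$ by reducing them to the small cover property (Lemma~\ref{lem:expected size}) applied to one ball. Let $w=\parent{u}$ and $B_u:=B(p_w,(8/\eps+2)r_w)$. Every pairing candidate $v$ of $u$ satisfies $\dist{}(p_v,p_w)\le (8/\eps)r_w$ and $r_{\parent{v}}\ge r_w$. Since $p_v\in P_v\subseteq P_{\parent{v}}$, the candidate $v$ is a child of a node $x=\parent{v}$ whose piece meets $B_u$ and whose radius is at least $r_w$. So it suffices to count such nodes $x$ with a child having representative in $B_u$. Each such $x$ has at most two children, so this count bounds the candidates up to a factor $2$.

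First we bound how many such $x$ can exist. Consider the nodes whose piece meets $B_u$ and which are minimal (deepest) subject to their radius being at least $r_w$. Any node $x$ with $P_x\cap B_u\neq\emptyset$ and $r_x\ge r_w$ is an ancestor of one of these minimal nodes, and the tree has height $O(\log n)$. Hence the number of candidate parents is at most $O(\log n)$ times the number of minimal nodes.

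Next we bound the number of minimal nodes by covers. Cover $B_u$ by $(1/\eps)^{O(\ddim)}$ balls $B_1,\dots,B_k$ of radius $r_w/40$, using the doubling property. For each $B_j$ take its canonical cover. Every piece in that cover has diameter at most $r_w/20$, hence radius less than $r_w$. Each minimal node $x$ contains a point of some $B_j$, and on the root-to-leaf path through that point there is a canonical cover node of $B_j$. That cover node is either a descendant of $x$ (radius $<r_w\le r_x$) or an ancestor of $x$. In the descendant case, distinct minimal nodes are incomparable, so each cover node is charged to at most one minimal $x$. In the ancestor case, the charged cover node has diameter less than $r_w$ yet contains $x$ with radius at least $r_w$; since the radius is measured from a representative inside the piece, this gives a diameter of at least $r_w$, a contradiction. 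Thus the number of minimal nodes is at most $\sum_j |\text{canonical cover of }B_j|$. By Lemma~\ref{lem:expected size} and linearity of expectation, this sum is $(1/\eps)^{O(\ddim)}\log n$ in expectation. Multiplying by the $O(\log n)$ ancestors gives $(1/\eps)^{O(\ddim)}\log^2 n$.

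The main obstacle is dependence: the ball $B_u$, and therefore the $B_j$, is defined from the random tree itself, whereas Lemma~\ref{lem:expected size} holds for a \emph{fixed} ball. My plan to handle this is to replace the $B_j$ by balls from a fixed family that depends only on $P$. Concretely, I would take balls centered at points of $P$ with radii drawn from the $O(n^2)$ pairwise distances, scaled by powers of $2$. Each $B_j$ is contained in one such ball of comparable radius, with the constants adjusted so the diameter condition still holds. I would then condition on $u$ via its position in $T_\text{init}$ and sum the expectations over the fixed family, arguing that for each fixed $u$ only $(1/\eps)^{O(\ddim)}$ family balls are relevant. If this union does not go through cleanly, the fallback is to average over nodes, bounding the total expected number of candidate pairs by summing over the fixed family, which still yields the bound stated in Lemma~\ref{lem:size-candidate} on average.
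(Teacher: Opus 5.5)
Your deterministic counting is correct and is, in substance, the paper's proof. The paper fixes a level $i$ and uses that the parents of the level-$i$ pairing candidates of $u$ have pairwise disjoint pieces. Each of these pieces meets $B(p_{\parent{u}},(16/\eps)r_{\parent{u}})$ and contains a whole piece of the canonical covers of $(1/\eps)^{O(\ddim)}$ smaller balls, since a cover piece is too small to contain $P_{\parent{v}}$. The paper then sums over the $O(\log n)$ levels. Your minimal-nodes-plus-ancestors bookkeeping replaces the per-level disjointness and gives the same bound $O(\log n)\sum_j|\mathrm{cov}_T(B_j)|$ on the number of candidates, where $\mathrm{cov}_T(B)$ denotes the canonical cover. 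The injective charging, and the contradiction that rules out cover nodes being ancestors, are spelled out more carefully than in the paper.

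The gap is in the last step. You rightly note that the $B_j$ are determined by $p_{\parent{u}}$ and $r_{\parent{u}}$, hence by $T$, whereas Lemma~\ref{lem:expected size} is about a fixed ball. The paper applies that lemma to its tree-dependent ball without comment, so it does not resolve this either. However, your proposed repair does not close the gap.

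Let $\mathcal F_u(T)\subseteq\mathcal F$ be the family balls relevant to $u$. You must bound $\mathbb E\bigl[\sum_{B\in\mathcal F_u(T)}|\mathrm{cov}_T(B)|\bigr]=\sum_{B\in\mathcal F}\mathbb E\bigl[|\mathrm{cov}_T(B)|\cdot\mathbf 1[B\in\mathcal F_u(T)]\bigr]$. Lemma~\ref{lem:expected size} only bounds $\mathbb E|\mathrm{cov}_T(B)|$ unconditionally, so all it yields is $|\mathcal F|\cdot O(\log n)=n^{O(1)}$.

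The fact that $|\mathcal F_u(T)|=(1/\eps)^{O(\ddim)}$ for every outcome does not help. Which balls are relevant is decided by the separating balls at the ancestors of $\parent{u}$. These are the same random choices that decide whether those ancestors split $B$. So what you actually need is $\mathbb E\bigl[|\mathrm{cov}_T(B)|\mid B\in\mathcal F_u(T)\bigr]=O(\log n)$, which is a genuinely new statement. Conditioning on the slot of $u$ in $T_\text{init}$ fixes neither $p_{\parent{u}}$ nor $r_{\parent{u}}$.

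The fallback has the same defect. Swapping sums gives $\sum_u\sum_{B\in\mathcal F_u(T)}|\mathrm{cov}_T(B)|=\sum_{B\in\mathcal F}N_B(T)\,|\mathrm{cov}_T(B)|$, where the multiplicity $N_B(T)$ is random and correlated with the cover. Summing over all of $\mathcal F$ again gives only a polynomial bound.

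To finish, you need an ingredient beyond Lemma~\ref{lem:expected size}. One option is a conditional version of it: the cover of a ball stays $O(\log n)$ in expectation given the choices that determine $p_{\parent{u}}$ and $r_{\parent{u}}$. The other is a tail bound on $|\mathrm{cov}_T(B)|$ strong enough to union-bound over the $n^{O(1)}$ balls of $\mathcal F$, with the trivial $O(n)$ bound on candidates used on the failure event. The second route may cost extra logarithmic factors.
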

\begin{proof}
    Consider a node $u$ of $T$.
    Let $\mathcal C_i$ be a set of pairing candidates of $u$ at level $i$. 
    As the height of $T$ is $O(\log n)$, it is sufficient to show
    that the expected size of $\mathcal C_i$ is $(1/\eps)^{O(\ddim)}\log n$ for all $i$. 
    
    Recall that for a node $u$ of $T$, its pairing candidates are the nodes $v$ with $\dist{}(p_v,p_{\parent{u}})\leq (8/\varepsilon)r_{\parent{u}}$, and $r_{v}\leq r_{\parent{u}}\leq r_{\parent{v}}$.
    Consider the ball $B$ centered at $p_{\parent{u}}$ with radius $(16/\varepsilon)r_{\parent{u}}$.
    Note that it contains a point of $P_{\parent{v}}$ since $p_v$ is also contained in  $P_{\parent{v}}$.
    By the packing property of a doubling metric space, $B$ can be covered 
    by  $(1/\varepsilon)^{O(\ddim)}$ balls of radii $r_{\parent{u}}$.
    Each such ball can be covered by $O(\log n)$ pieces induced by $T$ of radius at most $r_{\parent{u}}/10$ in expectation by Lemma~\ref{lem:expected size}.
    In this way, we have $(1/\varepsilon)^{O(\ddim)}\log n$ pieces  of radii  $r_{\parent{u}}/10$ whose union contains a point of $P_{\parent{u}}$. Let $\mathcal P$ be the set of such pieces.
    Since $r_{\parent{v}}$ is larger than any piece of $\mathcal P$, at least one of the pieces of $\mathcal P$ is fully contained in $P_{\parent{v}}$. 
    Since all pieces of the nodes of $\mathcal C_i$ are  pairwise disjoint, and each node has two children, the size of $\mathcal C_i$ is $O(|\mathcal P|)=(1/\varepsilon)^{O(\ddim)}\log n$. Therefore, the lemma holds.
\end{proof}

By Lemma~\ref{lem:returned pair is a candidate pair} and Lemma~\ref{lem:size-candidate}, we have the following lemma
since $T$ has $O(n)$ nodes.
\begin{lemma}\label{lem:size WSPD}
    The expected total space and the expected size of $\mathcal W$ are $(1/\eps)^{O(\ddim)}\log^2 n$.
\end{lemma}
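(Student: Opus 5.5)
We read the statement as bounding the expected space and the expected number of pairs of $\mathcal W$ \emph{charged to a single node} of $T$ (equivalently, per input point, since $T$ has $O(n)$ nodes). It cannot be meant literally without normalization: the input alone occupies $n$ words, and a \textsf{WSPD} of $n$ points in general position has $\Omega(n)$ pairs. So the plan is a charging argument: assign every pair of $\mathcal W$, and every unit of memory used by the algorithm, to one node of $T$, and show that the expected charge of any fixed node is $(1/\eps)^{O(\ddim)}\log^2 n$.

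\textbf{Step 1: charging the pairs of $\mathcal W$.} Every pair returned by our MPC algorithm is a candidate pair, by construction. Charge a candidate pair $(u,v)$ to the endpoint $u$ for which $v$ is a pairing candidate of $u$. The definition is symmetric up to swapping roles, so we pick one endpoint by a fixed rule, for instance the one with smaller $r_{\parent{\cdot}}$.
\begin{itemize}
\item The number of pairs charged to $u$ is at most the number of pairing candidates of $u$.
\item By Lemma~\ref{lem:size-candidate}, this number has expectation $(1/\eps)^{O(\ddim)}\log^2 n$.
\item Lemma~\ref{lem:size-candidate} is proved under the event that all samples are $\alpha$-approximations, and this event holds with probability $1-1/n^{\Omega(1)}$. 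On its complement we use the trivial bound $O(n^2)$ candidates per node; multiplied by $n^{-\Omega(1)}$ with a large enough constant, this contributes $O(1)$ to the expectation.
\end{itemize}

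\textbf{Step 2: charging the memory.} In each of the $O(1)$ blocks, the machines handling a tree pair $(T_1,T_2)$ use space of two kinds.
\begin{itemize}
\item \emph{Input-side space.} Each machine stores copies of $T_1$ and $T_2$, of size $O(\localSize^{1/4})$. The number of machines assigned is $\lceil (\#\text{candidate pairs in }(T_1,T_2))/\localSize\rceil$. Hence the total space is $O(\localSize^{1/4})$ per machine, plus $O(1)$ per candidate pair.
\item \emph{Output-side space.} Each candidate pair is charged to one node as in Step 1.
\item \emph{Tree pairs without candidates.} A tree pair $(T_1,T_2)$ is only created because some non-well-separated candidate pair $(u,v)$ has a leaf endpoint in a subtree of the previous block. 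We therefore charge the $O(\localSize^{1/4})$ overhead of the first machine of $(T_1,T_2)$ to the pair $(u,v)$ that created it.
\end{itemize}
The overhead $O(\localSize^{1/4})$ is then absorbed because each candidate pair spawns $O(1)$ tree pairs. Also, after removing duplicates, the number of distinct tree pairs is at most the number of candidate pairs among boundary nodes.

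\textbf{Step 3: the main obstacle.} The hardest step is this last absorption. A single boundary candidate may spawn a tree pair whose storage cost $O(\localSize^{1/4})$ far exceeds $O(1)$. To handle it, I would use the following accounting.
\begin{itemize}
\item A tree pair $(T_u,T_v)$ is created only when $u$ or $v$ is a root of a subtree of $\mathcal T$.
\item Charge the subtree storage to the root node, say $u$, of that subtree.
\item The number of tree pairs involving $T_u$ is at most the number of pairing candidates of $u$ and of its $O(1)$ relatives across the layer boundary. By Lemma~\ref{lem:size-candidate}, this number has expectation $(1/\eps)^{O(\ddim)}\log^2 n$.
\item Subtree roots are only an $O(\localSize^{-1/4})$ fraction of the nodes. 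Spreading each root's cost $O(\localSize^{1/4})$ over the $\Theta(\localSize^{1/4})$ nodes of its subtree therefore gives each node an $O(1)$ share per tree pair.
\end{itemize}
Summing Steps 1–3 with linearity of expectation gives expected charge $(1/\eps)^{O(\ddim)}\log^2 n$ per node. Multiplying by the $O(n)$ nodes then yields the global bound used in Theorem~\ref{thm:intro-dd-algorithm}.
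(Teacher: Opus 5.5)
Your Step~1 is essentially the paper's whole proof. The paper combines ``every returned pair is a candidate pair'' with Lemma~\ref{lem:size-candidate} and the fact that $T$ has $O(n)$ nodes. The statement has dropped a factor $n$, so your per-node reading is the intended one. The paper tacitly measures total space by the number of candidate pairs and never charges for the copies of $T_1,T_2$ held by each machine serving a tree pair. Your Steps~2--3 try to supply that missing accounting, and as written they fail:
\begin{enumerate}
\item The absorption at the end of Step~2 is invalid, as you note yourself. Charging $\Theta(s^{1/4})$ to each spawning candidate pair multiplies the candidate count by $s^{1/4}$, even though each pair spawns only $O(1)$ tree pairs.
\item In Step~3, a layer root $u$ spawns one tree pair per non-well-separated candidate pair containing $u$. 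These include pairs in which $u$ is a pairing candidate of $v$, rather than the reverse. Lemma~\ref{lem:size-candidate} bounds only the out-degree $|\{v : v \text{ is a pairing candidate of } u\}|$. Its proof uses $r_{\parent{v}}\ge r_{\parent{u}}$ essentially and says nothing about the in-degree, i.e., smaller-scale nodes $v$ near $p_u$ that have $u$ as a candidate. In- and out-degrees agree only when summed over all nodes. You need the sum over layer roots alone, multiplied by $s^{1/4}$.
\item The claim that roots are an $O(s^{-1/4})$ fraction of the nodes, each with a $\Theta(s^{1/4})$-node subtree, is not justified; the paper only gives the upper bound $O(s^{1/4})$ on subtree sizes. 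Even for a perfectly balanced $T$, a layer boundary within $O(1)$ levels of the leaves gives $\Theta(n)$ roots with $O(1)$-node subtrees. Such a root can still spawn a tree pair whose partner lies in a higher layer and has $\Theta(s^{1/4})$ nodes. That cost cannot be spread over the root's own subtree.
\end{enumerate}

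The natural repair is to charge $|T_1|+|T_2|$ per tree pair and spread $|T'|$ over the nodes of $T'$, using $\sum_{T'\in\mathcal T}|T'|=O(n)$. That repair needs a bound on the number of tree pairs in which a fixed subtree $T'$ occurs, in either role. In the partner role, $T'$ occurs once per layer root forming a non-well-separated candidate pair with some node of $T'$. The only available bound is the sum of candidate degrees over the nodes of $T'$. That is up to $|T'|$ times the per-node bound, which brings the $s^{1/4}$ factor back. So the total-space half of your proposal needs an ingredient the paper does not contain. Examples would be sending only the relevant part of the partner tree, or a structural bound on the number of distinct tree pairs per subtree. To match the paper, keep Step~1 and state explicitly that total space is identified with the number of candidate pairs plus $O(n)$.
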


\subparagraph*{Success probability.}
Our algorithm computes a $(1/\varepsilon)$-\textsf{WSPD} of a point set of size $n$ in $O(1)$ rounds.
This algorithm is randomized; the expected total memory size required by the algorithm is $(1/\varepsilon)^{O(\ddim)}n\log^2n$, and the expected size of the returned \textsf{WSPD} is $(1/\varepsilon)^{O(\ddim)}n\log^2n$.
Due to Markov's inequality, we can see that 
the probability that the total memory size is $(1/\varepsilon)^{O(\ddim)}n\log^2 n$, and the size of the \textsf{WSPD} is $(1/\varepsilon)^{O(\ddim)}n\log^2 n$ is at most a constant.
As each machine always uses $O(s)$ space, the expected number of machines required by the algorithm is $(1/\varepsilon)^{O(\ddim)}(n\log^2n)/s$.

\begin{lemma}\label{lem:ddim-wspd-const-prob}
    Given a set $P$ of $n$ points in a metric space with a constant doubling dimension $\ddim$,
    we can compute a $(1/\varepsilon)$-\textsf{WSPD} of $P$ of size $(1/\eps)^{O(\ddim)}n\log^2 n$
    in $O(1)$ rounds, using $(1/\eps)^{O(\ddim)}(n\log^2 n)/s$ machines, each with $O(s)$ space. The algorithm succeeds with at least constant probability.
\end{lemma}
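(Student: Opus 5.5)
The plan is to assemble Lemma~\ref{lem:ddim-wspd-const-prob} from the pieces already in place: the tree construction, the block-wise generation of candidate pairs, and the expected-size bounds. Markov's inequality then turns the expectation bounds into a constant-probability guarantee.

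First, build the partition cover tree $T$ via Theorem~\ref{thm:partition-cover-tree-construction} and Lemma~\ref{lem:num-rounds}. This takes $O(1)$ rounds, $O(n)$ total space and $O(s)$ local space. It succeeds with probability $1-1/n^{\Omega(1)}$, which covers the event that every sample is a $(1/s^{1/4})$-approximation, so each layer shrinks pieces to size $O(n/s^{1/4})$ (Lemma~\ref{lem:tree-size}) and the number of layers is $O(1)$. Conditioned on this event, run the block algorithm. It takes $O(1)$ rounds by the lemma on the number of blocks. It outputs a $(1/\eps)$-\textsf{WSPD} deterministically, by Lemma~\ref{lem:returned pair is a candidate pair} and the correctness lemma: every pair of $\mathcal W^*_\textsf{all}$ is a candidate pair and is generated along its generation sequence through the blocks.

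Second, bound resources. Each machine receives a pair $(T_1,T_2)$ of subtrees of size $O(s^{1/4})$, so counting candidate pairs fits locally. Producing them uses $\lceil(\text{count})/s\rceil$ machines through the pre-assignment with broadcast and indexing, so local memory never overflows. The total work is proportional to the number of candidate pairs plus the number of tree pairs. Each tree pair in a block is witnessed by a non-separated candidate pair, so the number of tree pairs is also at most the number of candidate pairs. By Lemma~\ref{lem:size-candidate}, summed over the $O(n)$ nodes of $T$ (Lemma~\ref{lem:size WSPD}), the expected number of candidate pairs is $(1/\eps)^{O(\ddim)}n\log^2 n$. Hence the expected total space and the expected output size are both of this order, and the expected number of machines is $(1/\eps)^{O(\ddim)}n\log^2 n/s$.

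Third, apply Markov's inequality. Provision $C$ times the expected number of machines for a suitably large constant $C$. The total space then exceeds the budget with probability at most $1/C$. Adding the $1/n^{\Omega(1)}$ failure probability of the tree construction, the algorithm succeeds with constant probability. When it succeeds, the output size is at most the number of candidate pairs, which lies within the same bound. If the count computed in the pre-assignment step exceeds the budget, the algorithm declares failure rather than overflowing.

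The main obstacle is that Lemma~\ref{lem:size-candidate} is an expectation over the random radii, while the conditioning on the $\alpha$-approximation event could in principle bias it. I would handle this by bounding the unconditional expectation and dividing by the probability of the event, $1-1/n^{\Omega(1)}$, which changes the bound only by a constant factor. A secondary obstacle is making sure the per-block bookkeeping of duplicate tree pairs is charged to candidate pairs. I would handle it by deduplicating by sorting before assigning machines.
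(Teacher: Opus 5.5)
Your overall route is the paper's: build the partition cover tree, run the $O(1)$ blocks, bound the expected number of candidate pairs via Lemma~\ref{lem:size-candidate}, and convert the expectation bounds into a constant-probability guarantee with Markov's inequality. The gap is in the resource accounting you add to justify the total space and the machine count.

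\textbf{Machine count.} You allot $\lceil \mathrm{count}/s\rceil\ge 1$ machines to each tree pair. The number of machines is therefore at least the number of tree pairs, not (total space)$/s$.

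\textbf{Total space.} The claim that the total work is proportional to the number of candidate pairs plus the number of tree pairs is false. Every tree pair $(T_1,T_2)$ is stored together with both subtrees, and both are broadcast to each machine that processes the pair. These subtrees have up to $\Theta(s^{1/4})$ nodes. The space is therefore of order (number of candidate pairs) $+\sum_{(T_1,T_2)}(|T_1|+|T_2|)$. Charging each tree pair to one witnessing candidate pair only gives a bound that is too large by a factor $s^{1/4}=n^{\delta/4}$.

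This is not slack that a finer charging argument removes. Let $a$ be a leaf of a subtree $T_b\in\mathcal T$ that is an internal node of $T$.
\begin{itemize}
\item The pair $(a,\parent{a})$ is a candidate pair in one of the two orders, depending on which of $r_{\parent{a}}$ and $r_{\parent{\parent{a}}}$ is smaller. For the distance condition, use $p_a\in P_{\parent{\parent{a}}}$ and $\eps\le 8$.
\item The pair is never well-separated for $\eps<8$, since $\dist{}(p_a,p_{\parent{a}})\le r_{\parent{a}}$.
\item Hence the block processing $(T_b,T_b)$ emits $(T_a,T_b)$ for every child subtree $T_a$ of $T_b$.
\end{itemize}
Now take a layer whose subtrees have $\Theta(s^{1/4})$ nodes and whose leaves are pieces of two points. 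This layer creates $\Theta(n)$ tree pairs, each carrying a $\Theta(s^{1/4})$-node copy of $T_b$. That costs $\Theta(ns^{1/4})$ space and, with your allocation, $\Theta(n)$ machines.

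To reach the stated bound you need a different scheme, for example packing many tree pairs per machine and shipping only the relevant parts of the subtrees. You also need an actual bound on $\sum(|T_1|+|T_2|)$. The paper does not supply this either: it takes the expected total space directly from Lemma~\ref{lem:size WSPD} and applies Markov. Your argument therefore coincides with the paper's only if you accept that lemma as given. The independent justification you give for it does not hold.
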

 
We can amplify the success probability to $1-1/n^{\Omega(1)}$ by building $O(\log n)$ independent partition cover trees in parallel.
We partition the machines into $O(\log n)$ blocks, and each block
runs the algorithm independently in parallel. If one block requires more machines than allocated, we terminate its execution.
At the end of the algorithm, we return the output produced by any block that successfully completed.  Then we have the following lemma.

\begin{lemma}\label{lem:ddim-wspd-high-prob}
    Given a set $P$ of $n$ points in a metric space with a constant doubling dimension $\ddim$,
    we can compute a $(1/\varepsilon)$-\textsf{WSPD} of $P$ of size $(1/\eps)^{O(\ddim)}n\log^2 n$
    in $O(1)$ rounds, using $(1/\eps)^{O(\ddim)}(n\log^3 n)/s$ machines, each with $O(s)$ space. The algorithm succeeds with $1-1/n^{\Omega(1)}$ probability.
\end{lemma}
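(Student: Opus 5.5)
The plan is to amplify Lemma~\ref{lem:ddim-wspd-const-prob} by independent repetition. The only probabilistic quantities are the total space and the size of the output \textsf{WSPD}. Correctness and the $O(1)$ round bound hold deterministically, conditioned on the high-probability event that all samples used for the $\alpha$-approximations really are $\alpha$-approximations (Lemma~\ref{lem:tree-size}). So all we need is, with high probability, at least one run that fits within a prescribed space budget.

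First, fix constants. By Lemma~\ref{lem:size WSPD}, the expected total space and the expected \textsf{WSPD} size of one run are at most $C\cdot(1/\eps)^{O(\ddim)} n\log^2 n$. Set the budget of a run to twice this quantity. By Markov's inequality, a single run exceeds the budget, in either space or output size, with probability at most $1/2$ (union over the two quantities, each with threshold $4\times$ its expectation if needed). Counting the negligible probability that the $\alpha$-approximation event fails, a run succeeds with probability at least $1/3$.

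Next, run $k=c\log n$ independent copies in parallel. Split the machines into $k$ groups of $(1/\eps)^{O(\ddim)}(n\log^2 n)/s$ machines each, for a total of $(1/\eps)^{O(\ddim)}(n\log^3 n)/s$. Each group first receives a copy of $P$; this takes $O(1)$ rounds by sorting or indexing the pairs $(p,j)$ for $j\in[k]$, since the total input size $kn$ fits the budget. Each group then builds its own partition cover tree with fresh randomness (Theorem~\ref{thm:partition-cover-tree-construction}) and runs the block algorithm.

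The main subtlety is ensuring that no group ever overflows its machines. The fix is that, before generating candidate pairs, each group already computes the counts of candidate pairs for every tree pair, which is the pre-assignment step. With a sum in $O(1)$ rounds, the group learns its total demand and aborts if the demand exceeds its budget. The group indices with successful runs are then collected by a minimum computation, which selects the smallest successful index; this index is broadcast, and that group's output is returned. Since the runs are independent, all $k$ fail with probability at most $(2/3)^{c\log n}=n^{-\Omega(1)}$. A final union bound with the $\alpha$-approximation failure events, $O(\log n)$ of them each with probability $n^{-\Omega(1)}$, gives success probability $1-1/n^{\Omega(1)}$. The returned \textsf{WSPD} has size $(1/\eps)^{O(\ddim)}n\log^2 n$, and the whole procedure uses $O(1)$ rounds, because the groups run synchronously and each takes $O(1)$ rounds.
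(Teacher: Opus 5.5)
Your proposal is correct and follows the paper's argument. Run $O(\log n)$ independent copies of the algorithm of Lemma~\ref{lem:ddim-wspd-const-prob} on disjoint groups of machines, terminate any group that would exceed its allocation, and return the output of a group that finishes, so that all groups fail only with probability $n^{-\Omega(1)}$; your Markov bound, your use of the pre-assignment counts to detect overflow before it happens, and your explicit replication of $P$ just fill in details the paper leaves implicit (for the replication, first spread the input to $O(s/\log n)$ points per machine so that no machine creates more than $O(s)$ copies).
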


\section{WSPD in Euclidean Space} \label{sec:euclid}
In this section, we present a fully scalable $O(1)$-round MPC algorithm to construct an $(1/\eps)$-\textsf{WSPD} of a set $P$ of $n$ points in the Euclidean space $\mathbb R^d$ for a constant $d$.
Our algorithm uses $(1/\varepsilon)^{O(d)}n$ total space and $O(n^\delta)$ space per machine for any $\delta\in (0,1)$.

\subsection{MPC Algorithm for Constructing a Compressed Quadtree} \label{sec:mpc-euclid-qt}
In this subsection, we describe our fully scalable $O(1)$-round MPC algorithm for constructing a compressed quadtree $\qt$ of $P$.
Here, we use $O(n)$ total space and $O(s)$ local space per machine, where $\localSize=n^\delta$ for any $\delta\in(0,1)$.
As the output, we represent the compressed quadtree $\qt$ as follows.
We store the nodes of $\qt$ in the DFS order, i.e., in the $\mathcal{Q}$-order of their corresponding cells.
Here, we associate each node $v$ with a representative point in $P$ lying in $\nodeCell{v}$, the diameter of $\nodeCell{v}$, and the IDs of the machines storing the parent and the children of $v$ in $\qt$.

As mentioned earlier, we first compute all cells corresponding to the nodes in the compressed quadtree, using its structural relation to the $\mathcal{Z}$-order of $P$.
Then, we compute the structural representation associated with each node by slightly modifying the $\mathcal{Q}$-order.

\subparagraph*{Finding all cells.}
We first sort all points in $P$ according to the $\mathcal{Z}$-order.
Then, for two consecutive points $p,q\in P$, we construct the smallest cell $\cell_{p,q}$ that contains both, and associate $\cell_{p,q}$ with one of $p$ and $q$.
Observe that we may construct multiple duplicates for some cells.
To ensure a canonical representation of the compressed quadtree, we sort all cells according to the $\mathcal{Q}$-order to remove duplicates.
Specifically, we assign an arbitrary rank to the duplicates for each cell, and then we remove the duplicates whose rank is not the smallest.

\begin{lemma}\label{lem:qt-cells}
    One can construct the cells corresponding to all nodes of $\qt$ in $O(1)$ rounds.
    Moreover, each node $v$ in $\qt$ is associated with its representative point contained in $\nodeCell{v}$.
\end{lemma}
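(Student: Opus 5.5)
The plan has two parts: a structural fact about compressed quadtrees and $\mathcal Z$-order, and an MPC implementation built only from \textsf{Sorting} and \textsf{Indexing}. The structural fact is the classical one: if $p_1,\ldots,p_n$ is the $\mathcal Z$-order of $P$, then the internal nodes of $\qt$ are exactly the cells $\cell_{p_i,p_{i+1}}$ for $1\le i<n$. The leaves of $\qt$ are the points themselves.

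For the structural fact I would argue both inclusions.

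\textbf{Every such cell is a node.} Let $\cell=\cell_{p_i,p_{i+1}}$ be the smallest quadtree cell containing both $p_i$ and $p_{i+1}$. These two points lie in distinct children of $\cell$, since otherwise a smaller common cell would exist. So $\cell$ contains at least two children with points of $P$. It is therefore neither empty nor contracted, and hence it survives in $\qt$.

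\textbf{Every internal node arises this way.} Let $v$ be an internal node of $\qt$. Its cell $\nodeCell{v}$ has at least two nonempty children $C_1,C_2$, consecutive among the nonempty children in the fixed sibling order. Because $\mathcal Z$-order is the DFS order, the points in each quadtree cell form a contiguous block of the $\mathcal Z$-order. So the last point of $P$ in $C_1$ and the first point of $P$ in $C_2$ are consecutive in $\mathcal Z$-order. Their smallest common cell is exactly $\nodeCell{v}$.

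This step carries the whole lemma, and I expect it to be the main obstacle, though it is really a careful contiguity argument rather than a hard one. The point that needs care is that contiguity holds for every cell of the regular quadtree. That follows directly from the definition of $\mathcal Z$-order as the restriction of the DFS order.

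The MPC implementation then goes as follows.
\begin{enumerate}
\item Sort $P$ by $\mathcal Z$-order in $O(1)$ rounds using \textsf{Sorting}. Comparisons take $O(1)$ time under the assumed floor and $\text{bit}_\Delta$ operations.
\item After sorting, each point $p_i$ learns its successor $p_{i+1}$. This is local except at machine boundaries, where one extra round passing the first item of each machine to the previous machine suffices.
\item The machine holding $p_i$ computes $\cell_{p_i,p_{i+1}}$ in $O(1)$ time. It stores the cell with $p_i$ as the representative point, which is valid because $p_i\in\cell_{p_i,p_{i+1}}$.
\item Leaves are the points themselves, each being its own representative.
\item This produces at most $2n-1$ cell copies. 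Sort them by $\mathcal Q$-order, which is also $O(1)$-time comparable from cell centers and side lengths. Identical cells become adjacent.
\item Use \textsf{Indexing}, or a predecessor comparison with the neighbouring item, to rank copies within each group of identical cells, and keep only the copy of rank one together with its representative.
\end{enumerate}

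Each step uses $O(1)$ rounds, $O(n)$ total space and $O(s)$ local space. Combined with the structural fact, this yields all nodes of $\qt$, each with a representative point in its cell.
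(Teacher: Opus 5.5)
Your proposal is correct and follows the paper's proof essentially step for step: the same contiguity argument shows that each internal node's cell equals $\cell_{p,q}$ for a $\mathcal Z$-consecutive pair. The MPC part is also the same: $\mathcal Z$-sorting, local computation of $\cell_{p_i,p_{i+1}}$ with a representative, and duplicate removal by $\mathcal Q$-order sorting and indexing. You additionally verify the reverse inclusion (every $\cell_{p_i,p_{i+1}}$ survives as a node of $\qt$), which the paper leaves implicit.
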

\begin{proof}  
    We first show that we construct the cells corresponding to all nodes of the compressed quadtree $\qt$.  
    It suffices to show that we obtain all cells corresponding to internal nodes of $\qt$, since each leaf corresponds to a point in $P$ itself.
    Consider any internal node $u \in \qt$. Let $u_1$ and $u_2$ be two of its children such that $u_1$ appears before $u_2$ in the DFS order, and no other child of $u$ appears between them.
    Let $p$ be the last point in $\nodeCell{u_1}$ and $q$ the first point in $\nodeCell{u_2}$ under the $\mathcal{Z}$-order.  
    Since the DFS order induces the $\mathcal{Z}$-order, $p$ and $q$ are consecutive in the $\mathcal{Z}$-order, and we have $\nodeCell{u} = \cell_{p,q}$.
    
    We now show that all such cells can be constructed in $O(1)$ rounds.  
    Since the $\mathcal{Z}$-order comparison between two points depends only on the coordinates of the points, we can sort all points in $P$ according to this order in $O(1)$ rounds.  
    Let $p$ and $q$ be a pair of consecutive points in the sorted order.
    If $p$ and $q$ are stored on the same machine, the machine constructs $\cell_{p,q}$ and associates it with $p$.  
    Otherwise, the machine storing $p$ sends it to the machine storing $q$.
    Then the machine storing $q$ constructs $\cell_{p,q}$ and associates it with $q$.

    We can also remove the duplicates of each cell in $O(1)$ rounds.
    More specifically, we can sort all cells in $O(1)$ rounds, since each machine constructs at most $O(\localSize)$ cells and the total number of cells is $O(n)$.
    Furthermore, we can assign the indices to all duplicates of each cell in $O(1)$ rounds, and then each machine can remove the stored cells whose rank is not the smallest.
\end{proof}

By Lemma~\ref{lem:qt-cells}, we have the cells of all nodes of $\qt$.
From now on, we consider each cell as a node of $\qt$.
In particular, each point in $P$ is a leaf of $\qt$.

\subparagraph*{Computing the structural representation.}
Now, for each node $v$ of $\qt$, we compute the IDs of the machines storing its parent and children, and associate $v$ with them.
To this end, we first define the \emph{$\mathcal{Q}(i)$-order}, a total order derived from the $\mathcal{Q}$-order by modifying the underlying DFS order.
Specifically, for $1 \leq i \leq 2^d$, the $\mathcal{Q}(i)$-order is defined by the DFS traversal of $\qt$ in which the $i$-th child of each node is visited before its siblings.
See Figure~\ref{fig:Qi-order}.
Notice that the $\mathcal{Q}(i)$-order comparison for any $i$ can be performed in constant time using only the coordinates of the vertices of the cells; see~\cite{chan2020locality}.

\begin{figure}
    \centering
    \includegraphics[width=0.8\textwidth]{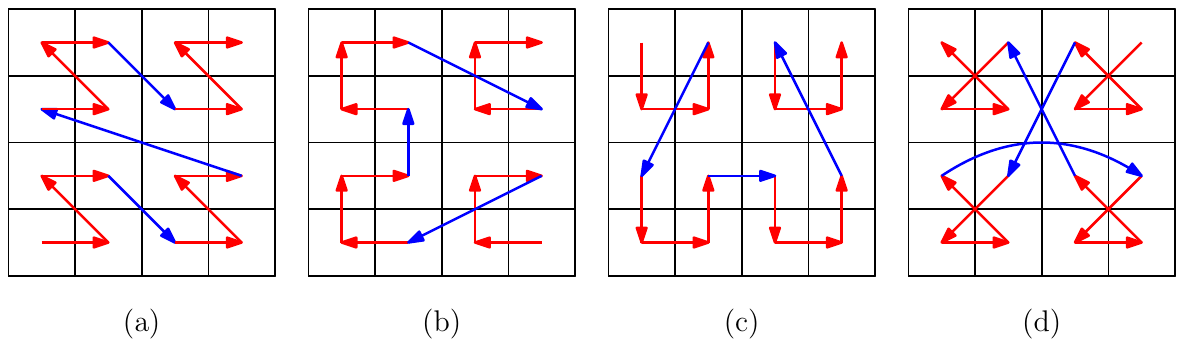}
        \caption{\small (a) Illustration of the $\mathcal{Q}$-order (or $\mathcal{Q}(1)$-order).
        (b)–(d) Illustration of the $\mathcal{Q}(2)$-, $\mathcal{Q}(3)$-, and $\mathcal{Q}(4)$-orders, respectively.
    \label{fig:Qi-order}}
\end{figure}

We first associate each node $v$ with the ID of the machine containing $v$.
Recall that all nodes of $\qt$ are distributed across the machines according to the DFS order, i.e., the $\mathcal{Q}$-order of their corresponding cells.

Next, we then iteratively sort all nodes in $\qt$ according to the $\mathcal{Q}(i)$-order of their corresponding cells, for $1 < i \leq 2^d$.
Let $u$ and $v$ be two consecutive nodes in the $\mathcal{Q}(i)$-order of their corresponding cells such that $u$ precedes $v$.
If $\nodeCell{u}$ contains $\nodeCell{v}$, then $u$ is the parent of $v$ in $\qt$, and by definition, $v$ is the $i$-th child of $u$.
We then associate $u$ with the machine ID of $v$, and associate $v$ with the machine ID of $u$.
Note that if a node $u$ has fewer than $2^d$ children, some iterations may attempt to associate $u$ with the children already considered.
To prevent this redundancy, we check whether each child of $u$ has already been considered before the association.

Finally, we sort all nodes of $\qt$ in the $\mathcal{Q}$-order of their corresponding cells to ensure that the machine IDs associated with each node correctly refer to their parent and children.

\begin{lemma}\label{lem:qt-association}
    One can compute the representation of all nodes of $\qt$, including the machine IDs of its parent and its children, in $O(1)$ rounds.
\end{lemma}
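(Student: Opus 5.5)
The plan is to prove correctness and round complexity separately, relying on the structure of the $\mathcal{Q}(i)$-orders. Correctness rests on one structural claim. Fix $i$ and consider the DFS traversal of $\qt$ in which the $i$-th child of each node is visited first. Let $v$ be the $i$-th child of $u$ in the underlying regular quadtree, where $v$ is a node of $\qt$, so that $u$ is the parent of $v$ in $\qt$. Then $v$ is visited immediately after $u$, hence $u$ and $v$ are consecutive in the $\mathcal{Q}(i)$-order.

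The claim needs care because of compression. A node of $\qt$ may be the $i$-th child of its compressed parent only in the sense that its cell lies inside the $i$-th quadrant of the parent's cell. So I would phrase "$i$-th child" as "the child whose cell is contained in the $i$-th subcell of $\nodeCell{u}$". In the pre-order $\mathcal{Q}(i)$ traversal, the first node of $\qt$ visited after $u$ is the first node of $\qt$ in the subtree under the $i$-th subcell, provided this subtree is nonempty. That node is exactly the child $v$.

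Conversely, suppose $u$ precedes $v$ consecutively and $\nodeCell{u}\supseteq\nodeCell{v}$. Then $v$ is the first-visited descendant of $u$, so it is a child of $u$ lying in the $i$-th subcell. It follows that ranging over $1\le i\le 2^d$ finds every parent–child edge exactly once, up to the redundancy the algorithm explicitly filters. A node with no child in the $i$-th subcell is followed by a node outside its cell, or by a child in another subcell. The latter case is the redundancy: containment holds, but the child is not an $i$-th child. We detect it by checking, in $O(1)$ time from coordinates, which subcell of $\nodeCell{u}$ contains $\nodeCell{v}$.

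For rounds, each $\mathcal{Q}(i)$ comparison takes $O(1)$ time from cell coordinates (as cited), so sorting $\qt$'s $O(n)$ nodes takes $O(1)$ rounds by \textsf{Sorting}. Before sorting, each node records the machine ID it occupies in the $\mathcal{Q}$-ordered layout. After sorting, each machine sends its last node to the next machine so that every consecutive pair is co-located, costing one round. The containment test and the subcell test are local. Each node $u$ then learns the original machine ID of its $i$-th child, and the child learns its parent's ID. These IDs travel with the nodes during sorting, so no extra routing is needed.

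Repeating this for $2^d=O(1)$ values of $i$ and performing a final $\mathcal{Q}$-sort restores the layout in which the recorded IDs are valid, since sorting is deterministic up to ties and cells are distinct after deduplication. The total is $O(1)$ rounds. Each node stores at most $2^d+1$ IDs, so total space stays $O(n)$ and per-machine space stays $O(s)$. The representative point and the diameter come from Lemma~\ref{lem:qt-cells}. The main obstacle I expect is the structural claim under compression, namely that compression never inserts an intermediate $\qt$-node between $u$ and its quadrant-$i$ child in the $\mathcal{Q}(i)$ order. I would argue it directly from pre-order DFS: all nodes visited between $u$ and $v$ would be descendants of $u$ in the $i$-th subcell, hence descendants of $v$, and these come after $v$ in pre-order.
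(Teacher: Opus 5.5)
Your proposal is correct and follows essentially the same route as the paper. For each of the $2^d=O(1)$ orders $\mathcal{Q}(i)$, you sort the nodes, co-locate consecutive pairs by shipping boundary nodes to the next machine, test containment locally to record parent/child machine IDs, and you finish with a $\mathcal{Q}$-sort. Your pre-order argument that compression never places another node of $\qt$ between $u$ and its child in the $i$-th subcell, and your subcell test for filtering redundant associations, make explicit two steps that the paper only asserts.
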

\begin{proof}
    For each $i \in [2^d]$, we sort all nodes of $\qt$ according to the $\mathcal{Q}(i)$-order of their corresponding cells.
    This sorting takes $O(1)$ rounds, as the total number of nodes is $O(n)$ and each machine handles at most $O(\localSize)$ nodes.    
    Once sorted, we check in parallel each pair of consecutive nodes $u$ and $v$ in the $\mathcal{Q}(i)$-order to determine whether $u$ is the parent of $v$.
    Specifically, if $u$ and $v$ are stored on the same machine, the machine checks whether $u$ is the parent of $v$ and, if so, associates $v$ with the machine ID of $u$ and $v$ with the machine ID of $v$.
    Otherwise, the machine storing $u$ sends it to the next machine, i.e., the machine storing $v$, which then performs the check and the association.
    Note that we can also check whether $v$ has already been considered as a child of $u$ using only local computation.
    Since each iteration takes $O(1)$ rounds, the total round complexity is $O(2^d) = O(1)$.
\end{proof}

Combining Lemma~\ref{lem:qt-cells} and Lemma~\ref{lem:qt-association}, we conclude the following theorem.

\begin{theorem}\label{thm:construct-qt}
    Given a set $P$ of $n$ points in $\mathbb R^d$ for a constant $d$, one can construct the compressed quadtree of $P$ of $O(n)$ size and complexity in $O(1)$ rounds, using $O(n)$ total space and $O(n^\delta)$ local space per machine for any $\delta \in (0,1)$.
\end{theorem}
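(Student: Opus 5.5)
The plan is to obtain the theorem directly from Lemma~\ref{lem:qt-cells} and Lemma~\ref{lem:qt-association}, and then check the size, space and round bounds separately.

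\emph{Correctness.} First, we argue that the node set produced by Lemma~\ref{lem:qt-cells} is exactly the node set of the compressed quadtree $\qt$.

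\begin{itemize}
\item \emph{Every produced cell is a node.} For consecutive points $p,q$ in the $\mathcal{Z}$-order, the cell $\cell_{p,q}$ is the smallest quadtree cell containing both points. Hence $p$ and $q$ lie in different children of $\cell_{p,q}$, so this cell has at least two nonempty children. It is therefore not contracted, and it is a node of $\qt$.
\item \emph{Every node is produced.} Lemma~\ref{lem:qt-cells} already shows that each internal node arises from some consecutive pair of points. Each leaf is a point of $P$.
\end{itemize}

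Next, we argue that the parent and child pointers are correct. A child $v$ of $u$ that is the $i$-th child in $u$'s sibling order appears immediately after $u$ in the $\mathcal{Q}(i)$-order. This is because the $\mathcal{Q}(i)$-order visits $u$ and then descends first into its $i$-th child, and in the compressed tree that child is $v$ itself. Consequently, the consecutive-pair test in Lemma~\ref{lem:qt-association} finds every parent--child edge. Conversely, it finds no false edge: containment of cells between consecutive nodes in a DFS order forces $u$ to be the parent of $v$.

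One subtlety needs care here. After compression, $v$ may be a descendant of the $i$-th quadrant of $\nodeCell{u}$ rather than the quadrant itself. Its index $i$ is still well defined by which quadrant of $\nodeCell{u}$ contains $\nodeCell{v}$, so the same argument applies.

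\emph{Size, space and rounds.}

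\begin{itemize}
\item \emph{Size.} The number of internal nodes is at most $n-1$, since each arises from one of the $n-1$ consecutive pairs, and there are $n$ leaves. Each node stores $O(1)$ words plus at most $2^d=O(1)$ child IDs, so $\qt$ has $O(n)$ complexity.
\item \emph{Total space.} All sorts operate on $O(n)$ items, so the total space is $O(n)$.
\item \emph{Local space.} Each machine holds $O(s)$ items at all times, and $s=n^\delta$ suffices for the sorting primitive.
\item \emph{Rounds.} The algorithm performs one $\mathcal Z$-order sort, one $\mathcal Q$-order sort for deduplication, $2^d$ sorts in the $\mathcal{Q}(i)$-orders, and a final $\mathcal Q$-order sort to restore the DFS layout. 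Each is $O(1)$ rounds by the sorting primitive, and the neighbor exchanges between adjacent machines take $O(1)$ rounds each. The total is $O(2^d)=O(1)$ rounds.
\end{itemize}

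The main obstacle I anticipate is the final re-sorting step: machine IDs recorded during the $\mathcal{Q}(i)$-sorts refer to positions in that order, not to positions in the final layout. I would handle this by identifying nodes by their $\mathcal Q$-order rank, computed once using the indexing primitive, instead of by machine IDs. Because the final distribution places $O(s)$ consecutive ranks per machine, the final machine ID of any node can be computed locally from its rank. With that adjustment, the pointer information remains correct after the last sort.
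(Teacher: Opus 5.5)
Your proposal is correct and follows the paper's route: the theorem is obtained by combining Lemma~\ref{lem:qt-cells} (all internal cells arise from $\mathcal{Z}$-consecutive pairs) with Lemma~\ref{lem:qt-association} (parent/child links found as consecutive pairs in the $\mathcal{Q}(i)$-order sorts). Your extra checks (no spurious cells or edges, at most $n-1$ internal nodes, $O(2^d)$ sorts) and your use of $\mathcal{Q}$-order ranks as node identifiers only make explicit, and slightly tighten, the paper's device of tagging each node with its machine ID in the $\mathcal{Q}$-order layout and restoring that layout by a final sort.
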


\subsection{MPC Algorithm for WSPD in the Euclidean Space} \label{sec:mpc-euclid-wspd}
In this subsection, we describe a fully scalable $O(1)$-round MPC algorithm for constructing an $(1/\eps)$-\textsf{WSPD} of $P$ of size $(1/\varepsilon)^{O(d)}n$ for $\varepsilon > 0$.
Here, we use $(1/\varepsilon)^{O(d)} n$ total space and $O(n^\delta)$ local space per machine for $\delta \in (0,1)$.
As the output, each pair in the \textsf{WSPD} consists of two nodes of the compressed quadtree, each represented by the diameter of its corresponding cell and a representative point in $P$ contained in that cell.

Suppose that we have the compressed quadtree $\qt$.
As mentioned earlier in Section~\ref{sec:overview-Euclidean}, we first find all pairing candidates for each node $v\in \qt$ by constructing auxiliary cells for every node.
Let $\grid{\ell}$ denote the subdivision of $\mathbb R^d$ into hypercubes of diameter $\ell$.
For each node $v$ of $\qt$, we consider all cells $\cell$ in $\grid{\diam(\nodeCell{\parent{v}})}$ that satisfy $\dist{}(\cell, \nodeCell{\parent{v}}) \leq (1/\eps)\cdot\diam(\nodeCell{\parent{v}})$.
Then we associate each auxiliary cell with $v$. Then for any pairing candidate $u$ of $v$, its parent cell contains an auxiliary cell $\cell$ constructed for $v$.
More specifically, 
if $\cell$ corresponds to a node of $\qt$, then $\cell=\nodeCell{u}$ or $\cell=\nodeCell{\parent{u}}$.
Otherwise, $u$ is a unique node such that $\nodeCell{u}$ is the largest cell contained in $\cell$ among all cells corresponding to a node in $\qt$.

Next, we determine the pairing candidates for each node of $\qt$ using its auxiliary cells.
We first sort all auxiliary pairs and all nodes in $\qt$ in the $\mathcal{Q}$-order of their corresponding cells.
In case of a tie between an auxiliary cell and an identical cell corresponding to a node in $\qt$, the auxiliary cell precedes the other.
After that, for each auxiliary cell, we compute its successor \emph{node}. 
Then for each auxiliary cell $\Box$ of $v$, its successor node $u$ and its children are pairing candidates of every child of $v$ if $\Box$ contains $\Box_u$. If it is not the case, $\Box$ does not contain any point of $P$, and thus we simply ignore it.
Due to the previous observation,
we can compute all pairing candidates of each node 
in this way. 

Note that this is possible since every auxiliary cell corresponds to a cell in the hierarchical partition defining the underlying (regular) quadtree of $\qt$, and the $\mathcal{Q}$-order is preserved under compression.
After that, we construct the $(1/\eps)$-\textsf{WSPD} of $P$ by checking whether each node and its pairing candidates satisfy Inequality~\ref{equ:cell-separated}.

    
    


\begin{lemma}\label{lem:wspd-Euclidean}
    Given the compressed quadtree $\qt$ of $P$ and $\varepsilon>0$, one can construct a $(1/\eps)$-well-separated pair decomposition $\wspd$ of $P$ of size $(1/\varepsilon)^{O(d)} n$.
    Each pair of $\wspd$ corresponds to two nodes in $\qt$, respectively.
    It takes $O(1)$ rounds and uses $(1/\varepsilon)^{O(d)} n$ total space and $O(n^\delta)$ local space per machine for any $\delta \in (0,1)$.
\end{lemma}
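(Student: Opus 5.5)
The plan has three parts. First, the candidate-pair reduction guarantees that every pair Algorithm~\ref{alg:har-wspd} would produce on $\qt$ is found. Second, a packing argument bounds the number of candidate pairs. Third, the candidate generation and filtering are implemented with the sorting and predecessor primitives in $O(1)$ rounds.

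For correctness, let $\wspd^*$ be the output of Algorithm~\ref{alg:har-wspd} on $\qt$, using cell diameters and cell distances as in the footnote, and let $\wspd^*_\textsf{all}$ be all pairs it generates. As in Lemma~\ref{lem:returned pair is a candidate pair}, a pair $\{u,v\}$ is generated from $\{u,\parent{v}\}$, and that pair was not separated. Hence $\dist{}(\nodeCell{\parent{u}},\nodeCell{\parent{v}})\le (1/\eps)\diam(\nodeCell{\parent{v}})$ and $\diam(\nodeCell{\parent{v}})\le\diam(\nodeCell{\parent{u}})$, so $u$ is a pairing candidate of $v$.

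Next I show that the auxiliary-cell procedure finds every pairing candidate $u$ of $v$. Take the grid cell $\cell\in\grid{\diam(\nodeCell{\parent{v}})}$ of the regular quadtree that meets $\nodeCell{\parent{u}}$. It exists and contains or equals a descendant-side piece because $\nodeCell{\parent{u}}$ is at least as large. It lies within distance $(1/\eps)\diam(\nodeCell{\parent{v}})$ plus an $O(\diam)$ slack of $\nodeCell{\parent{v}}$; I would enlarge the auxiliary radius by a constant to absorb this. Quadtree cells are nested or disjoint, and compression preserves the $\mathcal Q$-order. So the successor node of $\cell$ in the merged $\mathcal Q$-order is either the node with cell $\cell$ or the unique topmost node whose cell lies inside $\cell$. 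From that node, $u$ is that node or one of its children, as the paper observes.

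Every candidate pair is therefore enumerated. Keeping exactly those that satisfy Inequality~(\ref{equ:cell-separated}) gives a superset of $\wspd^*$ consisting only of well-separated pairs. This superset is a $(1/\eps)$-\textsf{WSPD}, and duplicates can be removed by sorting.

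For size, each node $v$ gets $(1/\eps)^{O(d)}$ auxiliary cells, by a volume count of grid cells of side $\diam/\sqrt d$ within distance $(1/\eps)\diam$. Each auxiliary cell yields one successor node plus at most $2^d$ children. So each node has $(1/\eps)^{O(d)}$ candidates, and $\qt$ has $O(n)$ nodes, giving $(1/\eps)^{O(d)}n$ candidate pairs and hence the bound on $|\wspd|$ and on total space.

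For the MPC implementation, each node locally generates its auxiliary cells. This needs $(1/\eps)^{O(d)}$ words per node, so machines holding many nodes must first redistribute work by counting and indexing so that each machine emits $O(s)$ cells. We then sort all auxiliary cells and nodes together by $\mathcal Q$-order, with ties broken so that auxiliary cells come first, and run \textsf{Predecessor} in reverse to attach to each auxiliary cell its successor node. One containment test decides whether to keep it.

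Attaching the children of the successor node uses the child machine IDs stored by Theorem~\ref{thm:construct-qt}; fetching them is one more sort/join. Then the children of $v$ are paired with these candidates. Since $v$ stores its children's IDs, I would instead emit pairs (child slot of $v$, candidate) and resolve them by sorting, for $(1/\eps)^{O(d)}$ factor blow-up. Finally each pair is tested locally.

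The main obstacle I expect is the join steps with nonuniform fan-out, namely ensuring no machine receives more than $O(s)$ words when expanding to children. The fix I would try first is to do every expansion as "emit tuples, then globally sort," since each tuple has $O(1)$ size and total count $(1/\eps)^{O(d)}n$. This keeps per-machine load at $O(s)$ with $O(1)$ rounds.
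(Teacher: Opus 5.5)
Your route is the paper's: candidate pairs read off the recursion of Algorithm~\ref{alg:har-wspd}, auxiliary grid cells at the scale of $\nodeCell{\parent{v}}$, one merged $\mathcal Q$-order sort with a successor lookup, a local separation test, and sort-based load balancing. Your size bound counts enumerated pairs directly, which is more self-contained than the paper's appeal to Har-Peled. The gap is in the middle step, ``the auxiliary-cell procedure finds every pairing candidate $u$ of $v$.'' The two conditions you derived constrain only $\parent{u}$, and under them this claim is false. Every child of $\parent{v}$ or of any of its ancestors satisfies them (distance $0$, larger diameter), so a node can have $\Theta(n)$ such candidates, which contradicts your own $(1/\eps)^{O(d)}$ count. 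Also, a child $u$ of a large $\parent{u}$ can lie much farther than $(1/\eps)\diam(\nodeCell{\parent{v}})$ from $\nodeCell{\parent{v}}$.

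Your choice of $\cell$ does not rescue this. If $\cell$ is a grid cell inside $\nodeCell{\parent{u}}$ near $\nodeCell{\parent{v}}$, it need not contain $\nodeCell{u}$. Its successor then lies under a different child of $\parent{u}$, and ``that node or one of its children'' misses $u$. If $\cell$ is the grid cell containing $\nodeCell{u}$, nothing bounds its distance to $\nodeCell{\parent{v}}$ by $O(1/\eps)\diam(\nodeCell{\parent{v}})$, so no constant enlargement of the radius helps. The paper's own sketch asserts the same over-broad claim.

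The missing idea is that you only need the pairs that actually arise, and these satisfy stronger conditions that you discarded.
\begin{itemize}
\item If $\{u,v\}$ comes from the call $\{u,\parent{v}\}$, then $\parent{v}$ was the side that was split, so $\diam(\nodeCell{u})\le\diam(\nodeCell{\parent{v}})$.
\item Non-separation of that call then gives $\dist{}(\nodeCell{u},\nodeCell{\parent{v}})<(1/\eps)\diam(\nodeCell{\parent{v}})$ for $u$ itself, not just for $\parent{u}$.
\item The node $u$ was created by splitting $\parent{u}$ in a call against some $w$ that is $\parent{v}$ or one of its ancestors. Hence $\diam(\nodeCell{\parent{u}})\ge\diam(\nodeCell{w})\ge\diam(\nodeCell{\parent{v}})$. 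This, not non-separation, is why your second inequality holds.
\end{itemize}
By alignment, the cell $\cell\in\grid{\diam(\nodeCell{\parent{v}})}$ containing $\nodeCell{u}$ satisfies $\nodeCell{u}\subseteq\cell\subseteq\nodeCell{\parent{u}}$ and $\dist{}(\cell,\nodeCell{\parent{v}})\le\dist{}(\nodeCell{u},\nodeCell{\parent{v}})$. So $\cell$ is an auxiliary cell, with no slack needed.

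Your successor argument then goes through: either $\cell=\nodeCell{u}$, or $\cell=\nodeCell{\parent{u}}$, or $\cell$ was compressed away and $u$ is the topmost node inside it. In the implementation, keep the indexing consistent as well. Auxiliary cells built from $\nodeCell{\parent{v}}$ give partners of $v$ itself, not of the children of $v$ as your last paragraph (and the paper's text) says.
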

\begin{proof}
    We first construct auxiliary cells for each node in parallel.
    Let $\corrM{v}$ be the machine storing $v$.
    Since we construct auxiliary cells of diameter the same as $\parent{v}$ for every node $v$, we first associate $v$ with $\nodeCell{\parent{v}}$.
    This can be possible in $O(1)$ rounds, since every node $v$ is associated with the diameter of its corresponding cells and the ID of its children.
    Specifically, for each node $v$, $\corrM{v}$ sends the diameter of its corresponding cell to its children.
    
    Then, $\corrM{v}$ constructs the auxiliary cells of $v$, and associates each of them with $v$.
    However, it is possible that the number of auxiliary cells constructed by one machine can exceed the size of the local memory, since the number of auxiliary cells of each node is $(1/\eps)^{O(d)}$.
    Thus, we use the additional machines for constructing the auxiliary cell of a node.
    In particular, $M_v$ broadcasts $v$ to the $i$th additional machine, which then constructs $[is,(i+s)]$-th auxiliary cells for $v$ in lexicographical order.
    Notice that $M_v$ can determine the ID of the additional machines for $v$, since all nodes are sorted in the DFS order, and the number of auxiliary cells for a node is the same for every node.
    This is possible since the number of all auxiliary cells for all nodes is $(1/\eps)^{O(d)}n$.
    Moreover, it takes $O(1)$ rounds.

    Next, we sort all auxiliary cells along with all nodes in $\qt$ in $O(1)$ rounds.
    With the same round complexity, we can find the successor node $v'$ of each auxiliary cell $\hat \cell_v$ of each node $v$ and store it in the machine storing $\hat \cell_v$ in parallel, by slightly modifying the \textsf{Predecessor} algorithm.
    Therefore, the machine storing $\hat \cell_v$ can determine if $v'$ is a pairing candidate of $v$, and associate $\hat \cell_v$ with $v'$ and the children of $v'$ accordingly.
    Notice that each machine may receive $(1/\eps)^{O(d)}\cdot \localSize$ messages if we send $\hat \cell_v$ to the children of $v'$ to associate them with $\hat \cell_v$.
    To avoid this, we associate every node with its children in $O(1)$ rounds before sorting.
    
    After that, each machine has a set of pairs, each consisting of a node and its pairing candidate.
    Recall that the auxiliary cells for a node are associated with that node.
    Each machine then determines which of these should be added to the \textsf{WSPD}.
    More specifically, for each pair $(v,v')$ where $v'$ is a pairing candidate of $v$, $M_{(v,v')}$ checks whether $v$ and $v'$ satisfies $\max\{\diam(\nodeCell{v}), \diam(\nodeCell{v'})\} \leq \varepsilon \cdot \dist{}(\nodeCell{v}, \nodeCell{v'})$. 
    If so, $M_v$ constructs a \textsf{WSPD} pair consisting of the representative points of $v$ and $v'$, along with the diameters of $\nodeCell{v}$ and $\nodeCell{v'}$.
    In particular, the diameter of $\nodeCell{v}$ is zero if $v$ is a leaf, i.e., $\nodeCell{v}$ is a point; the same holds for $v'$.
    
    By~\cite{har2011geometric}, it is immediate that the number of returned pairs is $(1/\varepsilon)^{O(d)} n$, and the pairs can be stored across the machines since each pair is represented by two cells.
\end{proof}

By combining Theorem~\ref{thm:construct-qt} and Lemma~\ref{lem:wspd-Euclidean}, we obtain the following result.

\begin{theorem}\label{thm:complete-wspd-euclid}
    Given a point set $P$ of size $n$ in $\mathbb R^d$ for a constant $d$ and $\varepsilon>0$, one can construct a $(1/\varepsilon)$-well-separated pair decomposition of $P$ of size  $(1/\varepsilon)^{O(d)} n$ in $O(1)$ rounds, using $(1/\varepsilon)^{O(d)} n$ total space and $O(n^\delta)$ local space per machine for any $\delta \in (0,1)$.
\end{theorem}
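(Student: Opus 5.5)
The plan is to obtain Theorem~\ref{thm:complete-wspd-euclid} by composing the two preceding results. We then check that the resource bounds add up rather than multiply.

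First, we invoke Theorem~\ref{thm:construct-qt}. It builds the compressed quadtree $\qt$ of $P$ in $O(1)$ rounds with $O(n)$ total space and $O(n^\delta)$ local space. It leaves $\qt$ in the representation required as input by Lemma~\ref{lem:wspd-Euclidean}:
\begin{itemize}
\item the nodes are stored in $\mathcal{Q}$-order;
\item each node carries a representative point of $P$ in its cell and the diameter of that cell;
\item each node knows the machine IDs of its parent and its children.
\end{itemize}
Before this, we normalize $P$ into $[0,1)^d$ in $O(d)=O(1)$ rounds using \textsf{Minimum/Maximum} and \textsf{Broadcast} on each coordinate. The normalization is a uniform scaling and translation, so every ratio of distances is preserved. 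Hence a $(1/\eps)$-\textsf{WSPD} of the normalized set is a $(1/\eps)$-\textsf{WSPD} of $P$.

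Second, we feed $\qt$ to Lemma~\ref{lem:wspd-Euclidean}. This yields a $(1/\eps)$-\textsf{WSPD} of size $(1/\eps)^{O(d)}n$ in $O(1)$ further rounds, using $(1/\eps)^{O(d)}n$ total space and $O(n^\delta)$ local space. The resource bounds then combine as follows.
\begin{itemize}
\item The number of rounds is $O(1)+O(1)$.
\item The total space is $O(n)+(1/\eps)^{O(d)}n=(1/\eps)^{O(d)}n$.
\item The local space stays $O(n^\delta)$ for every fixed $\delta\in(0,1)$, so the algorithm remains fully scalable.
\end{itemize}

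The only point needing care is the interface between the two steps. The output layout of Theorem~\ref{thm:construct-qt} must exactly match the input assumptions of Lemma~\ref{lem:wspd-Euclidean}. In particular, the lemma first pushes each parent's cell diameter to its children, and it assigns helper machines by the rank of $v$ in DFS order. Both steps rely on the parent/child machine IDs and on the $\mathcal{Q}$-order storage, which the construction provides after its final re-sort.

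Correctness and the size bound come from the lemma, which rests on~\cite{har2011geometric}, so nothing new is needed there.
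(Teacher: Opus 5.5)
Your proposal is correct and is exactly the paper's argument: the paper proves Theorem~\ref{thm:complete-wspd-euclid} simply by combining Theorem~\ref{thm:construct-qt} with Lemma~\ref{lem:wspd-Euclidean}, so the round counts and space bounds add to $O(1)$ rounds and $(1/\varepsilon)^{O(d)}n$ total space. Your two extra checks are consistent with what the paper assumes without stating it here: the uniform normalization into $[0,1)^d$, and the match between the quadtree's output and the lemma's input (nodes stored in $\mathcal{Q}$-order with parent/child machine IDs, representative points, and cell diameters).
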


\section{Applications of WSPD}\label{sec:appendix-apps}
In this section, we present four applications of our \textsf{WSPD}: fully scalable MPC algorithms for constructing a spanner, computing the closest pair, approximating the diameter, and constructing all $k$-nearest neighbors.
We begin by describing the first three applications, all of which are based on the algorithms in~\cite{har2011geometric}.
These applications rely on the representative points in every pair in a \textsf{WSPD}.
In contrast, our $k$-nearest neighbors algorithm follows the approach of Callahan and Kosaraju~\cite{callahan1995decomposition}, which relies on angular decomposition using cones, which is specific to Euclidean space.

\subsection{Spanner, Diameter Approximation, and Closest pair}\label{sec:appendix-three-apps}
In this subsection, we present $O(1)$-round MPC algorithms for constructing a $(1+\varepsilon)$-spanner, computing a $(1-\varepsilon)$-approximate diameter, and finding the closest pair for a set $P$ of $n$ points in $d$-dimensional Euclidean space or a doubling metric space with doubling dimension $d$ for a constant $d$.
Here, all algorithms use a \textsf{WSPD}, following the algorithms in~\cite{har2011geometric}.
Therefore, they use total space linear in the size of the \textsf{WSPD} used.
Recall that we can construct a $(1/\varepsilon)$-\textsf{WSPD} of size $(1/\varepsilon)^{O(d)}n$ and $(1/\varepsilon)^{O(\ddim)}n\log^2 n$ for $d$-dimensional Euclidean space and doubling metric space with doubling dimension $d$, respectively, for $\varepsilon>0$.
Our results in this section are obtained by combining Lemma~\ref{lem:ddim-wspd-const-prob}, Lemma~\ref{lem:ddim-wspd-high-prob}, and Theorem~\ref{thm:complete-wspd-euclid}.

\subparagraph*{Algorithms in~\cite{har2011geometric}.}
Given a set $P$ of $n$ points in $\mathbb{R}^d$, the three problems can be solved using a \textsf{WSPD} of $P$ by computing the distance between the representative points in each pair of a \textsf{WSPD}.
Specifically, a $(1+\varepsilon)$-spanner can be obtained by forming a graph whose vertex set consists of the representative points, with edges between the representative points of each pair of a $(c/\varepsilon)$-\textsf{WSPD}, where $c \geq 16$.
The number of edges equals the size of the \textsf{WSPD}.
To approximate the diameter within a factor of $(1 - \varepsilon)$, it suffices to take the maximum distance between the representative points of each pair of a $(4/\varepsilon)$-\textsf{WSPD}.
Similarly, the closest pair in $P$ is the pair of representative points with the minimum distance among all pairs of the $2$-\textsf{WSPD}. In particular, they showed that there exists a pair $(A,B)$ of the $2$-\textsf{WSPD} such that $|A|=|B|=1$, and
the pair of points in $A$ and $B$ is the closest pair in $P$.

All the approaches of~\cite{har2011geometric} rely on the triangle inequality and are therefore applicable to doubling metric spaces.

\subparagraph*{MPC implementation.}
To construct a $(1+\varepsilon)$-spanner, we first construct a $(c/\varepsilon)$-\textsf{WSPD} $\wspd$, where $c \geq 16$.
Then, we only need to compute the distance between the representative points of each pair of $\wspd$. 
This step requires only local computation, as all pairs in the $\wspd$ are explicitly stored together with their representative points.

\begin{theorem}\label{thm:euclid-spanner}
    Let $P$ be a set of $n$ points in $\mathbb R^d$ for constant $d$, and let $\varepsilon>0$.
    One can construct a $(1+\varepsilon)$-spanner of $P$ with $(1/\varepsilon)^{O(d)} n$ edges in $O(1)$ rounds, using $(1/\varepsilon)^{O(d)} n$ total space and $O(n^\delta)$ local space per machine for any $\delta \in (0,1)$.
\end{theorem}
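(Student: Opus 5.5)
The plan is to combine Theorem~\ref{thm:complete-wspd-euclid} with the classical spanner argument of~\cite{har2011geometric}, and then observe that the MPC implementation needs only local work. First, set $\varepsilon' = \varepsilon/c$ for a constant $c \geq 16$ and invoke Theorem~\ref{thm:complete-wspd-euclid} to build a $(c/\varepsilon)$-\textsf{WSPD} $\wspd$ of $P$. This takes $O(1)$ rounds, $(c/\varepsilon)^{O(d)}n = (1/\varepsilon)^{O(d)}n$ pairs and total space, and $O(n^\delta)$ local space. By Lemma~\ref{lem:wspd-Euclidean}, each pair $\{A_i,B_i\}$ is stored on a single machine together with representatives $a_i\in A_i$, $b_i\in B_i$. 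That machine outputs the edge $(a_i,b_i)$ with weight $\dist{}(a_i,b_i)$ using purely local computation. So the edge set $E$ has size $|\wspd|$ and the graph $G=(P,E)$ is produced within the same round and space bounds.

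Next, verify that $G$ is a $(1+\varepsilon)$-spanner by induction on the pairwise distance, following~\cite{har2011geometric}. The lower bound $\dist{}(q,s)\le d_G(q,s)$ is immediate from the triangle inequality. For the upper bound, order all pairs $(q,s)$ by distance. Let $\{A,B\}$ be the pair of $\wspd$ with $q\in A$, $s\in B$, and write $a,b$ for its representatives. Well-separation gives $\diam(A),\diam(B)\le (\varepsilon/c)\,\dist{}(A,B)\le(\varepsilon/c)\,\dist{}(q,s)$, so $\dist{}(q,a)$ and $\dist{}(b,s)$ are strictly smaller than $\dist{}(q,s)$. The induction hypothesis then gives $d_G(q,a)\le(1+\varepsilon)\dist{}(q,a)$, and likewise for $(b,s)$; the case $q=a$ is trivial. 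Summing,
\[
d_G(q,s)\le (1+\varepsilon)\bigl(\dist{}(q,a)+\dist{}(b,s)\bigr)+\dist{}(a,b)\le \bigl(1+2(1+\varepsilon)\varepsilon/c + 2\varepsilon/c\bigr)\dist{}(q,s),
\]
where we used $\dist{}(a,b)\le \dist{}(q,s)+2\varepsilon\dist{}(q,s)/c$. This is at most $(1+\varepsilon)\dist{}(q,s)$ once $c\ge 16$ and $\varepsilon\le 1$.

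There is one subtlety, which I expect to be the main (minor) obstacle. In our Euclidean construction, the separation test in Lemma~\ref{lem:wspd-Euclidean} uses cell diameters and cell distances rather than point-set quantities. The argument above needs the point-set condition $\max\{\diam(A),\diam(B)\}\le(\varepsilon/c)\dist{}(A,B)$. This follows because $A\subseteq\nodeCell{u}$, so $\diam(A)\le\diam(\nodeCell{u})$, and $\dist{}(A,B)\ge \dist{}(\nodeCell{u},\nodeCell{v})$. The remaining claims, namely the edge count and space, then follow directly from Theorem~\ref{thm:complete-wspd-euclid}.
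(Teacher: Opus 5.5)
Your proposal is correct and takes the paper's route: you build a $(c/\varepsilon)$-\textsf{WSPD} with $c\ge 16$ via Theorem~\ref{thm:complete-wspd-euclid}, and each machine locally emits the edge between the two stored representatives of each pair. The paper leaves the stretch bound to~\cite{har2011geometric}, so your induction on pairwise distance and your observation that the cell-based separation test implies point-set separation (since $A\subseteq\nodeCell{u}$) fill in detail the paper omits; for $\varepsilon>1$ you can simply run with $\varepsilon=1$.
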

\begin{theorem}
    Let $P$ be a set of $n$ points in a metric space of doubling dimension $\ddim$, for constant $\ddim$, and let $\varepsilon>0$.
    One can construct a $(1+\varepsilon)$-spanner of $P$ with $(1/\varepsilon)^{O(\ddim)} n\log^2 n$ edges in $O(1)$ rounds with a constant success probability, using $(1/\varepsilon)^{O(\ddim)} n\log^2 n$ total space and $O(n^\delta)$ local space per machine for any $\delta \in (0,1)$.
    The success probability can be amplified to $1-1/n^{\Omega(1)}$ by increasing the 
    total space by a factor of $O(\log n)$
\end{theorem}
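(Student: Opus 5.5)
The plan is to reduce the statement to the \textsf{WSPD} construction of Lemma~\ref{lem:ddim-wspd-const-prob} (and Lemma~\ref{lem:ddim-wspd-high-prob} for the amplified version), and then apply the Har-Peled spanner argument, which uses only the triangle inequality. Set $c=16$ and $\varepsilon'=\varepsilon/c$. Run the doubling-metric algorithm to obtain a $(1/\varepsilon')$-\textsf{WSPD} $\wspd$ of size $(1/\varepsilon)^{O(\ddim)}n\log^2 n$. The constant $c$ only affects the hidden constant in the exponent. The run takes $O(1)$ rounds, uses $(1/\varepsilon)^{O(\ddim)}n\log^2 n$ total space and $O(n^\delta)$ local space, and succeeds with constant probability. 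Each pair $\{A,B\}$ is stored with representatives $p_A\in A$ and $p_B\in B$. We output the edge set $E=\{(p_A,p_B)\}$ with weights $\dist{}(p_A,p_B)$. This is purely local computation, since each pair already sits on one machine together with its representatives. So $|E|\le|\wspd|$, and the round, space and size bounds carry over directly.

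For correctness, we prove by induction on $\dist{}(x,y)$, over the finitely many pairs of points, that $d_G(x,y)\le(1+\varepsilon)\dist{}(x,y)$. Fix $x\ne y$. By property (iii) of a \textsf{WSPD}, there is a pair $\{A,B\}$ with $x\in A$ and $y\in B$. Separation gives $\diam(A),\diam(B)\le\varepsilon'\dist{}(A,B)\le\varepsilon'\dist{}(x,y)$. Then $\dist{}(x,p_A)\le\varepsilon'\dist{}(x,y)<\dist{}(x,y)$, and the same holds for $\dist{}(y,p_B)$, so the induction hypothesis applies to both. By the triangle inequality, $\dist{}(p_A,p_B)\le(1+2\varepsilon')\dist{}(x,y)$. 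Hence
\[
d_G(x,y)\le(1+\varepsilon)\,2\varepsilon'\dist{}(x,y)+(1+2\varepsilon')\dist{}(x,y)\le(1+4\varepsilon'+2\varepsilon\varepsilon')\dist{}(x,y),
\]
which is at most $(1+\varepsilon)\dist{}(x,y)$ for $c=16$ and $\varepsilon\le1$. The base case is the closest pair, where both sets must be the singletons $\{x\}$ and $\{y\}$ because $\diam<\dist{}(x,y)$ forces this. The lower bound $\dist{}\le d_G$ is immediate because edge weights are true distances.

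One subtlety needs care, and I expect it to be the main obstacle. The algorithm represents sets by a representative and a radius, and tests the separation condition~(\ref{equ:cell-separated}) with $r_u,r_v$ and the representatives' distance, not with true diameters and set distance. We must check that this test implies genuine $(1/\varepsilon')$-separation up to a constant factor. From $\max\{r_u,r_v\}\le(\varepsilon'/8)\dist{}(p_u,p_v)$ we get $\diam(P_u)\le2r_u$. We also get $\dist{}(P_u,P_v)\ge\dist{}(p_u,p_v)-2\max\{r_u,r_v\}\ge(1-\varepsilon'/4)\dist{}(p_u,p_v)$. Together these give $\diam\le\varepsilon'\dist{}(P_u,P_v)$ for $\varepsilon'\le1$. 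This is absorbed in the constant $c$.

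For the amplification, we instead invoke Lemma~\ref{lem:ddim-wspd-high-prob}. It runs $O(\log n)$ independent copies and keeps a successful one, so the total space grows by a factor of $O(\log n)$ and the success probability becomes $1-1/n^{\Omega(1)}$. The spanner extraction is unchanged.
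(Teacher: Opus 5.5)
Your proposal is correct and follows the paper's route. You build a $(c/\varepsilon)$-\textsf{WSPD} with $c=16$ via Lemma~\ref{lem:ddim-wspd-const-prob}, or Lemma~\ref{lem:ddim-wspd-high-prob} for the amplified version. You then connect the stored representatives of each pair by purely local computation and rely on Har-Peled's triangle-inequality argument. The paper only cites that argument, whereas you write out the induction on $\dist{}(x,y)$ and check that the radius-based test~(\ref{equ:cell-separated}) already implies genuine separation; both steps are correct.
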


For $(1-\varepsilon)$-approximate diameter, we need only one more step, which computes the maximum distance between the representative points in each pair of a \textsf{WSPD}.
More specifically, we first construct a $(4/\varepsilon)$-\textsf{WSPD} $\wspd$, and compute the distance between the representative points of each pair of $\wspd$ as for the previous problem.
After that, we compute the maximum of them in $O(1)$ rounds.

\begin{theorem}\label{thm:euclid-diameter}
    Let $P$ be a set of $n$ points in $\mathbb R^d$ for constant $d$, and let $\varepsilon>0$.
    One can compute $(1-\varepsilon)$-approximate diameter of $P$ in $O(1)$ rounds, using $(1/\varepsilon)^{O(d)} n$ total space and $O(n^\delta)$ local space per machine for any $\delta \in (0,1)$.
\end{theorem}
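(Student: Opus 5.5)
The plan is to reduce the diameter computation to a single MPC pass over a $(4/\varepsilon)$-\textsf{WSPD}. First, invoke Theorem~\ref{thm:complete-wspd-euclid} with parameter $\varepsilon/4$ to obtain a $(4/\varepsilon)$-\textsf{WSPD} $\wspd$ of $P$ of size $(1/\varepsilon)^{O(d)}n$. This uses $O(1)$ rounds, $(1/\varepsilon)^{O(d)}n$ total space and $O(n^\delta)$ local space. Each pair $\{A,B\}\in\wspd$ is stored explicitly together with a representative point $a\in A$ and $b\in B$, the representatives being points of $P$ lying in the corresponding quadtree cells. Second, every machine computes, for each pair it stores, the distance $\dist{}(a,b)$ by purely local computation. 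Third, we apply the \textsf{Minimum/Maximum} subroutine over all machines holding pairs, which takes $O(1)$ rounds, and output the maximum value $D$ together with its pair of representatives. None of these steps increases the asymptotic space.

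Correctness is the geometric part. Since every output value is a distance between two points of $P$, we have $D\le\diam(P)$. For the lower bound, let $p,q\in P$ realize $\diam(P)$. By property (iii) of a \textsf{WSPD}, some pair $\{A,B\}\in\wspd$ has $p\in A$ and $q\in B$. Separation gives $\max\{\diam(A),\diam(B)\}\le(\varepsilon/4)\dist{}(A,B)\le(\varepsilon/4)\dist{}(p,q)$. By the triangle inequality,
\[
\dist{}(a,b)\ge \dist{}(p,q)-\diam(A)-\diam(B)\ge (1-\varepsilon/2)\,\diam(P)\ge(1-\varepsilon)\,\diam(P).
\]

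The only step needing care is that in the Euclidean construction, separation is certified using cell diameters and cell distances. These dominate the point-set diameters, and the cell distance is at most the point distance, so the same inequality holds. If Lemma~\ref{lem:wspd-Euclidean} certifies separation with a constant-factor slack, we absorb it by running the construction with parameter $\varepsilon/c$ for a suitable constant $c$; this only changes the constant inside $(1/\varepsilon)^{O(d)}$. The main technical point is therefore just this accounting of separation constants; the round and space bounds follow directly from Theorem~\ref{thm:complete-wspd-euclid} and the $O(1)$-round maximum primitive.
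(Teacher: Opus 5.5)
Your proposal is correct and follows the paper's proof: build a $(4/\varepsilon)$-\textsf{WSPD} via Theorem~\ref{thm:complete-wspd-euclid}, compute the representative distances locally, and take the maximum in $O(1)$ rounds. Your triangle-inequality argument, together with the remark that separation certified by cells implies separation of the underlying point sets, spells out the correctness that the paper delegates to Har-Peled's book.
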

\begin{theorem}
 Let $P$ be a set of $n$ points in a  metric space of doubling dimension $\ddim$ for constant $\ddim$, and let $\varepsilon>0$.
 One can compute $(1-\varepsilon)$-approximate diameter of $P$ 
  in $O(1)$ rounds with a constant success probability, using $(1/\varepsilon)^{O(\ddim)} n\log^2 n$ total space and $O(n^\delta)$ local space per machine for any $\delta \in (0,1)$.
    The success probability can be amplified to $1-1/n^{\Omega(1)}$ by increasing the 
    total space by a factor of $O(\log n)$.
\end{theorem}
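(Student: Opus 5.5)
The approach is to reduce to the doubling-metric WSPD construction (Lemma~\ref{lem:ddim-wspd-const-prob}, amplified by Lemma~\ref{lem:ddim-wspd-high-prob}) and then run the approximate-diameter procedure of~\cite{har2011geometric} on top of it. Concretely, I would first build a $(4/\varepsilon)$-\textsf{WSPD} $\wspd$ of $P$. Applying Lemma~\ref{lem:ddim-wspd-const-prob} with parameter $\varepsilon/4$ gives size and total space $(4/\varepsilon)^{O(\ddim)} n\log^2 n = (1/\varepsilon)^{O(\ddim)} n\log^2 n$. The construction takes $O(1)$ rounds, uses $O(n^\delta)$ local memory, and succeeds with constant probability. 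Each pair $\{u,v\}$ of $\wspd$ is stored explicitly together with its representatives $p_u,p_v$ and radii $r_u,r_v$. Each machine then computes $\dist{}(p_u,p_v)$ locally for the pairs it holds. The global maximum is found with the \textsf{Minimum/Maximum} subroutine in $O(1)$ rounds, and the witnessing pair of points is returned.

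Correctness follows the argument of~\cite{har2011geometric}, which uses only the triangle inequality. Let $x,y$ realize $\diam(P)$, and let $\{A,B\}$ be the pair of $\wspd$ with $x\in A$, $y\in B$. The pair is $(4/\varepsilon)$-separated, so $\max\{\diam(A),\diam(B)\}\le (\varepsilon/4)\dist{}(A,B)\le (\varepsilon/4)\diam(P)$. By the triangle inequality,
\[
\dist{}(p_A,p_B)\ \ge\ \dist{}(x,y)-\diam(A)-\diam(B)\ \ge\ (1-\varepsilon/2)\diam(P)\ \ge\ (1-\varepsilon)\diam(P).
\]
Every computed distance is a distance between two points of $P$, so the maximum is at most $\diam(P)$, and it is therefore a valid $(1-\varepsilon)$-approximation.

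The one point needing care is that the pairs of $\wspd$ are represented compactly through $r_u,r_v$ rather than exact diameters, and the separation test is inequality~\eqref{equ:cell-separated}. By the triangle inequality, $\diam(P_u)\le 2r_u$ and $\dist{}(P_u,P_v)\ge \dist{}(p_u,p_v)-r_u-r_v$. With the factor $\varepsilon/8$ in the test, the returned pairs are genuinely $(1/\varepsilon)$-separated, as the paper already asserts in the correctness lemma. So running the construction with $\varepsilon/4$ in place of $\varepsilon$ yields the $(4/\varepsilon)$-separation used above, at the cost of only a constant factor inside the $O(\ddim)$ exponent.

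For the probability statement, failure can only come from the WSPD construction exceeding its machine budget. Running $O(\log n)$ independent copies as in Lemma~\ref{lem:ddim-wspd-high-prob} and keeping any copy that completes raises the success probability to $1-1/n^{\Omega(1)}$ and multiplies the total space by $O(\log n)$. The final maximum computation is deterministic and does not affect these bounds.
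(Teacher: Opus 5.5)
Your proposal is correct and matches the paper's argument: build a $(4/\varepsilon)$-\textsf{WSPD} via Lemma~\ref{lem:ddim-wspd-const-prob}, compute the representative distances locally, take the maximum in $O(1)$ rounds, and rely on the triangle-inequality argument of~\cite{har2011geometric}, with amplification as in Lemma~\ref{lem:ddim-wspd-high-prob}. You also check explicitly that the compact $r_u,r_v$-based separation test still yields genuinely separated pairs, a detail the paper leaves implicit.
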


Similarly, we can find the (exact) closest pair.
We first construct a $2$-\textsf{WSPD} $\wspd$, and compute the distance between the representative points of each pair of $\wspd$ as for the previous problem.
After that, we compute the minimum of them in $O(1)$ rounds.

\begin{theorem}\label{thm:euclid-closest-pair}
    Let $P$ be a set of $n$ points in $\mathbb R^d$ for constant $d$, and let $\varepsilon>0$.
    One can compute the minimum distance between any two points in $P$ along with a realizing pair in $O(1)$ rounds, using $O(n)$ total space and $O(n^\delta)$ local space per machine for any $\delta \in (0,1)$.
\end{theorem}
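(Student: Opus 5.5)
The plan follows the pattern of Theorems~\ref{thm:euclid-spanner} and~\ref{thm:euclid-diameter}. First, invoke Theorem~\ref{thm:complete-wspd-euclid} with $1/\varepsilon = 2$, which is a constant. This gives a $2$-\textsf{WSPD} $\wspd$ of $P$ of size $2^{O(d)}n = O(n)$ in $O(1)$ rounds. It uses $O(n)$ total space and $O(n^\delta)$ local space per machine, which matches the claimed bounds. Each pair of $\wspd$ is stored as two nodes of the compressed quadtree $\qt$, each carrying a representative point of $P$ and the diameter of its cell. Each machine computes the distance between the two representatives of every pair it stores; this is purely local. The claimed value is $\min_{\{A,B\}\in\wspd} \dist{}(\text{rep}(A),\text{rep}(B))$. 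We compute it, together with the realizing pair of representatives, using the \textsf{Minimum} subroutine over all machines in $O(1)$ rounds, carrying the two points as payload.

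Correctness is the substantive part. Let $(p,q)$ be a closest pair of $P$. By property (iii) of a \textsf{WSPD}, some pair $\{A,B\}\in\wspd$ has $p\in A$ and $q\in B$. I claim that $|A|=|B|=1$. Suppose instead that $A$ contains a second point $p'\neq p$. The pair is $2$-separated, so $\diam(A)\le \tfrac12\dist{}(A,B)\le \tfrac12 \dist{}(p,q)$. Then $\dist{}(p,p')\le \tfrac12\dist{}(p,q)<\dist{}(p,q)$, which contradicts the minimality of $(p,q)$. The same argument applies to $B$. Hence the representatives of $A$ and $B$ are exactly $p$ and $q$, and the minimum over $\wspd$ is at most $\dist{}(p,q)$. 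Every representative pair consists of two distinct points of $P$, because $A\cap B=\emptyset$, so the minimum is also at least $\dist{}(p,q)$. Thus the minimum is exact and is attained by a closest pair.

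The one point needing care is the meaning of the separation condition in the Euclidean construction. Lemma~\ref{lem:wspd-Euclidean} checks $\max\{\diam(\nodeCell{v}),\diam(\nodeCell{v'})\}\le \varepsilon\,\dist{}(\nodeCell{v},\nodeCell{v'})$ on cells rather than on point sets. The point sets are $P\cap\nodeCell{v}$ and $P\cap\nodeCell{v'}$. Their diameters are at most the cell diameters, and their mutual distance is at least the cell distance. So the cell condition implies that the point sets are $2$-separated, and the singleton argument goes through unchanged.

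A leaf's representative is its unique point, with diameter zero. Therefore the output pair is the realizing pair itself and not just its distance. Because $\varepsilon$ is fixed to a constant, the factor $(1/\varepsilon)^{O(d)}$ becomes $O(1)$, which is why the bound is $O(n)$ total space.
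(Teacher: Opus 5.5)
Your proposal is correct and follows the paper's proof: build a $2$-\textsf{WSPD} via Theorem~\ref{thm:complete-wspd-euclid}, which makes $(1/\varepsilon)^{O(d)}=O(1)$ and gives $O(n)$ space; compute the representative distances locally; and take the minimum with the \textsf{Minimum} subroutine. The only difference is that you prove directly the fact the paper cites from~\cite{har2011geometric}, namely that a closest pair appears as a pair of singletons in any $2$-\textsf{WSPD}, and your check that the cell-based separation test implies separation of the underlying point sets is a sound addition.
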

\begin{theorem}
 Let $P$ be a set of $n$ points in a  metric space of doubling dimension $\ddim$ for constant $\ddim$.
  One can compute the minimum distance between any two points in $P$ along with a realizing pair in $O(1)$ rounds with a constant success probability, using $O(n\log^2 n)$ total space and $O(n^\delta)$ local space per machine for any $\delta \in (0,1)$.
    The success probability can be amplified to $1-1/n^{\Omega(1)}$ by increasing the total space by a factor of $O(\log n)$.
\end{theorem}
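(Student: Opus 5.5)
The plan is to reduce the closest-pair computation to a single WSPD computation followed by a global minimum. First I apply Lemma~\ref{lem:ddim-wspd-const-prob} with a fixed constant separation, $\eps=1/2$, which gives a $2$-\textsf{WSPD} $\wspd$. Since $\eps$ is a constant, the factor $(1/\eps)^{O(\ddim)}$ is a constant for constant $\ddim$. So $\wspd$ has $O(n\log^2 n)$ pairs and is built in $O(1)$ rounds with $O(n\log^2 n)$ total space and $O(n^\delta)$ local space, with constant success probability. Each pair is stored with its two representatives $p_u,p_v$. Locally, every machine computes $\dist{}(p_u,p_v)$ for each pair it holds. Then the \textsf{Minimum} subroutine finds the globally smallest value, together with its realizing pair, in $O(1)$ rounds.

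The correctness argument is the metric version of the argument in~\cite{har2011geometric}, and it uses only the triangle inequality. Let $(x,y)$ be a closest pair of $P$ and let $\delta^*=\dist{}(x,y)$. By property (iii), some pair $\{A,B\}\in\wspd$ has $x\in A$ and $y\in B$.

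I claim $|A|=|B|=1$. Suppose $A$ contained another point $x'$. Then $\dist{}(x,x')\le\diam(A)\le \eps\,\dist{}(A,B)\le \delta^*/2<\delta^*$, which contradicts the minimality of $\delta^*$. The same argument applies to $B$. Hence $A=\{x\}$ and $B=\{y\}$, the representatives are exactly $x$ and $y$, and the value $\delta^*$ is among the computed distances.

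Every computed distance is a distance between two distinct points of $P$, since $A\cap B=\emptyset$. So every computed distance is at least $\delta^*$, and the minimum equals $\delta^*$ and is realized by an actual pair.

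One point needs care. The WSPD built from the partition cover tree certifies separation through the test (\ref{equ:cell-separated}) on $r_u,r_v$ and $\dist{}(p_u,p_v)$, not directly through $\diam$ and set distance. I therefore verify that the test with constant $\eps/8$ implies genuine $(1/\eps)$-separation. This holds because $\diam(P_u)\le 2r_u$ and $\dist{}(P_u,P_v)\ge \dist{}(p_u,p_v)-r_u-r_v$. I expect this to be the main point to check, although it is routine.

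For amplification I run $O(\log n)$ independent copies, as in Lemma~\ref{lem:ddim-wspd-high-prob}. The output of any copy that finishes is correct, because correctness never depends on the randomness; only the resource bound does. This gives success probability $1-1/n^{\Omega(1)}$ at the cost of an extra $O(\log n)$ factor in total space.
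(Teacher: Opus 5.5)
Your proposal is correct and follows the paper's proof essentially step for step. You build a $2$-\textsf{WSPD} with Lemma~\ref{lem:ddim-wspd-const-prob} at constant $\eps$, compute representative distances locally, take the global minimum, and amplify via the $O(\log n)$ independent copies of Lemma~\ref{lem:ddim-wspd-high-prob}; the only difference is that you prove the singleton-pair property yourself and check that the test~(\ref{equ:cell-separated}) gives genuine $(1/\eps)$-separation, where the paper simply cites~\cite{har2011geometric}.
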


\subsection{$k$-Nearest Neighbors in Euclidean Space}\label{sec:appendix-k-nn}
In this subsection, we present our $O(k)$-round MPC algorithm for constructing all $k$-nearest neighbors for a set $P$ of $n$ points in $\mathbb{R}^d$ for constant $d$ using $O(kn)$ total space and $O(kn^\delta)$ local space per machine for any $\delta \in (0,1)$.
Our algorithm is based on the approach of Callahan and Kosaraju~\cite{callahan1995decomposition}.
The \emph{$k$-nearest neighbors} of a point $q$ in $P$ are the $k$ points in $P$ whose distances to $q$ are the smallest among all points in $P$. 
Now let $T$ be a compressed quadtree, and $\mathcal W$ be the 3-\textsf{WSPD}
constructed from $T$.

\medskip
We use two key geometric observations of~\cite{callahan1995decomposition}.  
First, for any pair $(A, B)$ of $\mathcal W$, if a point $a\in A$ is one of the $k$-nearest neighbors of a point $b\in B$, then $|B| \leq k$.
Next, let $U$ be a finite set of direction vectors in $\mathbb R^d$ such that every direction lies within an angle of at most $1 / 2$ radian from some vector in $U$ of size $O(2^d)=O(1)$ by~\cite{yao1982constructing}. 
For a point $p \in \mathbb{R}^d$ and a direction $\uVec \in U$, define $\cone(p, \uVec)$ as the cone with apex $p$, axis in direction $\uVec$, and opening angle at most $1/2$ radian.
For any two nodes $u,v$ in the compressed quadtree and direction $\uVec \in U$, if a point $p \in P \cap \nodeCell{u} \cap \cone(o_v, \uVec)$ is farther from $o_v$ than at least $k$ other points in $P \cap \nodeCell{u} \cap \cone(o_v, \uVec)$, then $\nodeCell{v}$ cannot contain any of the $k$-nearest neighbors of $p$, where $o_v$ denotes the center of $\nodeCell{v}$.
See Figure~\ref{fig:CK-obs2}.

\begin{figure}
    \centering
    \includegraphics[width=0.6\textwidth]{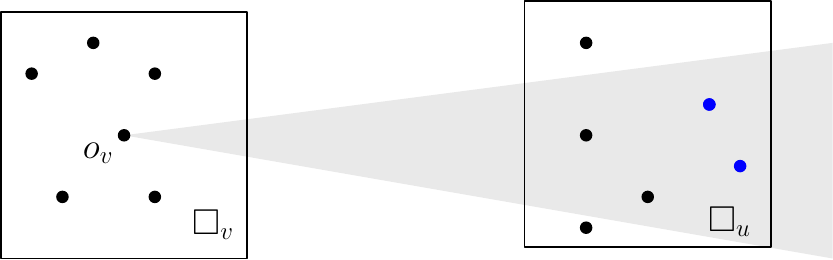}
        \caption{\small 
        The red cone is $\cone(o_v,\uVec)$ for a direction $\uVec\in U$.
        When $k=2$, any point in $\nodeCell{v}$ cannot be a $k$-nearest neighbor of the blue points.
    \label{fig:CK-obs2}}
\end{figure}

Motivated by these observations, we define the following structure as in~\cite{callahan1995decomposition}.
The \emph{search set}, denoted by $S(v,\uVec)$, of a node $v$ for a direction $\uVec$ is defined as the set of $k$ nearest points to $o_v$ among all points that lie in some cell $\cell\cap \cone(o_{v}, \uVec)$ paired with $\nodeCell{v}$ in $\wspd$, such that $|P\cap\cell| \leq k$.
Let $v_p$ be the leaf of $\qt$ corresponding to $p\in P$, and let $\primNode{p}$ be the lowest ancestor of $v_p$ in $\qt$ such that the union of the search sets for a direction $\uVec\in U$ of all nodes on the path from $v_p$ to $\primNode{p}$ in $\qt$ contains at least $k$ points.
By the previous two observation, we can prune the search space: 
Let $N(p, \uVec)$ denote the set of the $k$ nearest points of $p$ in the union of the search sets of all nodes on the path from $v_p$ up to the $(\lvl{v_p}-\lambda)$-level ancestor of $\primNode{p}$ in $\qt$,
where $\lambda$ is a constant specified~\cite{callahan1995decomposition}, and 
$\lvl{v_p}$ denotes the \emph{level} of $v_p$,
that is, the distance from $v_p$ to the root of $\qt$.
Then if $p$ is one of the $k$-nearest neighbors of some point $q$ in $P$, then
$q$ is contained in the union of $N(p, \uVec)$ over all $\uVec \in U$. 
If we have the union of $N(p,\uVec)$ for every point $p$, then we can find all $k$-nearest neighbors in $O(k)$ rounds using \textsf{Sorting} and \textsf{Minimum} operations. Thus the problem reduces to computing all $N(p,\uVec)$ for all points and all directions across the machines.

\medskip
We now describe our MPC algorithm for computing all $k$-nearest neighbors, following the algorithm of~\cite{callahan1995decomposition}.
More specifically, we find $N(p, \uVec)$ for all $p \in P$ and $\uVec \in U$, and then we compute the $k$-nearest neighbors of all points in parallel.
We first construct $S(v, \uVec)$ for every node $v$ in $\qt$ and every direction $\uVec \in U$.
To do this, for each node $v$ of $\qt$ such that $P\cap \nodeCell{v}\leq k$, we maintain all points in $\nodeCell{v}$, before constructing $\wspd$.
After that, we find $\primNode{p}$ for every point $p$.
Observe that there might be more than a constant number of nodes in the path in $\qt$ between $v_p$ and $\primNode{p}$ for a point $p$ and a direction $\uVec$, since the search sets of some nodes for a direction $\uVec$ might be empty.
Thus, it is unclear if one can compute $\primNode{p}$ in $O(1)$ rounds.
To avoid this case, we construct a new tree $\qt_\uVec$ consisting of the nodes of $\qt$ whose search set are non-empty.
We then find $\primNode{p}$ for all points $p$ in $P$ in parallel by propagating the search set of each node through its descendants in $\qt_\uVec$.
After that, we construct $N(p, \uVec)$ for every point $p \in P$ in a similar manner.

For a node $u$ of $T$, we use $\lvl{u}$ to denote the level of $u$, which is the distance from $u$ to the root of the tree, and $\ithDepParent{u}{i}$ to denote  the $i$-level ancestor of $u$ for $i>\lvl{u}$.

\subparagraph*{Maintaining at most $k$ points per cell.}
Suppose that we have all nodes of the compressed quadtree $\qt$ constructed by Lemma~\ref{lem:qt-cells}.
For each node $v\in \qt$, we associate $v$ with all points in $\nodeCell{v}$ if $|P \cap \cell| \leq k$.
To do this, we define two auxiliary points $x_v$ and $y_v$ representing the first and last vertices in $\nodeCell{v}$ under the $\mathcal{Z}$-order, respectively.
We sort all auxiliary points along with $P$ in the $\mathcal{Z}$-order.
In case of a tie between an auxiliary point and an identical point in $P$, the auxiliary point precedes.
Observe that the points contained in $\nodeCell{v}$ lie between $x_v$ and $y_v$ under the $\mathcal{Z}$-order.
See Figure~\ref{fig:knn-k-points}(a).
Therefore, we can determine whether $\nodeCell{v}$ contains at most $k$ points by comparing the ranks of $x_v$ and $y_v$ in the $\mathcal{Z}$-order.
If $|P \cap \nodeCell{v}| \leq k$, we associate $v$ with all points contained in $\nodeCell{v}$.

\begin{figure}
    \centering
    \includegraphics[width=0.65\textwidth]{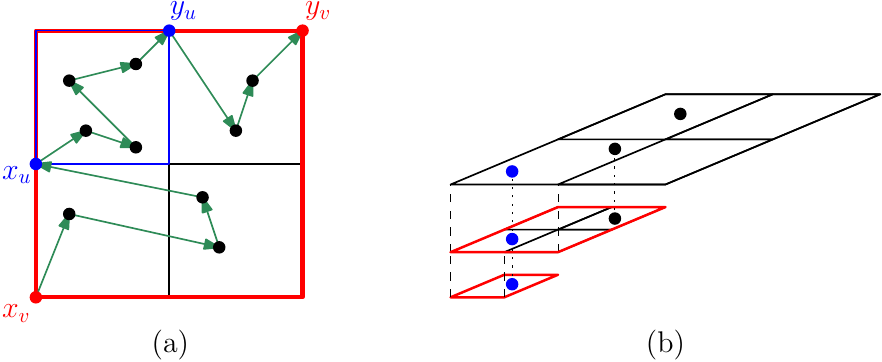}
        \caption{\small (a) The red cell is $\nodeCell{v}$ and the blue cell is $\nodeCell{u}$.
        The $\mathcal{Z}$ order is colored green.
        (b) When $k = 2$, two red cells are the only ones that contain at most $k$ points, including the blue point.
    \label{fig:knn-k-points}}
\end{figure}

\begin{lemma}
    One can associate each node in $\qt$ whose cell contains at most $k$ points with the corresponding subset of $P$ in $O(1)$ rounds, using $O(kn)$ total space and $O(kn^\delta)$ local space per machine.
\end{lemma}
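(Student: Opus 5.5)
The plan is to implement the $\mathcal{Z}$-order sweep described just before the statement, using only \textsf{Sorting}, \textsf{Indexing}, \textsf{Predecessor} and \textsf{Broadcast}.

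\textbf{Step 1: create auxiliary points.} By Lemma~\ref{lem:qt-cells}, the cells of all $O(n)$ nodes of $\qt$ are already distributed across the machines. Each machine holding a node $v$ locally computes two auxiliary points: $x_v$, the first vertex (corner) of $\nodeCell{v}$ under the $\mathcal{Z}$-order, and $y_v$, the last one. Both are obtained in $O(1)$ time from the cell's corner and side length. Each auxiliary point is tagged with $v$ and with a flag saying whether it is a ``start'' or an ``end'' marker.

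\textbf{Step 2: sort and count.} Sort the $n$ points of $P$ together with the $2|\qt|=O(n)$ auxiliary points in the $\mathcal{Z}$-order, in $O(1)$ rounds. Ties are broken as follows: start markers precede an identical input point, and end markers follow it, so that points lying on the boundary are counted correctly. Run \textsf{Indexing} on the subsequence of input points, so that every element knows how many points of $P$ precede it. This is done with \textsf{Predecessor} by marking the input points with $y=1$ and giving each auxiliary point the rank of its predecessor point. Since $\nodeCell{v}\cap P$ is exactly the set of points strictly between $x_v$ and $y_v$ in this order (a quadtree cell is a contiguous interval of the $\mathcal{Z}$-order), $|P\cap\nodeCell{v}|$ equals the difference of the two ranks. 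Bring the two ranks together by sorting the auxiliary points by node ID. Every node $v$ then knows its count $c_v$ and the global index $a_v$ of its first point, in $O(1)$ rounds.

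\textbf{Step 3: gather the points for small cells.} For each node with $c_v\le k$, we must deliver points $a_v,\dots,a_v+c_v-1$ of the $\mathcal{Z}$-sorted array to $v$. I would do this by request-and-reply. Node $v$ emits $c_v\le k$ request tuples $(a_v+j, v)$, and these are sorted together with the indexed points by index. Each point then answers every request for its own index, and the replies are sent back to the machine holding $v$.

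\textbf{Main obstacle: load balance.} A single point can lie in many small cells, namely one per ancestor with count at most $k$. However, along a root-to-leaf path the counts strictly increase, so these ancestors are at most $k$ in number. Hence each point receives at most $k$ requests, and the total number of requests and replies is $\sum_v c_v[c_v\le k]\le kn$. This fits within $O(kn)$ total space.

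Per machine, a machine stores $O(n^\delta)$ nodes, each emitting and receiving at most $k$ items, so it handles $O(kn^\delta)$ items. This matches the stated local space. A point's $\le k$ replies are all generated on the machine holding that point, so that machine also sends only $O(kn^\delta)$ words. Sorting over the $O(kn)$ request tuples takes $O(1)$ rounds with $O(kn^\delta)$ local memory.

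So the whole procedure uses $O(1)$ rounds, $O(kn)$ total space and $O(kn^\delta)$ local space, as claimed.
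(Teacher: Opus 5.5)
Your proof is correct. Its first two steps are the paper's: the markers $x_v,y_v$, one sort in $\mathcal Z$-order, and $|P\cap\nodeCell{v}|$ obtained as a difference of \textsf{Predecessor}-ranks. The two proofs differ only in how the at most $k$ points reach $v$.

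The paper computes no global indices. $\corrM{v}$ broadcasts $\nodeCell{v}$ and its ID to the contiguous machines from $\corrM{x_v}$ to $\corrM{y_v}$. Each of those machines tests locally which of its points lie in $\nodeCell{v}$ and sends them to $\corrM{v}$. Congestion is bounded by the same fact you use: a point lies in at most $k$ cells with at most $k$ points, because every internal node of the compressed quadtree splits its points into at least two nonempty parts.

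You instead turn the cell into the index range $[a_v,a_v+c_v-1]$ and do a sort-based join of request tuples against the indexed points. The paper's route is lighter: it needs no ranks, no request tuples, and only local geometric filtering. However, it runs many overlapping range broadcasts at once, while the \textsf{Broadcast} primitive is stated for a single source, so its load bound is only implicit. Yours uses nothing beyond sorting and the multiplicity bound. That makes the $O(kn)$ total and $O(kn^\delta)$ per-machine bounds explicit, at the cost of one more sort over $O(kn)$ tuples.

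A minor point you share with the paper's wording: the $\mathcal Z$-last corner of a cell is generally not the end of the cell's $\mathcal Z$-interval. For $[0,\frac12)^2$ that corner is $(\frac12,\frac12)$, which comes after the outside point $(\frac34,\frac14)$ in $\mathcal Z$-order. So $y_v$ should be the endpoint of the cell's interval, computed from its bit prefix. Your tie-breaking for end markers must then match whether that endpoint is taken as inclusive or exclusive.
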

\begin{proof}
    Each machine first constructs auxiliary points for all nodes in $\qt$ that it stores.
    Let $\corrM{v}$ be the machine that stores a node $v$ of $\qt$.
    Then, $\corrM{v}$ associates its auxiliary points $x_v$ and $y_v$ with the ID of $\corrM{v}$.
    Note that all auxiliary points for all nodes stored on a machine fit in that machine, and we can sort all points in $P$ along with their auxiliary points in $O(1)$ rounds.

    Next, for each node $v$, we determine whether $\nodeCell{v}$ contains at most $k$ points.
    Let $\corrM{x_v}$ and $\corrM{y_v}$ be the machines that store $x_v$ and $y_v$, respectively.
    We compute, for all nodes $v$, the successor of $x_v$ and the predecessor of $y_v$ among the points of $P$ in $O(1)$ rounds, using the \textsf{Predecessor} algorithm and a slight modification thereof.
    Consequently, the ranks of these points are stored in $\corrM{x_v}$ and $\corrM{y_v}$, respectively.
    These machines then send the rank information to $\corrM{v}$, which then computes $|P \cap \nodeCell{v}|$.

    Now, for each node $v$ such that $|P \cap \nodeCell{v}| \leq k$, we collect all points in $\nodeCell{v}$, and associate them with $v$.
    To do this, we send all points in $P \cap \nodeCell{v}$ to $\corrM{v}$.
    First, $\corrM{v}$ broadcasts $\nodeCell{v}$ and its ID to all machines between $\corrM{x_v}$ and $\corrM{y_v}$ in $O(1)$ rounds.
    Note that $\corrM{v}$ can obtain the IDs of $\corrM{x_v}$ and $\corrM{y_v}$ from the messages including the rank information of $x_v$ and $y_v$, provided that their machine IDs are also included.
    Each machine that stores a point in $P \cap \nodeCell{v}$ now knows $\nodeCell{v}$, and thus it can determine which point it stores lies in $\nodeCell{v}$.
    Such points are then sent to $\corrM{v}$, which associates $v$ with them.
    It takes $O(1)$ rounds, since each point in $P$ is contained in at most $k$ cells, each containing at most $k$ points from $P$.
    This follows from the fact that each node of the compressed quadtree splits its corresponding subset of $P$ into at least two non-empty parts.
    See Figure~\ref{fig:knn-k-points}(b).
\end{proof}

\subparagraph*{Constructing all search sets.}
We now construct the compressed quadtree $\qt$ by Lemma~\ref{lem:qt-association}, and construct a $3$-\textsf{WSPD} $\wspd$ from the quadtree $\qt$, following Lemma~\ref{lem:wspd-Euclidean}.
Here, we consider every pair in $\wspd$ as an ordered pair.
Let $(u, v)$ denote a pair of $\wspd$ consisting of the representations of two nodes $u, v \in \qt$.
After constructing $\wspd$, we duplicate each pair by adding its reverse: for every $(u, v)$, we generate $(v, u)$.

Next, we construct the search sets of all nodes in $\qt$. 
We begin by constructing a direction set $U$.  
For each pair $(u, v)$ with $|P \cap \nodeCell{u}| \leq k$, we generate a tuple $(v, \uVec, p)$ for every point $p \in P \cap \nodeCell{u}$ such that lies in the cone $\cone(o_v, \uVec)$ for some direction $\uVec \in U$.  
Once all such tuples are generated, we construct the search set $S(v, \uVec)$ for each node $v$ and direction $\uVec$ by taking the $k$ nearest points to $o_v$ among the points appearing in tuples of the form $(v, \uVec, \cdot)$.

\begin{lemma}
    All search sets for all nodes in $\qt$ and all directions in $U$ can be constructed in $O(k)$ rounds, using $O(kn)$ total space and $O(kn^\delta)$ local space per machine.
\end{lemma}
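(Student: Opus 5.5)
We first bound the total number of tuples and then obtain each search set by sorting and a bounded-size selection. By the preceding lemma, every node $v$ of $\qt$ with $|P\cap\nodeCell{v}|\le k$ already stores its points. Constructing $\wspd$ via Lemma~\ref{lem:wspd-Euclidean} (with the constant $3$ in place of $1/\eps$) and duplicating every pair with its reverse takes $O(1)$ rounds and yields $O(n)$ ordered pairs. The direction set $U$ has size $O(2^d)=O(1)$ and depends only on $d$, so every machine builds it locally without communication.

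Next we generate the tuples. For each ordered pair $(u,v)$ with $|P\cap\nodeCell{u}|\le k$, the machine holding the pair must receive the at most $k$ points of $\nodeCell{u}$ from $\corrM{u}$. This is the step we expect to be the main obstacle, because a single node $u$ may participate in many pairs of $\wspd$, so $\corrM{u}$ cannot send its points to every partner in one round.

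We plan to resolve this in two stages. First, sort the ordered pairs by their first component and use \textsf{Indexing} to give the pairs sharing $u$ consecutive machines. Second, let $\corrM{u}$ deliver its list of at most $k$ points by \textsf{Broadcast} to that machine range. Because the broadcast message size is restricted to $O(\localSize^\alpha)$, we split the list into $k$ single-point messages and send them in $k$ sequential broadcast phases, which costs $O(k)$ rounds. The total data volume is $\sum_{(u,v)}|P\cap \nodeCell{u}|\le k\cdot|\wspd| = O(kn)$. Enlarging the local memory to $O(k\localSize)$ absorbs the factor $k$ per machine, consistent with the claimed $O(kn^\delta)$ local space.

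Each receiving machine then tests every point $p$ locally against the $O(1)$ cones $\cone(o_v,\uVec)$ and emits the tuples $(v,\uVec,p)$, for $O(kn)$ tuples in total. To form $S(v,\uVec)$, we sort all tuples lexicographically by $(v,\uVec,\dist{}(p,o_v))$ in $O(1)$ rounds and use \textsf{Indexing} to give each tuple its rank within its group $(v,\uVec)$. We keep the tuples of rank at most $k$ and route them to $\corrM{v}$; each node then receives at most $|U|\cdot k=O(k)$ points, which fits in $O(k\localSize)$ local memory.

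Correctness follows directly from the definition of the search set: every point in a cell $\nodeCell{u}$ paired with $\nodeCell{v}$ that has at most $k$ points and lies in the cone is generated exactly as a tuple, and the rank filter selects its $k$ points nearest to $o_v$. Every stage except the point delivery uses $O(1)$ rounds, so the total is $O(k)$ rounds, with $O(kn)$ total space and $O(kn^\delta)$ local space.
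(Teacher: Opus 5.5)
Your proof is correct, but your $O(k)$ rounds come from a different step than the paper's.

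The paper lets each machine generate the tuples for the pairs it already stores. It tacitly assumes that the point lists of cells with at most $k$ points travel along with the pairs. It then sorts the tuples by $(v,\uVec)$ and shifts boundary tuples so that each block $\setM_{v,\uVec}$ holds a single group. Finally it extracts the $k$ points nearest to $o_v$ by $k$ successive \textsf{Minimum} computations, so all of its $O(k)$ rounds are spent on selection.

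You instead make explicit the delivery of $P\cap\nodeCell{u}$ to every machine holding a pair $(u,\cdot)$. The paper glosses over this step, and it is not free, since one node can occur in very many pairs of $\wspd$. You spend your $O(k)$ rounds there. Your selection step, a single sort by $(v,\uVec,\dist{}(p,o_v))$ followed by \textsf{Indexing}, takes only $O(1)$ rounds and needs no block rearrangement.

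Two small remarks. First, running one broadcast per node $u$ simultaneously, with several originating at the same machine, is a segmented broadcast rather than the single-range \textsf{Broadcast} listed in the preliminaries. It works here because the target ranges are disjoint except at their endpoints, but it deserves a sentence in your write-up. Second, you could avoid it altogether, as the paper does for auxiliary cells. Sort the nodes, each carrying its at most $k$ points as payload, together with the pairs, placing each node immediately before the pairs whose first component it is, and apply \textsf{Predecessor}. With $O(kn^\delta)$ local memory this should deliver the lists in $O(1)$ rounds. Combined with your selection step, the search sets would then be built in $O(1)$ rounds, which is better than the stated bound.
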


\begin{proof}
    By Lemma~\ref{lem:qt-association} and Lemma~\ref{lem:wspd-Euclidean}, it takes $O(1)$ rounds to construct $\qt$ and $\wspd$ using $O(n)$ total space and $O(n^\delta)$ local space per machine,  since $\varepsilon=3$.
    Recall that all pairs of $\wspd$ are distributed across machines.
    Accordingly, each machine duplicates every pair $(u, v)$ it stores.

    The first machine constructs the direction set $U$ and broadcasts it to all machines in $O(1)$ rounds, which can be done since $|U| =O(1)$.
    Then, each machine generates the tuples for the pairs $(u, v)$ it stores if $|P \cap \nodeCell{u}| \leq k$.
    Since the number of such tuples per pair is $O(k)$, the total number of tuples generated per machine fits within its memory.
    Furthermore, all tuples can be sorted in the lexicographic order of their nodes and directions in $O(1)$ rounds.

    After sorting, for each node $v$ and direction $\uVec \in U$, all tuples of the form $(v, \uVec, \cdot)$ are stored on a contiguous block of machines, denoted by $\setM_{v, \uVec}$.
    If $|\setM_{v, \uVec}| > 1$ and the first and last machines in this block also store tuples for other nodes or directions, then these machines send the tuples for $v$ and $\uVec$ to the next and previous machine, respectively, to ensure that $\setM_{v, \uVec}$ contains only tuples for $v$ and $\uVec$.
    This arrangement takes $O(1)$ rounds.

    Once this is done, each $\setM_{v, \uVec}$ computes the $k$ nearest points to $o_v$ among the tuples it stores.
    This is achieved by iteratively selecting the nearest point to $o_v$ up to $k$ times using the \textsf{Minimum} algorithm, which requires $O(k)$ rounds in total.
\end{proof}

\subparagraph*{Constructing $\qt_\uVec$.}
We now construct a new tree $\qt_\uVec$ for each direction $\uVec \in U$, consisting of the nodes of $\qt$ whose search sets are non-empty.  
To perform this in parallel for all directions, we first arrange the search sets for each $\uVec$ on a contiguous block of machines, ensuring that blocks for different directions are disjoint.  
This is achieved by sorting all search sets in the lexicographic order of their directions.  
Once sorted, we proceed to construct $\qt_\uVec$ for each $\uVec \in U$.

Before sorting, however, we ensure that $\qt_\uVec$ for every direction includes all leaves of $\qt$, even if the search set for a leaf of $\qt$ is empty.
This is necessary to correctly construct $N(p, \uVec)$ for all $p \in P$ at a later step.  
Note that for a leaf $v_p \in \qt$ corresponding to a point $p \in P$, the search set $S(v_p, \uVec)$ for a direction $\uVec$ may be empty and thus may not have been generated in the previous step.  
In such cases, we explicitly generate $S(v_p, \uVec)$ as an empty set.

\begin{lemma}\label{lem:knn-Tu}
    We can construct all trees $\qt_\uVec$ for $\uVec \in U$ in $O(1)$ rounds, using $O(kn)$ total space and $O(kn^\delta)$ local space per machine.
\end{lemma}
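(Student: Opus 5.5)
The plan is to define $\qt_\uVec$ as the tree whose node set is $V_\uVec$: the nodes $v$ of $\qt$ with $S(v,\uVec)\neq\emptyset$, together with all leaves of $\qt$ and the root. The parent of $w\in V_\uVec$ in $\qt_\uVec$ is its lowest proper ancestor in $\qt$ that also lies in $V_\uVec$. Since $\qt$ is stored in $\mathcal Q$-order (DFS order), I would compute these parents with a constant number of sorting and \textsf{Predecessor} calls, never walking up $\qt$, because the walk may be arbitrarily long.

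\textbf{Step 1.} Sort all search sets, including the empty ones explicitly created for the leaves, lexicographically by (direction, $\mathcal Q$-order of the node's cell). This takes $O(1)$ rounds. Because $|U|=O(1)$ and each search set has at most $k$ points, the total size is $O(kn)$, and each block $\setM_\uVec$ is contiguous. Within each block, every node carries its cell, so containment tests take $O(1)$ time by the remark on cell computations in Section~\ref{sec:overview-Euclidean}.

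\textbf{Step 2.} The key fact is the following. In DFS order, the parent in $\qt_\uVec$ of a node $w$ is the last node $a$ before $w$ in $\mathcal Q$-order with $\nodeCell{w}\subseteq\nodeCell{a}$. To see this, note that any $V_\uVec$-node lying between that ancestor and $w$ in DFS order is either a descendant of the ancestor or lies in an earlier sibling subtree, and the latter does not contain $w$. So the parent is the nearest preceding node that contains $w$, which is not simply the predecessor. I would compute it with the standard Euler-tour / interval trick. For each node $a\in V_\uVec$, create an opening marker at $a$'s position and a closing marker just after the last $\mathcal Q$-order cell inside $\nodeCell{a}$; these are the auxiliary points $x_a,y_a$ from the previous lemma, taken in $\mathcal Q$-order. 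Sorting all markers together with all nodes of $V_\uVec$ gives a balanced parenthesis sequence. The parent of $w$ is then the opening marker matching the innermost unclosed parenthesis before $w$. This is found as follows. Compute prefix depths with \textsf{Indexing}/prefix sums in $O(1)$ rounds. The parent is the last opening marker before $w$ whose depth equals $\text{depth}(w)-1$. This is a \textsf{Predecessor} query keyed by (depth, position), obtained by sorting by depth and then by position and running \textsf{Predecessor}, all in $O(1)$ rounds.

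\textbf{Step 3.} Each node records the machine ID of its $\qt_\uVec$-parent, and the parent gets its children's IDs by reversing the messages. A node may have many children in $\qt_\uVec$. To prevent a single machine from receiving too many messages, I would store only parent pointers and let each child learn its rank among siblings via \textsf{Indexing}. Because the children are contiguous after sorting by parent, they can be represented as a contiguous range of machines.

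The main obstacle is Step 2. Ancestor chains in $\qt$ that pass only through nodes with empty search sets have unbounded length, so the argument has to rely entirely on the order-based characterization and the depth-matched predecessor. Space is $O(kn)$ overall, and each machine handles $O(kn^\delta)$ words.
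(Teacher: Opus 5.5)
Your proof is correct, but it takes a different route from the paper's.

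\textbf{What the paper does.} Its proof is only bookkeeping. It explicitly creates empty search sets for the leaves and sorts by direction so that each $\qt_\uVec$ occupies its own contiguous block $\setM_\uVec$. It then invokes Lemma~\ref{lem:qt-association}, the trick in which a node and its child appear consecutively in one of the $2^d$ $\mathcal{Q}(i)$-orders.

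\textbf{What you do.} You characterize the $\qt_\uVec$-parent of $w$ as the last node of $V_\uVec$ before $w$ in $\mathcal{Q}$-order whose cell contains $\nodeCell{w}$. You recover it with a parenthesis-depth prefix sum followed by a depth-keyed \textsf{Predecessor}. You represent nodes of large degree by parent pointers plus contiguous sibling ranges.

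\textbf{What each route buys.} The paper's route is shorter. It also gives every node at most $2^d$ child pointers, which its later propagation lemma relies on. However, the adjacency trick is sound only when every node has at most one child per orthant. That holds in a compressed quadtree but fails in $\qt_\uVec$. Once an internal node with an empty search set is deleted, its parent can inherit arbitrarily many children in the same orthant. Only one of them is adjacent to the parent in each $\mathcal{Q}(i)$-order, so some parent links are never found. Your construction does not depend on fan-out or on orthant structure, so it actually covers the case that the citation of Lemma~\ref{lem:qt-association} glosses over.

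\textbf{Costs and details of your route.}
\begin{itemize}
\item It needs a prefix-sum primitive. This is standard in $O(1)$ rounds, though it is not on the paper's list.
\item You must define closing markers with $O(1)$-time comparisons and fix the tie-breaking. When a cell and its last descendant in $V_\uVec$ end at the same position, the inner cell must close first; ordering by the containment test does this.
\item Any later step that sends data from a node to its children must use \textsf{Broadcast} over the sibling range rather than $O(2^d)$ direct messages. That is unavoidable anyway given the true fan-out.
\end{itemize}
Adding the root to $V_\uVec$ is harmless and makes $\qt_\uVec$ a tree rather than a forest.
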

\begin{proof}
    It suffices to show that we can arrange the search sets for each direction on contiguous machine blocks, while also generating any missing empty search sets, within $O(1)$ rounds.
    Let $\setM_\uVec$ denote the block of machines that stores only the search sets for direction $\uVec$.
    We can then construct $\qt_\uVec$ on $\setM_\uVec$ by Lemma~\ref{lem:qt-association}, and thus construct all trees for all directions in $U$ in parallel within $O(1)$ rounds.

    We begin by generating an empty search set for every point in $P$ and every direction in $U$. 
    To do this, each machine generates an empty search set of the leaves in $\qt$ for all directions of $U$, which it stores.
    Since $|U|=O(1)$, each machine generates $O(1)$ empty search sets per point it stores, which fit within its local memory.
    Therefore, we can sort all search sets along with the generated empty ones in the lexicographic order of their directions in $O(1)$ rounds.
    We can then remove the empty copies in $O(1)$ rounds, since all duplicates for a fixed point $p$ and direction $\uVec$ are stored consecutively.
 
    Next, we arrange the search sets.
    For each direction $\uVec\in U$, if $|\setM_{\uVec}| > 1$ and the first and last machines in $\setM_\uVec$ also store the search sets for other directions, then these machines send the search sets for $\uVec$ to the next and previous machine, respectively, to ensure that $\setM_{\uVec}$ contains only tuples for $v$ and $\uVec$.
    This final rearrangement takes $O(1)$ rounds, and it concludes this proof.
\end{proof}

\subparagraph*{Finding $\primNode{p}$} Now, we find all $\primNode{p}$ for all points $p$ and directions $\uVec\in U$ in parallel.
Moreover, for all $p$, we will construct the union of the search sets for $\uVec$ of the nodes on the path from $v_p$ to $\primNode{p}$.
Since $\qt_\uVec$ preserves the hierarchical structure of the original compressed quadtree, the node $\primNode{p}$ in $\qt_\uVec$ is the same as that in $\qt$.

As aforementioned, we find $\primNode{p}$ by propagating the search sets of each node through its descendants in $\qt_\uVec$.
More specifically, each node $v$ in $\qt_\uVec$ first sends its search set $S(v,\uVec)$ along with itself to its children, and then the children of $v$ construct the union of its search set and $S(v,\uVec)$.
After that, $v$ repeatedly sends the most recently received set and node to its children, while maintaining the union of the received sets.
If the size of the union exceeds $k$ during propagation, $\parent{v}$ stops sending search sets to $v$.
By the following lemma, $v_p$ then has $\primNode{p}$ and the union of the search sets for $\uVec$ of the nodes on the path from $v_p$ to $\primNode{p}$.

\begin{figure}
    \centering
    \includegraphics[width=0.7\textwidth]{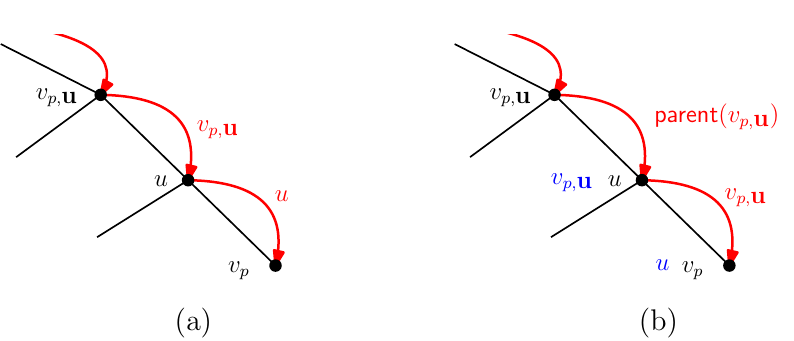}
        \caption{\small (a) Illustration of the first step of the propagation.
            The red arrows represent which node is propagated.
            (b) Illustration of the second step of the propagation.
            Each node maintains the left node colored blue.
    \label{fig:propagation}}
\end{figure}

\begin{lemma}
    After propagation, for all points $p$, $v_p$ has $\primNode{p}$ and the union of the search sets for $\uVec$ of the nodes on the path from $v_p$ to $\primNode{p}$.
\end{lemma}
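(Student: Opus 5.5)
The claim is that, after the propagation ends, each leaf $v_p$ stores the node $\primNode{p}$ together with the union of the search sets $S(w,\uVec)$ over all nodes $w$ on the path from $v_p$ up to $\primNode{p}$. I would prove this by induction on the number of propagation steps, using an invariant stated for every node of $\qt_\uVec$.

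\textbf{Invariant.} After step $t$ of the propagation, every node $w$ of $\qt_\uVec$ has received, one per step, the search sets of its ancestors in $\qt_\uVec$ at distances $1,2,\ldots,t$, in that order, together with the identities of those ancestors. This holds except that the sequence is truncated as soon as the running union held by the relevant node reaches size at least $k$; after that, nothing further is forwarded along that path. Consequently the union maintained at $w$ equals
\[
S(w,\uVec)\cup S(\parent{w},\uVec)\cup\cdots\cup S(\ithDepParent{w}{t},\uVec),
\]
where the ancestors are taken in $\qt_\uVec$, or else it is the shortest such prefix whose union has size at least $k$.

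\textbf{Base case.} At step $1$, each node sends its own search set to its children, so each child $w$ holds $S(w,\uVec)\cup S(\parent{w},\uVec)$. This is the invariant for $t=1$.

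\textbf{Inductive step.} At step $t+1$, the node $\parent{w}$ forwards the set and node it received at step $t$. By the invariant applied to $\parent{w}$, that item is its ancestor at distance $t$, which is the ancestor of $w$ at distance $t+1$. The only exception is when the stopping rule fires: once the union of $\parent{w}$, or of $w$ itself, has size more than $k$, transmission halts. This needs care. The stopping condition is evaluated on the union at $w$, which in general differs from the union at $\parent{w}$, so I would phrase the rule as "stop forwarding to $w$ once $w$'s union has size at least $k$". I would also argue that $w$ can tell its parent to stop within one extra round, and that anything arriving after that point is discarded.

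\textbf{Conclusion for the leaves.} With the invariant in hand, consider the leaf $v_p$. By the definition of $\primNode{p}$, the union up to $\primNode{p}$ is the first prefix whose size reaches $k$. At the moment of stopping, the most recently received node is therefore exactly $\primNode{p}$, and the maintained union is the required one. The leaf $v_p$ belongs to $\qt_\uVec$ by construction (Lemma~\ref{lem:knn-Tu}). The nodes of $\qt$ omitted from $\qt_\uVec$ have empty search sets, so removing them does not change the union. Hence the path in $\qt_\uVec$ yields the same union as the path in $\qt$, and the same $\primNode{p}$ (taking $\primNode{p}$ to be the nonempty contributor).

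\textbf{Main obstacle.} The hard part is not the correctness argument but making the invariant precise in the presence of the stopping rule and the pipelining. One subtlety is that a node continues forwarding older sets to its children even after its own union has exceeded $k$. The argument must therefore show that each child's truncation depends only on the child's own union, and not on the parent's.
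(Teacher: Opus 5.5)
There is a genuine gap, and it sits exactly at the point your ``main obstacle'' paragraph defers: you state what must be shown but never show it. In your inductive step, once the union at $\parent{w}$ reaches size $k$ upon receiving the set of some ancestor $a$, the node above $\parent{w}$ stops sending to it. The fact that $\parent{w}$ keeps forwarding older sets only guarantees that $w$ eventually receives everything up to $a$, never anything above $a$. So $w$'s truncation \emph{does} depend on its ancestors' truncations. Your invariant in the form ``the union at $w$ is the shortest prefix, measured at $w$, of size at least $k$'' therefore does not follow from what you wrote. Nor does your conclusion that the node $v_p$ is left holding is $\primNode{p}$. As it stands, you have not ruled out that some intermediate node on the path cuts $v_p$ off strictly below $\primNode{p}$.

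The missing idea is that the unions are nested along the path. Let $w$ lie strictly between $v_p$ and $\primNode{p}$, and let $a$ be an ancestor of $w$. The union of $S(x,\uVec)$ over the nodes $x$ from $w$ to $a$ is contained in the corresponding union over the nodes from $v_p$ to $a$. Hence, if $w$'s union reaches size $k$ when it receives $a$, the union from $v_p$ to $a$ already has at least $k$ points. By the minimality of $\primNode{p}$, the node $\primNode{p}$ is then at or below $a$. So $w$ has already received $\primNode{p}$ and will pass it down.

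This is precisely the paper's proof, phrased there as a contradiction: a node that blocks $\primNode{p}$ must have stopped at some $v'\neq\primNode{p}$ below $\primNode{p}$, which contradicts the definition of $\primNode{p}$. Insert this observation into your inductive step, applied to every node on the path, and your induction goes through. Your treatment of the one-round stop-signal delay (discarding late arrivals) and of $\qt_\uVec$ versus $\qt$ is fine.
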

\begin{proof}
    It suffices to show that $\primNode{p}$ reaches $v_p$ during the propagation.
    Assume that $\primNode{p}$ does not reach $v_p$.
    Then there exists a node $v$ on the path from $\primNode{p}$ to $v_p$ in $\qt_\uVec$ that constructs a union of search sets of size at least $k$, thereby preventing the propagation from $\primNode{p}$.
    Even after that, $v$ continues to propagate the most recently received search sets.
    Thus, it must be the case that some node $v' \ne \primNode{p}$ on the path between $\primNode{p}$ and $v$ causes $v$ to stop receiving sets from its parent.
    This contradicts the definition of $\primNode{p}$.
\end{proof}

Next, we determine which node is $\primNode{p}$ for each point in $P$ by checking which node is the most recently received node.
Furthermore, each leaf $v_p$ has the union of the search sets of the nodes in the path from $v_p$ up to $\primNode{p}$.

\begin{lemma}\label{lem:knn-primeNode}
    For all points $p$ in $P$ and directions $\uVec \in U$, we can determine $\primNode{p}$ in $O(k)$ rounds, using $O(kn)$ total space and $O(kn^\delta)$ local space per machine.
    Moreover, each machine storing $v_p$ of $\qt_\uVec$ also stores $\primNode{p}$ and the union of the search sets of the nodes on the path from $v_p$ up to $\primNode{p}$.
\end{lemma}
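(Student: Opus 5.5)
The plan is to implement the propagation described before the statement as synchronous MPC rounds on the trees $\qt_\uVec$ of Lemma~\ref{lem:knn-Tu}, handling all directions $\uVec\in U$ in parallel on their disjoint machine blocks $\setM_\uVec$. By Lemma~\ref{lem:qt-association} applied on $\setM_\uVec$, each node of $\qt_\uVec$ already knows the machine IDs of its parent and children. One propagation step is a single round: every node $v$ sends to its children the pair (node, search set) it received most recently, or $S(v,\uVec)$ with $v$ itself in the first step. Each child merges the received set into its running union and records the sender's originating node. When the running union at $v$ reaches size at least $k$, $v$ notifies its parent in the same round, and the parent stops forwarding to $v$. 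By the preceding lemma, when propagation finishes, each leaf $v_p$ holds $\primNode{p}$ as the last node it received, together with the required union. So what remains is to bound the number of steps and the space.

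The key point for the round bound is that every node of $\qt_\uVec$ other than the added leaves has a nonempty search set. Each transmitted set therefore adds at least one point to the receiver's union, provided we count points with multiplicity, or treat the points carried along the chain as distinct contributions, which is what the stopping rule tests. Hence after at most $k$ received nonempty sets the union at any node has size at least $k$, and that node stops receiving. Along any root-to-leaf path, a set originating at an ancestor $w$ reaches a descendant only while every intermediate node still has a union smaller than $k$. Since each intermediate node of $\qt_\uVec$ adds its own nonempty set, the set from $w$ can travel at most $k$ edges before some node on the path saturates. So after $k+1$ steps no further useful message is sent, and the propagation runs in $O(k)$ rounds. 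I expect the main obstacle to be making this counting rigorous, in particular arguing that a node which continues forwarding after saturating does not prolong the process. The fix is that forwarding continues only for sets already in flight, and each such set has at most $k$ hops of budget.

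For space, each node stores a union of at most $2k$ points: fewer than $k$ before the last merge, plus at most $k$ from the merged set. Each node also sends $O(k)$ words per child per round. A node may have up to $2^d=O(1)$ children, so messages per machine are $O(k)$ times the number of nodes stored on it, which is $O(kn^\delta)$ local space. The total is $O(kn)$ over the $O(n)$ nodes and $|U|=O(1)$ directions. Messages to children on other machines are routed by machine ID. Because each machine holds $O(n^\delta)$ nodes, each with $O(1)$ children, the in-degree and out-degree per round stay within $O(kn^\delta)$. Finally, each machine storing $v_p$ reads off the last received node as $\primNode{p}$, together with the stored union, which is the claimed output.
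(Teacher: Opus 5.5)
Your proposal is correct and follows the paper's own proof essentially step for step. It uses the same ingredients: parallel execution over the disjoint blocks $\setM_\uVec$, pipelined propagation of one hop per round along the stored parent/child machine IDs, a stop signal to the parent once the union reaches size $k$, the $O(k)$ round bound from the nonempty search sets of internal nodes of $\qt_\uVec$, and the space bound from $O(k)$-size messages to at most $2^d$ children per node. (That degree bound is not automatic once empty-search-set nodes are contracted away, but the paper assumes it in exactly the same way.)

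Your multiplicity hedge can be dropped. Suppose $q\in S(w,\uVec)\cap S(w',\uVec)$ with $w'$ a proper descendant of $w$. Then for any $p\in P\cap\nodeCell{w'}$ the pair $(p,q)$ would be covered by two different pairs of $\wspd$, which the standard decomposition (covering each point pair exactly once) rules out. So the search sets along a root-to-leaf path are pairwise disjoint, and the set union itself grows by at least one with every received set, which is what the paper's one-line round bound implicitly uses.
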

\begin{proof}
    Since each tree $\qt_\uVec$ is stored on a contiguous block of machines that is disjoint from the blocks for other directions by Lemma~\ref{lem:knn-Tu}, this can be done in parallel for all directions in $U$.
    Therefore, it suffices to consider a fixed direction $\uVec \in U$ and show that we can compute $\primNode{p}$ for all $p \in P$ in $O(k)$ rounds.

    Let $\corrM{v}$ denote the machine storing node $v$ in $\qt_\uVec$.  
    Initially, for every node $v$, $\corrM{v}$ sends $v$ itself and $S(v, \uVec)$ to the machines storing its children. 
    For each child $v'$, $\corrM{v'}$ then construct $S(v', \uVec) \cup S(v, \uVec)$.  
    If $\corrM{v}$ subsequently receives $S(\parent{v}, \uVec)$, it also updates its union accordingly. After that, each machine continues propagating the new received set and node to its children.
    We repeat this, and each machine maintains the union of all received search sets.
    During the propagation, if the size of the union for $v$ becomes at least $k$, $\corrM{v}$ sends a stop signal to $\corrM{\parent{v}}$, which then stops propagating further search sets to $v$.
    
    Observe that at the end, each machine storing a leaf $v_p$ also obtains $\primNode{p}$ and the union of the search sets of the nodes on the path from $v_p$ to $\primNode{p}$.  

    We now show that it takes $O(k)$ rounds.
    By Lemma~\ref{lem:qt-association}, each node in $\qt_\uVec$ knows the IDs of the machines storing its parent and children, which allows search sets to be propagated between connected nodes.  
    Moreover, in each round, every machine sends at most $2^d n^\delta$ messages each of size $k$, so a single propagation step can be completed in one round.  
    Since the union size increases by at least one in each iteration, and all internal nodes in $\qt_\uVec$ have non-empty search sets, the total number of iterations is at most $k$.
\end{proof}

\subparagraph*{Constructing $N(p, \uVec)$.}
We now construct $N(p, \uVec)$ for every point $p \in P$, following a similar approach as in the previous step.
More specifically, we propagate search sets for $O(\lambda)$ rounds.
Then, for each point $p$, the node $\primNode{p}$ obtains the union of the search sets of the nodes on the path from $\primNode{p}$ to $\ithDepParent{\primNode{p}}{\lvl{\primNode{p}} - \lambda}$.
After that, we send the union of search sets of $\primNode{p}$ to $v_p$ in $O(k)$ rounds by propagation as in the previous step.

\begin{lemma}\label{lem:knn-N}
    For all points in $P$ and direction $\uVec\in U$, we can construct $N(p,\uVec)$ in $O(k)$ rounds, using $O(kn)$ total space and $O(k n^\delta)$ local space per machine.
    Moreover, each machine storing $v_p$ of $\qt_\uVec$ also stores $N(p,\uVec)$.
\end{lemma}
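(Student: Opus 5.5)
The plan is to reuse the propagation machinery of Lemma~\ref{lem:knn-primeNode}, run on each tree $\qt_\uVec$ separately. By Lemma~\ref{lem:knn-Tu}, the trees for different directions live on disjoint contiguous machine blocks, and $|U|=O(1)$. So it suffices to fix one direction $\uVec$ and show that $N(p,\uVec)$ can be built for all $p$ in $O(k)$ rounds.

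The construction has three phases.

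\emph{Phase 1: ancestor unions.} Every node $w$ of $\qt_\uVec$ computes the union of the search sets on its path to its $\lambda$-th ancestor in $\qt_\uVec$. We run $\lambda$ synchronous steps. In step $j$, each node forwards to its children the pair it received in step $j-1$ (the ancestor identity together with its search set), and each node keeps the running union. Every search set has size at most $k$, so each node stores $O(\lambda k)=O(k)$ words. A machine holding $O(n^\delta)$ nodes sends at most $2^d$ messages of size $O(k)$ per node, which fits in $O(kn^\delta)$ local space. Hence one step costs one round, and the phase costs $O(\lambda)=O(1)$ rounds.

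\emph{Phase 2: relevant ancestor window.} Each $v_p$ already knows $\primNode{p}$ from Lemma~\ref{lem:knn-primeNode}, so $N(p,\uVec)$ is determined by the union collected at $\primNode{p}$ in Phase~1 together with the union along the path from $v_p$ to $\primNode{p}$. That second union is already stored at $v_p$ by Lemma~\ref{lem:knn-primeNode}.

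One point needs care: the $\lambda$ levels above $\primNode{p}$ must be counted in $\qt$, whereas Phase~1 counts them in $\qt_\uVec$. Nodes of $\qt$ missing from $\qt_\uVec$ have empty search sets, so they contribute nothing to the union. I would therefore store each node's level in $\qt$ and, during Phase~1, stop forwarding once the level difference exceeds $\lambda$. The propagation then collects exactly the nonempty search sets on the correct $\qt$-path.

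\emph{Phase 3: delivery and selection.} The Phase~1 union at $\primNode{p}$ must reach $v_p$, and it must do so without many-to-one congestion. I would propagate it downward along $\qt_\uVec$ exactly as in Lemma~\ref{lem:knn-primeNode}. Each node forwards the union it received from its parent, tagged with the sender's identity, and a node $v$ stops forwarding the union of an ancestor $a$ once $a$ is no longer $\primNode{q}$ for any leaf $q$ below $v$.

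The main obstacle is bounding the number of rounds of this phase. The path from $\primNode{p}$ down to $v_p$ can be long in $\qt$, but it contains only nodes with nonempty search sets, and their union before $\primNode{p}$ has size below $k$. By the counting argument of Lemma~\ref{lem:knn-primeNode}, that path therefore has at most $k$ nodes of $\qt_\uVec$, so $O(k)$ rounds suffice. Congestion is controlled because each node relays $O(1)$ messages of size $O(k)$ per round.

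Finally, $v_p$ takes the union of what it received, which has size $O(\lambda k)=O(k)$, and locally selects the $k$ points nearest to $p$. This gives $N(p,\uVec)$ stored at the machine holding $v_p$. Total space is $O(kn)$, since there are $O(n)$ nodes, each carrying $O(k)$ words, over $O(1)$ directions.
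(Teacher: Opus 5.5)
Your proposal is correct and follows essentially the same route as the paper. You compute at every node of $\qt_{\uVec}$ the union of the search sets over its $\lambda$ nearest ancestors in $O(\lambda)$ propagation steps, push the union held at $\primNode{p}$ down to $v_p$ in $O(k)$ rounds using the fact that this path has at most $k$ nodes of $\qt_{\uVec}$, and combine it at $v_p$ with the path union already stored there by Lemma~\ref{lem:knn-primeNode}. Your extra care about measuring the $\lambda$ levels in $\qt$ rather than in $\qt_{\uVec}$, and about pipelining the downward messages, fills in details the paper leaves implicit (you would still owe a word on how the $\qt$-levels are obtained, and the pruning rule in Phase~3 is unnecessary, since plain pipelining already limits each node to $O(1)$ messages of size $O(k)$ per round).
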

\begin{proof}
    Let $\lambda$ be the constant specified by~\cite{callahan1995decomposition}.
    For every node $v$ in $\qt_\uVec$, let $\tilde N(v, \uVec)$ denote the union of the search sets of the nodes on the path from $v$ to its $(\lvl{v} - \lambda)$-level ancestor.
    Following the approach of the proof of Lemma~\ref{lem:knn-primeNode}, we can determine which node is $\primNode{p}$ and construct $\tilde N(\primNode{p}, \uVec)$ for all nodes $v$ in $O(\lambda)$ rounds.
    Notice that $\tilde N(\primNode{p}, \uVec)$ is stored in $\corrM{\primNode{p}}$.

    Then, for all points $p$ in $P$, we send $\tilde N(\primNode{p},\uVec)$ to the machine storing $v_p$ by the same propagation manner.
    This can be done in $O(k)$ round since $|\tilde N(\primNode{p}, \uVec)|=O(k)$ and the distance between $\primNode{p}$ and $v_p$ is at most $k$.
    Subsequently, each machine $\corrM{\primNode{p}}$ constructs $N(p, \uVec)$ by combining $\tilde N(\primNode{p}, \uVec)$ with the received set.
\end{proof}

\subparagraph*{Constructing $k$ nearest neighbors.}
Now, we have $N(p, \uVec)$ for every $p \in P$ and every direction $\uVec \in U$.  
For each point $p$, we associate it with every point $q \in N(p, \uVec)$, and sort the resulting pairs in the lexicographic order of $q$.
As a result, all candidates for the $k$ nearest neighbors of each point $q$ are stored in a contiguous block of machines, denoted by $\setM_q$.
We then compute the $k$ nearest neighbors of every point $q$ by repeatedly selecting the nearest point from $\setM_q$, $k$ times.

\begin{theorem}\label{thm:euclid-KNN}
    Given a set $P$ of $n$ points in $\mathbb R^d$ for constant $d$, one can compute $k$ nearest neighbors of all points of $P$ in $O(k)$ rounds, using $O(kn)$ total space and $O(kn^\delta)$ local space per machine for any $\delta \in (0,1)$.
    For a given point $p\in P$, finding $k$-nearest neighbors of $p$ takes $O(1)$ rounds.
\end{theorem}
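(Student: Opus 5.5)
The plan is to assemble the pipeline built in this subsection and bound the cost of each stage. First, by Theorem~\ref{thm:construct-qt} and Lemma~\ref{lem:wspd-Euclidean} with separation parameter $3$, build $\qt$ and the $3$-\textsf{WSPD} $\wspd$ in $O(1)$ rounds with $O(n)$ total space. Then apply, in order, the lemma associating each node with the points of its cell when that cell has at most $k$ points, the lemma building all search sets $S(v,\uVec)$, Lemma~\ref{lem:knn-Tu}, Lemma~\ref{lem:knn-primeNode} and Lemma~\ref{lem:knn-N}. Each of these takes $O(1)$ or $O(k)$ rounds with $O(kn)$ total space and $O(kn^\delta)$ local space. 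Since $|U|=O(1)$, running all directions in parallel multiplies the space only by a constant. At the end, each machine storing a leaf $v_p$ holds $N(p,\uVec)$ for every $\uVec\in U$.

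For correctness I would rely on the Callahan--Kosaraju property quoted earlier. If $p$ is one of the $k$ nearest neighbors of $q$, then $q\in\bigcup_{\uVec\in U}N(p,\uVec)$. So for every $q$, the candidate set
\[
C_q=\{p : q\in N(p,\uVec)\text{ for some }\uVec\in U\}
\]
contains all $k$ true nearest neighbors of $q$. I would emit the pairs $(q,p)$ for every $q\in N(p,\uVec)$ and sort them lexicographically by $q$, which takes $O(1)$ rounds. This puts each $C_q$ on a contiguous block $\setM_q$. Then $k$ iterations of \textsf{Minimum} on $\setM_q$ recover the $k$ nearest neighbors, deleting the selected element after each iteration, for $O(k)$ rounds in total. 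The total number of emitted pairs is $\sum_p\sum_{\uVec}|N(p,\uVec)|\le |U|kn=O(kn)$, which gives the total space bound. The blocks $\setM_q$ may be large for some $q$, but \textsf{Minimum} over a contiguous range of machines runs in $O(1)$ rounds regardless of the range's length, so local space stays $O(kn^\delta)$.

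For the second claim, about a single query point $p$, I would skip the all-pairs propagation. Since $U$ and $\qt$ are already available, I would compute only the $O(1)$ sets $N(p,\uVec)$ relevant to $p$ and its ancestors. Then I would select the $k$ nearest points among them, either by one global sort by distance to $p$ or by one \textsf{Minimum}-style selection performed by a single machine holding $O(k)$ candidates. Either way this takes $O(1)$ rounds once the structures have been built.

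I expect the main obstacle to be verifying that the propagation in Lemma~\ref{lem:knn-primeNode} really takes $O(k)$ rounds and not rounds proportional to the depth of $\qt_\uVec$. The argument is that every internal node of $\qt_\uVec$ has a non-empty search set, so each propagation step adds at least one new point to the union. Hence after at most $k$ steps a node's union has size at least $k$, and from then on its parent stops sending to it. I also need to check that the pruning depth $\lambda$ is a constant, so that Lemma~\ref{lem:knn-N} adds only $O(\lambda)+O(k)$ rounds.
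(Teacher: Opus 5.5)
Your all-points argument matches the paper's: run the pipeline of lemmas, emit $(q,p)$ for every $q\in N(p,\uVec)$, sort by $q$, then run $k$ rounds of \textsf{Minimum} on each block $\setM_q$. The single-query part, however, has a genuine gap, because you search the wrong candidate set. As your own correctness paragraph says, $N(p,\uVec)$ is only guaranteed to contain every $q$ that has $p$ among \emph{its} $k$ nearest neighbors, so it is a set of reverse-neighbor candidates. The candidates for the neighbors of $p$ form $C_p=\{x : p\in N(x,\uVec)\text{ for some }\uVec\in U\}$. This set is determined by the sets $N(x,\uVec)$ of other points, not by $p$ and its ancestors, so taking the $k$ points of $\bigcup_{\uVec}N(p,\uVec)$ closest to $p$ can fail outright. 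For instance, place a cluster $K$ of more than $k$ points in a tiny quadtree cell, and put $p$ far away in another child of the root. The pair of $\wspd$ covering $p$ and $K$ is then $\{v_p,w\}$ with $\nodeCell{w}\cap P=K$. Search sets only draw points from paired cells containing at most $k$ points, so every search set on the root path of $v_p$ is empty. Hence $N(p,\uVec)=\emptyset$ for all $\uVec$, even though all $k$ nearest neighbors of $p$ lie in $K$. Those neighbors reach $C_p$ only through the other direction: $p\in S(w,\uVec)$, and the guarantee gives $p\in N(x,\uVec)$ for each true neighbor $x\in K$ of $p$.

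The paper treats the second claim as a query against the output of the all-points computation. It broadcasts $p$ in $O(1)$ rounds, and the machines that already store the computed neighbors of $p$ send them to the designated machine. If you want a query that needs no preprocessing, your ``global sort by distance to $p$'' does run in $O(1)$ rounds, but only if you apply it to all of $P$ rather than to $\bigcup_{\uVec}N(p,\uVec)$. The first $k$ points other than $p$ in that order are then the answer.
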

\begin{proof}

    It is straightforward to compute the $k$ nearest neighbors for all points in $P$ in $O(k)$ rounds, as both sorting and minimum computations can be performed in $O(1)$ rounds.

    We now show that, given a query point $q \in P$, we can return its $k$ nearest neighbors in $O(1)$ rounds.
    As is standard, we assume that there is a designated machine $M$ that receives the query.
    First, $M$ broadcasts $q$ to all machines in $O(1)$ rounds.
    Subsequently, each machine checks whether it stores any of the $k$ nearest neighbors of $q$, and if so, sends them to $M$, which then returns the $k$ nearest neighbors of $q$.
\end{proof}
\bibliography{references}


\end{document}